 \newcommand{\Tr}{\mathop{\mathrm{Tr}}\nolimits}
 \newtheorem{thm}{Theorem}[section]
 \newtheorem{lem}[thm]{Lemma}
 \newtheorem{cor}[thm]{Corollary}
 \newtheorem{prop}[thm]{Proposition}
 \newtheorem{dfn}[thm]{Definition}
 \newtheorem{conj}[thm]{Conjecture}
\theoremstyle{remark}
\newtheorem{rem}[thm]{Remark}
\numberwithin{equation}{section}
\title{\textbf{Spiralling branes, affine $qq$-characters and
    elliptic integrable systems}}
\author{Y.~Zenkevich\thanks{yegor.zenkevich@gmail.com}
  \footnote{On leave from ITMP MSU.}\\
  {\small \textit{School of Mathematics, University of Edinburgh,
      UK}}} \date{}
\begin{document}
\maketitle

\begin{abstract}
  We apply the spiralling branes technique introduced
  in~\cite{Zenkevich:2023cza} to many-body integrable systems. We
  start by giving a new $R$-matrix description of the trigonometric
  Ruijsenaars-Schneider (RS) Hamiltonians and eigenfunctions using the
  intertwiners of quantum toroidal algebra. We then consider elliptic
  deformations of the RS system, elucidate how Shiraishi functions
  appear naturally in the process and relate them to certain special
  infinite system of intertwiners of the algebra. We further show that
  there are two distinguished elliptic deformations, one of which
  leads to the conventional elliptic RS Hamiltonians, while the other
  produces trigonometric Koroteev-Shakirov Hamiltonians. Along the way
  we prove the fully noncommutative version of the ``noncommutative
  Jacobi identities'' for affine $qq$-characters recently introduced
  by Grekov and Nekrasov.
\end{abstract}

\section{Introduction}
\label{sec:introduction}
Elliptic integrable systems are a beautiful and fascinating
subject. So are quantum toroidal algebras. In this paper we will try
to bring the two subjects closer together. We will use the simplest
quantum toroidal algebra
$U_{q_1,q_2}(\widehat{\widehat{\mathfrak{gl}}}_1)$~\cite{DIM} which we
call $\mathcal{A}$ henceforth. We will use the following notation for
the deformation parameters of the algebra:
\begin{equation}
  \label{eq:2}
  q_1 = q, \qquad q_2 = t^{-1}, \qquad q_3 = \frac{t}{q}.
\end{equation}
We always assume that $q_i$ are generic, i.e.\ $q_i \neq 0$ and there
are no $n,m \in \mathbb{Z}$ such that $q_1^n q_2^m = 1$. The
definition of $\mathcal{A}$ is given in Appendix~\ref{sec:quant-toro-algebra}.

Representation theory of $\mathcal{A}$ is very rich but we will use
only two types of representations, the Fock representations and the
vector representations. The relevant definitions are collected in
Appendices~\ref{sec:horiz-fock-repr} and~\ref{sec:vect-repr}
respectively.

To explain the setup leading to elliptic integrable systems let us
introduce diagrams corresponding to intertwiners of Fock and vector
representations. The identity operator in a Fock representation is
denoted by a line of the color encoding the type of the Fock
representation:
\begin{gather}
  1|_{\mathcal{F}^{(1,0)}_{q_1, q_2}(u)}  = \quad \includegraphics[valign=c]{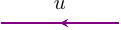},\notag\\
  1|_{\mathcal{F}^{(1,0)}_{q_1, q_3}(u)} = \quad \includegraphics[valign=c]{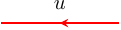},\label{eq:3}\\
  1|_{\mathcal{F}^{(1,0)}_{q_2, q_3}(u)} =
  \quad \includegraphics[valign=c]{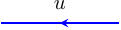}.\notag
\end{gather}
The (absolute value) of the spectral parameter of the representation
$u \in \mathbb{C}^{*}$ should be thought of as the exponentiated
vertical position of the solid line.

Similarly, for vector representation $\mathcal{V}_{q_i}$ we set the
color depending on the $i$ index. Vector representation will appear in
three forms on the pictures (corresponding to slightly different
notions of vector representation as detailed in
Appendix~\ref{sec:horiz-fock-repr}):
\begin{enumerate}
\item No label on a dashed line means an identity operator in
  $\mathcal{V}_{q_i}$:
  \begin{equation}
    \label{eq:138}
   1|_{\mathcal{V}_{q_1}} = \quad \includegraphics[valign=c]{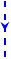}
  \end{equation}

\item A parameter on an intermediate dashed line means a projector on
  the subrepresentation $\mathcal{V}_{q_i}(w)\subset
  \mathcal{V}_{q_i}$.
  \begin{equation}
    \label{eq:139}
    1|_{\mathcal{V}_{q_1}(w)} = \quad \includegraphics[valign=c]{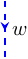}
  \end{equation}
  The (absolute value) of the spectral parameter $w$ should be thought
  of as the exponentiated horizontal position of the dashed line.
      
\item A bra or ket state on the intermediate dashed line means a
  matrix element with or projection on this particular state in
  $\mathcal{V}_{q_i}$.
  \begin{equation}
    \label{eq:140}
        |w\rangle \langle w||_{\mathcal{V}_{q_1}}  = \quad \includegraphics[valign=c]{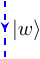}
  \end{equation}
\end{enumerate}
In each of Eqs.~\eqref{eq:138}\eqref{eq:139}\eqref{eq:140} we skip two
more diagrams of other colors representing the other types of vector
representations --- $\mathcal{V}_{q_2}$ and $\mathcal{V}_{q_3}$.

Junctions and crossings of the lines will denote nontrivial
intertwining operators between the corresponding representations.
\begin{enumerate}
\item Junctions between a solid and a dashed line corresponds to a
  vertex operator on the Fock
  representation~\cite{FHHSY},~\cite{Zenkevich:2018fzl}, e.g.\
  \begin{equation}
    \label{eq:146}
   \includegraphics[valign=c]{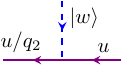}\qquad
   \qquad \text{or} \qquad  \qquad    \includegraphics[valign=c]{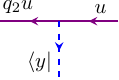}.
  \end{equation}
  Theorems~\ref{prop-phi},~\ref{prop-phi-star}
  provide explicit expressions for such intertwiners.
  
\item Crossings between solid and dashed lines correspond to the
  universal $R$-matrix (see Appendix~\ref{sec:universal-r-matrix-1})
  of the algebra $\mathcal{A}$ evaluated in the representations being
  crossed, e.g.\
  \begin{equation}
    \label{eq:147}
    \includegraphics[valign=c]{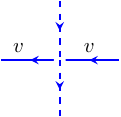} \qquad \qquad
    \text{or} \qquad\qquad \includegraphics[valign=c]{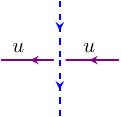}.
  \end{equation}
  In \cite{Haouzi:2024qyo} a particularly simple closed form
  for such $R$-matrices has been found (see
  also~\cite{Gaiotto:2020dsq},~\cite{Zenkevich:2020ufs}). We state
  these results in Theorems~\ref{thm-miura-r},~\ref{thm-r-q} and use
  them throughout the paper.

\end{enumerate}

The crucial ingredient for our representation theory setup is the
notion of compactification. We would like to draw diagrams not only on
the plane but on a cylinder. To do so we introduce the ``portals''
depicted by (multiple) wavy lines. A pair of portals acts as a
teleport: some incoming lines go into one portal and exit through the
other, e.g.
\begin{equation}
  \label{eq:148}
  \includegraphics[valign=c]{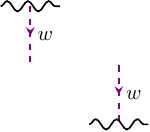}
\end{equation}
In other words one should make identifications on the plane of
the picture along the wavy lines.

Finally one can act on the representations by the grading operator
$p^d \mu^{d_{\perp}}$ (see Eqs.~\eqref{eq:71}--\eqref{eq:72}). We
depict this as follows:
\begin{equation}
  \label{eq:150}
  \includegraphics[valign=c]{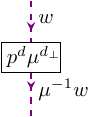}
\end{equation}

It turns out that intertwiner combinations describing the diagrams
which we have introduced above can be used to construct trigonometric
integrable systems such as the trigonometric Ruijsenaars-Schneider
(tRS) system. In fact in sec.~\ref{sec:miura-r-matrix} we show that
both the Hamiltonians and eigenfunctions of the system can be
reproduced in this way. The two parameters of the tRS system usually
denoted $q$ and $t$ in the literature correspond to the parameters
$q_1$ and $q_2^{-1}$ of the algebra $\mathcal{A}$. The spectral
parameters of the vector representations play the role of coordinates
of the tRS particles. Since the spectral parameters live on
$\mathbb{C}^{*}$ the system is trigonometric.

The natural attempt to build elliptic systems using this
correspondence involves compactification of the diagram since one
needs to make periodic identifications on the space $\mathbb{C}^{*}$
on which the spectral parameters of the representations
live. Diagrammatically this means adding portals~\eqref{eq:148} with
shifts~\eqref{eq:150}. The system of lines on a cylinder can then be
either finite (e.g.\ a single line forming a circle) or infinite
(e.g.\ a spiral). The circle diagram in fact produces elliptic
Ruijsenaars-Schneider (eRS) Hamiltonians as proven in
Theorem~\ref{thm-ers}.

Although it seems natural to limit ourselves to the finite
configurations, it turns out that considering infinite spirals is very
fruitful~\cite{Zenkevich:2023cza}. In the current paper we demonstrate
that spiral diagrams are particularly relevant for elliptic integrable
systems. For example, although eRS Hamiltonians are obtained from the
circle diagram as mentioned above, the eigenfunctions cannot be found
in such a simple circular form. Conversely, if elliptic eigenfunctions
are given by a simple circular diagram it turns out that the
Hamiltonians should be extracted from a complicated spiralling
diagrams. We explain this in sec.~\ref{sec:elliptic-rs-system}.

The rest of the paper is organized as follows. In
sec.~\ref{sec:miura-r-matrix} as a warm-up we treat the tRS
system. Using systems of intertwiners of $\mathcal{A}$ we find its
Hamiltonians in sec.~\ref{sec:miura-r-matrix-1} and general
eigenfunctions in sec.~\ref{sec:eigenf-from-intertw}. In
sec.~\ref{sec:defin-shir-funct} we remind the affine screened vertex
operator construction of Shiraishi
functions~\cite{Shir-func,Langmann:2020utd} which are conjectured to
play a role in (non-stationary) eRS model. In
sec.~\ref{sec:affine-vert-oper} we prove that certain spiralling
system of intertwiners reproduces affine screened vertex operators and
the Shiraishi functions. In sec.~\ref{sec:nonc-jacobi-ident} again
using a spiral setup we prove the general version of the
noncommutative Jacobi identity for affine
$qq$-characters~\cite{Grekov:2024ayn}.  After providing some
motivation for the relevance of spiral pictures in
sec.~\ref{sec:ellipt-deform-as} we prove a relation between affine
$qq$-characters and trigonometric limit of Koroteev-Shakirov double
elliptic Hamiltonians in sec.~\ref{sec:korot-shak-syst}. In
sec.~\ref{sec:elliptic-rs-system} we consider the cases when the
spirals degenerate into circles and find that they correspond to eRS
Hamiltonians and the limit of Shiraishi functions considered
in~\cite{Fukuda:2020czf}.

\section{Trigonometric Ruijsenaars-Schneider system revisited}
\label{sec:miura-r-matrix}
In this section we give a new $R$-matrix formulation of the tRS
system. The $R$-matrices featuring in it are those of the quantum
toroidal algebra $\mathcal{A}$ taken in tensor product of Fock and
vector representations. In this case the $R$-matrices turn out to have
an especially simple explicit form. This allows us to build both the
Hamiltonians (sec.~\ref{sec:miura-r-matrix-1}) and the eigenfunctions
(sec.~\ref{sec:eigenf-from-intertw}) of the system in a systematic
way. Similar construction of eigenfunctions has appeared
in~\cite{Nedelin:2017nsb},~\cite{Zenkevich:2018fzl} but here we
provide a streamlined derivation which naturally incorporates all
higher tRS Hamiltonians.

\subsection{tRS Hamiltonians from Miura $R$-matrix}
\label{sec:miura-r-matrix-1}
\begin{thm}[\cite{Haouzi:2024qyo}]
  The universal $R$-matrix of $\mathcal{A}$ evaluated on
  $\mathcal{V}_{q_1}^{*} \otimes \mathcal{F}^{(1,0)}_{q_2, q_3}(v)$ is
  given by
\begin{equation}
  \label{eq:5}
  R_{q_2,q_3}^{q_1}(x|v) \stackrel{\mathrm{def}}{=}  \check{\mathcal{R}}|_{\mathcal{V}^{*}_{q_1} \otimes
    \mathcal{F}_{q_2,q_3}^{(1,0)}(v)}  = \quad \includegraphics[valign=c]{figures/d3-ns5-diff-crop} \quad=   V^{+}_{q_2,q_3}(x) q_1^{\frac{1}{2} x \partial_x} -
  \left(  v/q_1\right)^{-1} V^{-}_{q_2,q_3}(x) q_1^{-\frac{1}{2} x \partial_x} 
\end{equation}
where $\mathcal{V}_{q_1}^{*}$ is the representation on the space of
functions $\mathbb{C}((x))$ defined in Appendix~\ref{sec:vect-repr},
\begin{equation}
  \label{eq:43}
  V^{\pm}_{q_i,q_j}(x) = \exp \left[ \sum_{n \geq 1} \frac{x^{\mp n}}{n}
    (1-q_3^{\mp n}) a^{(q_i, q_j)}_{\pm n} \right], \qquad i \neq j
  \in \{1,2,3\}
\end{equation}
and $a_n^{(q_i,q_j)}$ are Heisenberg generators acting on
$\mathcal{F}_{q_2, q_3}^{(1,0)}(v)$ satisfying the
relations~\eqref{eq:44}.
\label{thm-miura-r}
\end{thm}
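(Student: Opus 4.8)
The plan is to compute the universal $R$-matrix directly in the given pair of representations, using the factorized form recalled in Appendix~\ref{sec:universal-r-matrix-1}. The universal $\check{\mathcal{R}}$ is a product of a Cartan (Heisenberg) factor and contributions built from the positive and negative Drinfeld currents $e(z),f(z)$; the strategy is to evaluate each factor on $\mathcal{V}^{*}_{q_1}\otimes\mathcal{F}^{(1,0)}_{q_2,q_3}(v)$ and show that the infinite products collapse to the two-term operator in~\eqref{eq:5}.

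First I would evaluate the Cartan part. The horizontal Heisenberg generators $a^{(q_2,q_3)}_n$ act on the Fock factor $\mathcal{F}^{(1,0)}_{q_2,q_3}(v)$, while in the functional realization $\mathbb{C}((x))$ of $\mathcal{V}^{*}_{q_1}$ the paired Cartan currents act by scalar multiplication by powers of $x$ (Appendix~\ref{sec:vect-repr}). Exponentiating the resulting bilinear pairing produces exactly the normal-ordered vertex operators $V^{\pm}_{q_2,q_3}(x)$ of~\eqref{eq:43}; the kernel $(1-q_3^{\mp n})$ in the exponent is precisely the matrix of the Cartan pairing between the $q_1$ vector representation and the $(q_2,q_3)$ Fock representation.

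Second, and this is the heart of the computation, I would evaluate the $e$- and $f$-current parts of $\check{\mathcal{R}}$. The key simplification is that in the vector representation $\mathcal{V}^{*}_{q_1}$ a single application of $e(z)$ (resp.\ $f(z)$) shifts the spectral parameter by one step, so that on the functional realization it acts as the single shift $q_1^{\pm\frac12 x\partial_x}$ while higher powers of the currents decouple. Consequently the infinite product over Drinfeld roots truncates after its first nontrivial term, yielding the two contributions $V^{+}_{q_2,q_3}(x)\,q_1^{\frac12 x\partial_x}$ and $V^{-}_{q_2,q_3}(x)\,q_1^{-\frac12 x\partial_x}$. Tracking the spectral parameter $v$ through the grading factor $p^{d}\mu^{d_{\perp}}$ and the normalization of the universal $R$-matrix then fixes the relative coefficient $-(v/q_1)^{-1}$ between the two terms.

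The main obstacle is controlling this truncation rigorously: one must show that products of several currents either vanish or reassemble into the normal ordering already accounted for in $V^{\pm}_{q_2,q_3}(x)$, and keep the sign and spectral-parameter prefactor consistent throughout. An equivalent and arguably cleaner route, which I would use as a cross-check, is to bypass the universal formula entirely and verify that the right-hand side of~\eqref{eq:5} satisfies the defining intertwining relation $\check{\mathcal{R}}\,(\rho\otimes\sigma)\Delta(g)=(\sigma\otimes\rho)\Delta'(g)\,\check{\mathcal{R}}$ for all generators $g$. On the bosonized operators the Cartan relations hold automatically from the vertex-operator structure, and the relations for $e(z)$ and $f(z)$ reduce to explicit functional identities on $\mathbb{C}((x))\otimes\mathcal{F}^{(1,0)}_{q_2,q_3}(v)$ whose two-term form matches~\eqref{eq:5}; since the universal $R$-matrix is uniquely determined up to an overall scalar in each graded component, matching one normalization completes the proof.
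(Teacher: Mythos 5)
The first thing to note is that the paper contains no proof of Theorem~\ref{thm-miura-r} at all: the statement is imported verbatim from \cite{Haouzi:2024qyo} (see also \cite{Gaiotto:2020dsq}), so your proposal can only be judged against that reference and on its own internal logic. Judged that way, your ``cross-check'' is in fact the proof: one verifies that the two-term operator on the right-hand side of \eqref{eq:5} satisfies the defining relation \eqref{eq:142} for each generator $x^{\pm}(z)$, $\psi^{\pm}(z)$ evaluated on $\mathcal{V}^{*}_{q_1}\otimes\mathcal{F}^{(1,0)}_{q_2,q_3}(v)$ (these reduce to explicit vertex-operator and delta-function identities), argues that the space of such intertwiners is one-dimensional for generic parameters, and fixes the remaining scalar by a single matrix element, e.g.\ the vacuum--vacuum one. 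That is essentially the argument of the cited work; had you led with it, the proposal would be sound.

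Your primary route --- direct evaluation of the factorized universal $R$-matrix \eqref{eq:7} --- has two genuine gaps. First, the Cartan factor cannot produce both $V^{\pm}_{q_2,q_3}(x)$: it has the form $\exp\left[-\sum_{n\geq1}\frac{n}{\kappa_n}\gamma^{-n/2}H^{+}_n\otimes\gamma^{n/2}H^{-}_{-n}\right]$, and on the pair at hand the modes of $\psi^{\pm}$ coming from the vector side act on $\mathbb{C}((x))$ as multiplication by multiples of $x^{\pm n}$ (read off from \eqref{eq:6}), while on the Fock side they act as multiples of a \emph{single} chirality of the Heisenberg generators $a^{(q_2,q_3)}_{n}$ (read off from \eqref{eq:145}). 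Exponentiating a pairing of multiplication operators with one chirality yields a one-sided vertex operator; it can never produce the sum of the $a_n$-type and $a_{-n}$-type exponentials $V^{+}$ and $V^{-}$, so the other half of \eqref{eq:5}, together with the two-term difference-operator structure and the coefficient $-(v/q_1)^{-1}$, must come entirely from the current tail. Second, the truncation claim is false as stated: current modes do not decouple in $\mathcal{V}^{*}_{q_1}$, since by \eqref{eq:6} each $x^{+}_n$ acts as a nonzero multiple of $(x/q_1)^{n}\,q_1^{-x\partial_x}$, so products of $k$ modes act as nonzero monomials times $q_1^{-kx\partial_x}$ --- this is not analogous to $e^2=0$ in the two-dimensional module of $U_q(\widehat{\mathfrak{sl}}_2)$. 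What actually has to be proven is that after evaluation the total contribution at each order $v^{-k}$ with $k\geq 2$ cancels (every insertion of a negative current mode on $\mathcal{F}^{(1,0)}_{q_2,q_3}(v)$ carries one power of $v^{-1}$ by \eqref{eq:145}, while the right-hand side of \eqref{eq:5} stops at $v^{-1}$); since the higher terms of \eqref{eq:7} involve root vectors that the paper does not even write down (the ellipsis), this cancellation is the entire difficulty, not a technical ``obstacle'' to be controlled at the end. This is precisely why the literature establishes the theorem by the intertwining-plus-uniqueness argument rather than by expanding the universal formula.
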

\begin{rem}
  We have used the version of the vector representation
  $\mathcal{V}_{q_1}^{*}$ that is dual to the representation
  conventionally used in the literature (see
  e.g.~\cite{Zenkevich:2018fzl, Zenkevich:2020ufs}). In our convention
  this means that the operator $R(x|v)$ acts on a wavefunction that
  corresponds to a diagram sitting \emph{above} the crossing in the
  picture.
\end{rem}

Using Theorem~\ref{thm-miura-r} it is easy to find the $R$-matrix
acting on $(\mathcal{V}_{q_1}^{*})^{\otimes N} \otimes
\mathcal{F}_{q_2,q_3}^{(1,0)}(v)$: it is given by the composition of
$N$ $R$-matrices~\eqref{eq:5} along the horizontal line. The result is
a difference operator in $x_i$ $i=1,\ldots,N$ which is also a vertex
operator acting on $\mathcal{F}_{q_2,q_3}^{(1,0)}(v)$.
\begin{dfn}
  Let $\mathcal{O}(\vec{x}|v): (\mathcal{V}_{q_1}^{*})^{\otimes N} \to
  (\mathcal{V}_{q_1}^{*})^{\otimes N}$ be matrix element of the
  $R$-matrix acting on $(\mathcal{V}_{q_1}^{*})^{\otimes N} \otimes
  \mathcal{F}_{q_2,q_3}^{(1,0)}(v)$ between the vacuum states in the
  Fock representation:
  \begin{multline}
    \label{eq:46}
    \mathcal{O}_N(\vec{x}|v) = \langle \varnothing,
    v|\check{\mathcal{R}}|_{(\mathcal{V}^{*}_{q_1})^{\otimes N} \otimes
      \mathcal{F}_{q_2,q_3}^{(1,0)}(v)}| \varnothing, v\rangle =\\
    =\langle \varnothing, v| \prod_{i=1}^{
      \begin{smallmatrix}
        \to\\
        N
      \end{smallmatrix}
    } R_{q_2,q_3}^{q_1}(x_i|v) | \varnothing, v\rangle = \quad \includegraphics[valign=c]{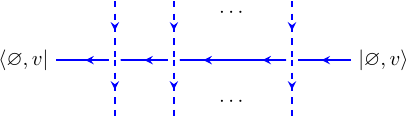}
  \end{multline}
  where $x_i$ is understood as the variable on which the functions in
  the $i$-th $\mathcal{V}_{q_1}^{*}$ tensor factor depend. We call
  $\mathcal{O}(\vec{x}|v)$ the transfer matrix.
\end{dfn}
\begin{dfn}
  Trigonometric Ruijsenaars-Schneider Hamiltonians\footnote{These are
    also known as Macdonald difference operators.} are the following
  difference operators
  \begin{equation}
    \label{eq:47}
    H_k^{\mathrm{tRS},(N)}(\vec{x},q_1,q_2) = q_2^{-\frac{k(k-1)}{2}}\sum_{
      \begin{smallmatrix}
        I \subseteq \{1,\ldots, N \}\\
        |I|=k
      \end{smallmatrix}
    } \prod_{i \in I} \prod_{j \not \in I} \frac{q_2^{-1} x_i -
      x_j}{x_i - x_j} \prod_{i \in I} q_1^{x_i \partial_{x_i}}, \qquad k=0,\ldots,N.
  \end{equation}
\end{dfn}
Trigonometric RS Hamiltonians commute~\cite{Macdonald} and their
common polynomial eigenfunctions are given by Macdonald symmetric
polynomials $P_{\lambda}(\vec{x})$:
\begin{equation}
  \label{eq:48}
  H_k^{\mathrm{tRS},(N)}(\vec{x},q_1,q_2)
  P^{(q_1,q_2)}_{\lambda}(\vec{x}) = e_k(\vec{u}) P^{(q_1,q_2)}_{\lambda}(\vec{x}),
\end{equation}
where $u_i = q_1^{\lambda_i} q_2^{i-N}$ for $i=1,\ldots,N$, $e_k$ are
elementary symmetric functions and $\lambda = (\lambda_1, \lambda_2,
\ldots, \lambda_N)$ is a partition (Young diagram). In what follows we
will mostly be interested in the eigenfunctions for generic parameters
$u_i$; these are non-polynomial.

We notice that $\mathcal{O}_N(\vec{x}|v)$ is the generating function of
tRS Hamiltonians:
\begin{thm}
\begin{equation}
  \label{eq:45}
  (F_{q_1,q_2}(\vec{x}))^{-1} \mathcal{O}_N(\vec{x}|v)
  F_{q_1,q_2}(\vec{x}) = q_2^{\frac{N(N-1)}{2}} \sum_{k=0}^N
  (-v/q_1)^{k-N}  H^{\mathrm{tRS},(N)}_k(\vec{x},q_1,q_2) q_1^{-\frac{1}{2} \sum_{i=1}^N x_i \partial_{x_i}},
\end{equation}
where
\begin{equation}
  \label{eq:49}
  F_{q_1,q_2}(\vec{x}) = \prod_{k<l}^N \left[ x_k^{- \frac{\ln
        q_2}{\ln q_1}} \frac{\left( \frac{x_l}{x_k} ; q_1
      \right)_{\infty}}{\left( q_2^{-1} \frac{x_l}{x_k} ; q_1 \right)_{\infty}} \right].
\end{equation}
\label{thm-tRS-ham}
\end{thm}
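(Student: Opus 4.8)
The plan is to expand the defining product \eqref{eq:46}, evaluate the vacuum matrix element by free-field Wick contraction, and then absorb the resulting \emph{asymmetric} rational prefactors into the symmetric tRS weights \eqref{eq:47} via the similarity transformation by $F_{q_1,q_2}$. First I would expand the ordered product $\prod_{i=1}^{\to N} R_{q_2,q_3}^{q_1}(x_i|v)$ of \eqref{eq:5} into $2^N$ terms indexed by the subset $J\subseteq\{1,\dots,N\}$ of factors for which one keeps the first ($V^{+}$) summand, the complement $J^{c}$ collecting the second ($V^{-}$) summand. Since the shift $q_1^{\pm\frac12 x_i\partial_{x_i}}$ inside the $i$-th factor acts only on $x_i$, it commutes with every vertex operator $V^{\pm}_{q_2,q_3}(x_j)$ with $j\neq i$; hence in each term all shifts collect to the right, yielding $\bigl(\prod_{i\in J}V^{+}_{q_2,q_3}(x_i)\prod_{j\in J^{c}}V^{-}_{q_2,q_3}(x_j)\bigr)D_J$ times the scalar $(-q_1/v)^{|J^{c}|}$, where $D_J=\prod_{i\in J}q_1^{\frac12 x_i\partial_{x_i}}\prod_{j\in J^{c}}q_1^{-\frac12 x_j\partial_{x_j}}$.

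Next I would compute the vacuum expectation value $\langle\varnothing,v|\cdots|\varnothing,v\rangle$, which acts only on the Fock factor. Treating $a^{(q_2,q_3)}_{+n}$ as annihilation and $a^{(q_2,q_3)}_{-n}$ as creation operators, the only nontrivial normal-ordering contractions are those moving a $V^{+}(x_i)$ past a $V^{-}(x_j)$ with $i<j$, $i\in J$, $j\in J^{c}$; like-sign contractions and $V^{-}(x_i)V^{+}(x_j)$ with $i<j$ are trivial. Using the Heisenberg relations \eqref{eq:44} together with the elementary identity $\exp\bigl[\sum_{n\ge1}\tfrac{z^n}{n}(1-a^n)\bigr]=\tfrac{1-az}{1-z}$, each such contraction contributes the rational factor $\tfrac{x_i-q_2^{-1}x_j}{x_i-x_j}$. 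This gives $\mathcal{O}_N(\vec x|v)=\sum_J(-q_1/v)^{|J^{c}|}\,g_J(\vec x)\,D_J$ with $g_J(\vec x)=\prod_{\,i<j,\ i\in J,\ j\in J^{c}}\tfrac{x_i-q_2^{-1}x_j}{x_i-x_j}$, an ordered half-product which is not yet symmetric in the way \eqref{eq:47} requires.

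Finally, the conjugation by $F_{q_1,q_2}$ is exactly what symmetrizes $g_J$. From $D_J\,F(\vec x)=F(D_J\vec x)\,D_J$, where $D_J\vec x$ denotes the rescaled arguments $x_i\mapsto q_1^{1/2}x_i$ for $i\in J$ and $x_j\mapsto q_1^{-1/2}x_j$ for $j\in J^{c}$, the transformed coefficient of $D_J$ becomes $g_J(\vec x)\,F(D_J\vec x)/F(\vec x)$. For a pair $k<l$ on the same side of $J$ the ratio $x_l/x_k$ is unchanged and contributes $1$, whereas for $k,l$ on opposite sides it is rescaled by $q_1^{\mp1}$; the $q_1$-Pochhammer factors of \eqref{eq:49} then telescope through $(q_1 z;q_1)_\infty/(z;q_1)_\infty=(1-z)^{-1}$ into single rational factors, supplying the missing weights for the opposite-side pairs and converting $\tfrac{x_i-q_2^{-1}x_j}{x_i-x_j}$ into $\tfrac{q_2^{-1}x_i-x_j}{x_i-x_j}$, while the prefactors $x_k^{-\ln q_2/\ln q_1}$ evaluate to powers of $q_2^{\pm1/2}$. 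Collecting everything, $g_J(\vec x)\,F(D_J\vec x)/F(\vec x)$ becomes $q_2^{N(N-1)/2}\,q_2^{-|J|(|J|-1)/2}\prod_{i\in J}\prod_{j\in J^{c}}\tfrac{q_2^{-1}x_i-x_j}{x_i-x_j}$, and with $k=|J|$, $(-q_1/v)^{N-k}=(-v/q_1)^{k-N}$, and $D_J=\bigl(\prod_{i\in J}q_1^{x_i\partial_{x_i}}\bigr)q_1^{-\frac12\sum_i x_i\partial_{x_i}}$, the sum over $J$ reproduces \eqref{eq:45}.

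The main obstacle is this last symmetrization step. One must verify that for every subset $J$ the telescoping of the infinite $q_1$-Pochhammer products in $F(D_J\vec x)/F(\vec x)$ precisely turns the asymmetric weight $g_J$ into the symmetric tRS weight, and — more delicately — that the monomial prefactors $x_k^{-\ln q_2/\ln q_1}$ together with the $q_2^{-1}$ appearing inside the Pochhammer denominators reassemble into the stated $J$-independent factor $q_2^{N(N-1)/2}$ and the $J$-dependent factor $q_2^{-|J|(|J|-1)/2}$. Tracking the half-integer shifts $q_1^{\pm1/2}$ and distinguishing same-side from opposite-side pairs is where the bookkeeping demands care; the free-field contraction itself is routine by comparison.
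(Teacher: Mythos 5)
Your overall strategy coincides with the paper's: expand the ordered product of Miura $R$-matrices into $2^N$ terms labelled by subsets, collect the shift operators (which indeed commute with the vertex operators in the other variables), evaluate the vacuum matrix element by normal ordering, and finally conjugate by $F_{q_1,q_2}$. However, your central computational step --- the Wick contraction --- is wrong. From Eq.~\eqref{eq:43} and the Heisenberg relations~\eqref{eq:44}, moving $V^{+}_{q_2,q_3}(x_i)$ to the right past $V^{-}_{q_2,q_3}(x_j)$ produces
\begin{equation*}
\exp\Bigl[\sum_{n\ge 1}\tfrac{1}{n}\bigl(\tfrac{x_j}{x_i}\bigr)^{n}(1-q_3^{-n})(1-q_3^{n})\,\tfrac{1-q_2^{n}}{1-q_3^{-n}}\Bigr]
=\exp\Bigl[\sum_{n\ge 1}\tfrac{1}{n}\bigl(\tfrac{x_j}{x_i}\bigr)^{n}(1-q_2^{n})(1-q_3^{n})\Bigr]
=\frac{\bigl(1-q_2\tfrac{x_j}{x_i}\bigr)\bigl(1-q_3\tfrac{x_j}{x_i}\bigr)}{\bigl(1-\tfrac{x_j}{x_i}\bigr)\bigl(1-q_1^{-1}\tfrac{x_j}{x_i}\bigr)}
=\psi_{q_2,q_3}\!\left(\tfrac{x_j}{x_i}\right),
\end{equation*}
i.e.\ the four-factor function of Eqs.~\eqref{eq:50}--\eqref{eq:51}, not the two-factor ratio $\tfrac{x_i-q_2^{-1}x_j}{x_i-x_j}$ you claim. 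The exponent involves the product $(1-q_2^{n})(1-q_3^{n})$, which has four terms, so your ``elementary identity'' $\exp\bigl[\sum_{n\ge 1}\tfrac{z^n}{n}(1-a^n)\bigr]=\tfrac{1-az}{1-z}$ simply does not apply here; it would apply only if the $(1-q_3^{\mp n})$ prefactors in $V^{\pm}$ cancelled completely against the commutator, which they do not.

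This error is not cosmetic, because it changes what the conjugation by $F_{q_1,q_2}$ must accomplish. In the correct computation (the paper's Eqs.~\eqref{eq:54}--\eqref{eq:55}) the ratio $F(D_J\vec{x})/F(\vec{x})$ telescopes, for opposite-side pairs, to factors of the type $\tfrac{1-q_1^{-1}x_i/x_j}{1-(q_1q_2)^{-1}x_i/x_j}$, and these $q_1$- and $(q_1q_2)$-dependent pieces are precisely what cancels the factors $(1-q_3\,x_j/x_i)$ and $(1-q_1^{-1}x_j/x_i)$ of $\psi_{q_2,q_3}$ (recall $q_3=(q_1q_2)^{-1}$), leaving the two-factor tRS weight. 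With your two-factor $g_J$ these $q_1$-dependent factors from $F$ survive uncancelled: already for $N=2$, $J=\{1\}$, your assembly yields
$q_2^{-1/2}\,\tfrac{(1-q_2^{-1}x_2/x_1)(1-q_1^{-1}x_2/x_1)}{(1-x_2/x_1)(1-(q_1q_2)^{-1}x_2/x_1)}$,
which is not proportional to the required weight $\tfrac{q_2^{-1}x_1-x_2}{x_1-x_2}$. So the verification you defer to the end (``that the telescoping precisely turns $g_J$ into the symmetric tRS weight'') would in fact fail, and no bookkeeping of half-integer shifts can repair it. The fix is to replace your contraction by $\psi_{q_2,q_3}(x_j/x_i)$; after that your plan becomes exactly the paper's proof, with the conjugation by $F_{q_1,q_2}$ doing double duty: cancelling the $q_3$- and $q_1^{-1}$-dependent factors and extending the half-ordered product over pairs $i<j$ to the full product over $i\in I$, $j\notin I$.
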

\begin{rem}
  The overall shift $q_1^{-\frac{1}{2} \sum_{i=1}^N
    x_i \partial_{x_i}}$ commutes with all
  $H_k^{\mathrm{tRS},(N)}(\vec{x},q_1,q_2)$.
\end{rem}
\begin{proof}
  Let us first evaluate $\mathcal{O}_N(\vec{x}|v)$. Expanding the
  product we get
  \begin{multline}
    \label{eq:52}
    \mathcal{O}_N(\vec{x}|v) = \langle \varnothing, v|\prod_{i=1}^{
      \begin{smallmatrix}
        \to\\
        N
      \end{smallmatrix}
    } \left( V^{+}_{q_2,q_3}(x_i) q_1^{\frac{1}{2} x_i \partial_{x_i}} - \left(
        v/q_1\right)^{-1} V^{-}_{q_2,q_3}(x_i) q_1^{-\frac{1}{2}
        x_i \partial_{x_i}} \right) | \varnothing, v\rangle =\\
    = \sum_{k=0}^N \sum_{ 1 \leq i_1 < \ldots < i_k \leq N} \left( -
      v/q_1\right)^{k-N} \langle \varnothing, v| V^{-}_{q_2,q_3}(x_1) \cdots
    V^{-}_{q_2,q_3}(x_{i_1 - 1}) V^{+}_{q_2,q_3}(x_{i_1}) V^{-}_{q_2,q_3}(x_{i_1 - 1}) \cdots\\
    \cdots V^{-}_{q_2,q_3}(x_{i_2 - 1}) V^{+}_{q_2,q_3}(x_{i_2})
    V^{-}_{q_2,q_3}(x_{i_2 - 1})\cdots V^{-}_{q_2,q_3}(x_{i_k - 1})
    V^{+}_{q_2,q_3}(x_{i_k}) V^{-}_{q_2,q_3}(x_{i_k - 1})\cdots\\
    \cdots
    V^{-}_{q_2,q_3}(x_N) | \varnothing, v\rangle q_1^{\sum_{j=1}^k
      x_{i_j} \partial_{x_{i_j}} - \frac{1}{2} \sum_{i=1}^N
      x_i \partial_{x_i}}.
  \end{multline}
  Next we use the commutation relation between $V^{+}_{q_2,q_3}(x)$ and
  $V^{-}_{q_2,q_3}(y)$ to move all $V^{+}_{q_2,q_3}$ to the right of all $V^{-}_{q_2,q_3}$ so that
  eventually both act trivially on the vacuum vectors of
  $\mathcal{F}_{q_2,q_3}^{(1,0)}(v)$:
  \begin{align}
    \label{eq:53}
    V^{+}_{q_2,q_3}(z) |\varnothing, v\rangle &= |\varnothing, v\rangle,\\
    \langle\varnothing, v| V^{-}_{q_2,q_3}(z) &= \langle\varnothing, v|.
  \end{align}
  We have
  \begin{equation}
    \label{eq:50}
    V^{+}_{q_i,q_j}(x)V^{-}_{q_i,q_j}(y) = \psi_{q_i,q_j} \left(
      \frac{y}{x} \right)V^{-}_{q_i,q_j}(y) V^{+}_{q_i,q_j}(x), , \qquad i \neq j \in \{1,2,3\},
  \end{equation}
  where
  \begin{equation}
    \label{eq:51}
    \psi_{q_i,q_j}(z) = \frac{(1 - q_i z)(1 - q_j z)}{(1 - z)(1 - q_i
      q_j z)}, \qquad i \neq j \in \{1,2,3\}.
  \end{equation}
  Plugging commutation relation~\eqref{eq:50} into Eq.~\eqref{eq:52}
  we get
  \begin{equation}
    \label{eq:54}
    \mathcal{O}_N(\vec{x}|v) = \sum_{k=0}^N  \left( -
      v/q_1\right)^{k-N} \sum_{
      \begin{smallmatrix}
        I \subseteq \{1,\ldots,N\}\\
        |I| = k
      \end{smallmatrix}
} \prod_{i \in I} \prod_{
      \begin{smallmatrix}
        j \not \in I\\
        j>i
      \end{smallmatrix}
} \psi_{q_2,q_3} \left( \frac{x_j}{x_i} \right)  \prod_{i\in I} q_1^{
      x_i \partial_{x_i}} q_1^{- \frac{1}{2} \sum_{j=1}^N
      x_j \partial_{x_j}}
  \end{equation}
  Finally we perform the conjugation with the function
  $F_{q_1,q_2}(\vec{x})$. An explicit computation gives
  \begin{multline}
    \label{eq:55}
    (F_{q_1,q_2}(\vec{x}))^{-1} \prod_{i\in I} q_1^{
      x_i \partial_{x_i}} q_1^{- \frac{1}{2} \sum_{j=1}^N
      x_j \partial_{x_j}}
    F_{q_1,q_2}(\vec{x})=\\
    = q_2^{\frac{N(N-1)}{2}} \prod_{i \in I} \left[
 \prod_{ \begin{smallmatrix}
        j \not \in I\\
        j<i
      \end{smallmatrix}} \frac{1 - q_2^{-1} \frac{x_i}{x_j}}{1 - \frac{x_i}{x_j}}
    \prod_{ \begin{smallmatrix}
        j \not \in I\\
        j>i
      \end{smallmatrix}} q_2^{-1} \frac{1 - q_1^{-1}
      \frac{x_i}{x_j}}{1 - (q_1 q_2)^{-1} \frac{x_i}{x_j}} \prod_{ \begin{smallmatrix}
        j \in I\\
        j<i
      \end{smallmatrix}} \frac{1 - (q_1 q_2)^{-1} \frac{x_j}{x_i}}{1 -
      q_1^{-1} \frac{x_j}{x_i}} \prod_{ \begin{smallmatrix}
        j \in I\\
        j>i
      \end{smallmatrix}} q_2^{-1} \frac{1 - \frac{x_j}{x_i}}{1 -
      q_2^{-1} \frac{x_j}{x_i}} \right]\times\\
  \times \prod_{i\in I} q_1^{
    x_i \partial_{x_i}} q_1^{- \frac{1}{2} \sum_{j=1}^N
    x_j \partial_{x_j}}
  \end{multline}
  Substituting Eq.~\eqref{eq:55} into Eq.~\eqref{eq:54} and cancelling
  the factors we obtain the statement of the theorem.
\end{proof}

It is instructive to see how the commutativity of
$H_k^{\mathrm{tRS},(N)}$ follows from the Yang-Baxter equation for the
universal $R$-matrix of the algebra $\mathcal{A}$. To see this we will
need the action of the universal $R$-matrix on the tensor product of
vacuum vectors in a pair of Fock representations.
\begin{prop}[\cite{FJMM-R}]
  Let $\check{\mathcal{R}}$ be the universal $R$-matrix of
  $\mathcal{A}$ (see Appendix~\ref{sec:universal-r-matrix-1}), then
  \begin{align}
    \label{eq:56}
    \check{\mathcal{R}}|_{\mathcal{F}_{q_2,q_3}^{(1,0)}(u) \otimes
      \mathcal{F}_{q_2,q_3}^{(1,0)}(v)} | \varnothing, u\rangle
  \otimes | \varnothing, v\rangle &= f_{q_2,q_3}^{q_2,q_3} \left( \frac{u}{v}  \right)| \varnothing, v\rangle
  \otimes | \varnothing, u\rangle,\\
      \check{\mathcal{R}}|_{\mathcal{F}_{q_1,q_2}^{(1,0)}(u) \otimes
      \mathcal{F}_{q_2,q_3}^{(1,0)}(v)} | \varnothing, u\rangle
  \otimes | \varnothing, v\rangle &= f_{q_2,q_3}^{q_1,q_2} \left( \frac{u}{v}  \right) | \varnothing, v\rangle
  \otimes | \varnothing, u\rangle.\label{eq:58}
  \end{align}
  where
  \begin{align}
    \label{eq:60}
   f_{q_2,q_3}^{q_2,q_3} (z) &=\exp \left[ - \sum_{n \geq 1}
    \frac{(1 - q_1^{-n})}{(1-q_2^n)(1-q_3^n)} z^n \right],\\
 f_{q_2,q_3}^{q_1,q_2}(z) &= \exp \left[ \sum_{n \geq 1}
    \frac{q_1^{-n}}{(1-q_2^n)} z^n \right].\label{eq:61}
\end{align}
\label{prop-R-hor-vac}
\end{prop}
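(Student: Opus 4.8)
The plan is to combine two facts about the horizontal Fock modules: that their vacuum vectors are highest weight vectors for the Drinfeld currents of $\mathcal{A}$, and that the universal $R$-matrix factorizes through a Cartan (Heisenberg) part and off-diagonal parts built from the raising/lowering currents, as recorded in Appendix~\ref{sec:universal-r-matrix-1}. Writing $\check{\mathcal{R}} = P\,\mathcal{R}$ with $P$ the flip of tensor factors and $\mathcal{R} = \mathcal{R}_{0}\,\mathcal{R}_{\mathrm{off}}$, I would first recall that $|\varnothing,u\rangle$ is annihilated by the positive Drinfeld modes and is an eigenvector of the Cartan currents $\psi^{\pm}(z)$ with an explicit $u$-dependent rational eigenvalue, and dually for $\langle\varnothing,u|$. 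This immediately forces $\check{\mathcal{R}}\,|\varnothing,u\rangle\otimes|\varnothing,v\rangle$ to be proportional to the swapped vacuum $|\varnothing,v\rangle\otimes|\varnothing,u\rangle$; the content of the proposition is the value of the proportionality scalar.

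Next I would isolate this scalar as $f = \langle\varnothing,u|\otimes\langle\varnothing,v|\,\mathcal{R}\,|\varnothing,u\rangle\otimes|\varnothing,v\rangle$ (pairing the swapped output against the dual vacua undoes $P$, turning the computation into an ordinary vacuum expectation value in each tensor factor separately). The off-diagonal part $\mathcal{R}_{\mathrm{off}}$, being a product of exponentials of raising currents in the first factor tensored with lowering currents in the second, contributes only its constant term under the vacuum expectation value, so $f$ equals the vacuum expectation value of the Cartan part $\mathcal{R}_{0}$. Using the explicit exponential form of $\mathcal{R}_{0}$ together with the Heisenberg relations~\eqref{eq:44} and the eigenvalues of $\psi^{\pm}(z)$ on the two vacua, the bilinear in the Cartan currents is replaced by a scalar power series in $u/v$, producing the claimed exponential.

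It then remains to evaluate this power series in the two cases. For the symmetric case $\mathcal{F}^{(1,0)}_{q_2,q_3}(u)\otimes\mathcal{F}^{(1,0)}_{q_2,q_3}(v)$ the two Cartan eigenvalues coincide and their product collapses, after using $q_1q_2q_3=1$, to the coefficient $\tfrac{1-q_1^{-n}}{(1-q_2^n)(1-q_3^n)}$ of~\eqref{eq:60}; for the mixed case $\mathcal{F}^{(1,0)}_{q_1,q_2}(u)\otimes\mathcal{F}^{(1,0)}_{q_2,q_3}(v)$ the two modules carry differently normalized Heisenberg algebras, and the cross-contraction of their Cartan data collapses instead to $\tfrac{q_1^{-n}}{1-q_2^n}$, giving~\eqref{eq:61}. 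I expect the main obstacle to be exactly this coefficient matching: one must track carefully the normalization of the generators $a^{(q_i,q_j)}_n$ in each factor and the precise Cartan-current eigenvalues, since it is the mismatch between the two differently normalized Heisenberg algebras that produces the asymmetry between~\eqref{eq:60} and~\eqref{eq:61}. As an independent check one can verify that the resulting scalars are compatible with the Yang--Baxter and unitarity relations and, in the mixed case, with the vector/Fock $R$-matrix of Theorem~\ref{thm-miura-r} under fusion.
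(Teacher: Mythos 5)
Your proposal cannot be checked against an internal argument, because the paper does not prove Proposition~\ref{prop-R-hor-vac} at all: it quotes the result from~\cite{FJMM-R}. Judged on its own merits, the proposal has a fatal flaw, rooted in a confusion about which Fock modules appear here. Both factors are \emph{horizontal} Fock modules $\mathcal{F}^{(1,0)}_{q_i,q_j}(u)$, defined in Appendix~\ref{sec:horiz-fock-repr}, and these are not highest-weight modules for the Drinfeld currents: by Eq.~\eqref{eq:145} the currents $x^{\pm}(z)$ act as free-boson vertex operators, so the vacuum is killed by the positive Heisenberg modes $a_n$ but \emph{not} by the current modes (for instance $x^{+}_0|\varnothing,u\rangle=\frac{u}{(1-q_i^{-1})(1-q_j^{-1})}|\varnothing,u\rangle$ and $x^{+}_{-n}|\varnothing,u\rangle\neq 0$ for $n>0$). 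Moreover the vacuum is not an eigenvector of $\psi^{-}(z)$ at all, and its $\psi^{+}(z)$-eigenvalue is $1$, not a $u$-dependent rational function; the rational $u$-dependent Cartan eigenvalues you invoke belong to the vertical (lowest-weight) modules, which are not the ones in this statement. The proportionality $\check{\mathcal{R}}\,(|\varnothing,u\rangle\otimes|\varnothing,v\rangle)\propto|\varnothing,v\rangle\otimes|\varnothing,u\rangle$ is still true, but the correct reason is a grading argument: every factor of $\mathcal{R}$ in Eq.~\eqref{eq:7} preserves the total $d_{\perp}$-degree, and the degree-zero subspace of the tensor product of two Fock modules is one-dimensional, spanned by the product of vacua.

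The second error is in the evaluation of the scalar, and it runs exactly opposite to the truth. In the factorization~\eqref{eq:7} the Cartan factor is $\exp\bigl[-\sum_{n\geq 1}\frac{n}{\kappa_n}\gamma^{-n/2}H^{+}_n\otimes\gamma^{n/2}H^{-}_{-n}\bigr]$, and on a horizontal Fock module $H^{+}_n$ is proportional to $a_n$ (read off from $\psi^{+}(z)$ in Eq.~\eqref{eq:145}); hence this factor acts as the \emph{identity} on $|\varnothing,u\rangle\otimes|\varnothing,v\rangle$, as does the zero-mode prefactor (here $\psi^{\pm}_0=1$, so $c=0$, and the vacua have $d_{\perp}$-eigenvalue zero). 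All of the $u/v$-dependence comes from the off-diagonal tail that you discard. Concretely, writing $\langle x^{\pm}_0\rangle_u$ for the vacuum expectation value in $\mathcal{F}^{(1,0)}_{q_2,q_3}(u)$, the term $\kappa_1\sum_{n}x^{+}_n\otimes x^{-}_{-n}$ alone contributes
\begin{equation*}
  \kappa_1\,\langle x^{+}_0\rangle_u\,\langle x^{-}_0\rangle_v
  = \frac{(1-q_1)(1-q_2)(1-q_3)}{(1-q_2^{-1})(1-q_3^{-1})(1-q_2)(1-q_3)}\,\frac{u}{v}
  = -\frac{1-q_1^{-1}}{(1-q_2)(1-q_3)}\,\frac{u}{v},
\end{equation*}
where we used $q_2^{-1}q_3^{-1}=q_1$; this is precisely the full order-$z$ coefficient of~\eqref{eq:60}. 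Your prescription, by contrast, would return $f\equiv 1$. Worse, the higher off-diagonal terms of the universal $R$-matrix are only indicated by ``$\ldots$'' in Eq.~\eqref{eq:7}, so no amount of care with the normalizations~\eqref{eq:44} can recover~\eqref{eq:60}--\eqref{eq:61} along this route. This is exactly why the result is nontrivial and is imported from~\cite{FJMM-R}, where the vacuum eigenvalue is obtained by a genuinely different analysis of the tensor product of Fock modules rather than by truncating the Khoroshkin--Tolstoy factorization.
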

\begin{rem}
  The $R$-matrix for a tensor product of a pair of representations (or
  indeed any intertwining operator) is defined up to an arbitrary
  rescaling by a function of the parameters of the representations on
  which it acts. The universal $R$-matrix~\eqref{eq:4} provides a
  choice of normalization for every pair of representations. 
\end{rem}
\begin{thm}
  \begin{equation}
    \label{eq:57}
    \mathcal{O}_N(\vec{x}|u) \mathcal{O}_N(\vec{x}|v) =
    \mathcal{O}_N(\vec{x}|v) \mathcal{O}_N(\vec{x}|u).
  \end{equation}
  and therefore
  \begin{equation}
    \label{eq:59}
    [H_k^{\mathrm{tRS},(N)}(\vec{x},q_1,q_2),
    H_l^{\mathrm{tRS},(N)}(\vec{x},q_1,q_2)] = 0 \qquad k,l = 1,\ldots,N.
  \end{equation}
  \label{thm-O-comm}
\end{thm}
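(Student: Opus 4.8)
The plan is to run the standard ``railroad'' (RTT) argument: the operator identity~\eqref{eq:57} is a direct consequence of the Yang--Baxter equation for the universal $R$-matrix of $\mathcal{A}$, and the Hamiltonian commutativity~\eqref{eq:59} then follows by expanding the generating function of Theorem~\ref{thm-tRS-ham} in powers of the spectral parameter. First I would enlarge the picture to the triple tensor product $(\mathcal{V}_{q_1}^{*})^{\otimes N}\otimes\mathcal{F}_{q_2,q_3}^{(1,0)}(u)\otimes\mathcal{F}_{q_2,q_3}^{(1,0)}(v)$, labelling the common ``quantum'' space of $N$ vector factors by $1$ and the two auxiliary Fock spaces by $2$ and $3$. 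Writing $T(w):=\prod_{i=1}^{N}R^{q_1}_{q_2,q_3}(x_i|w)$ for the monodromy built from Theorem~\ref{thm-miura-r}, a short computation (inserting the Fock vacua and using that the two Fock spaces are independent) shows
\[
\mathcal{O}_N(\vec{x}|u)\,\mathcal{O}_N(\vec{x}|v) = \langle\varnothing,u|_2\langle\varnothing,v|_3\; T(u)_{12}\,T(v)_{13}\;|\varnothing,u\rangle_2|\varnothing,v\rangle_3,
\]
an operator on space $1$, and the analogous formula with $u$ and $v$ interchanged for the opposite ordering.

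Next I would invoke the Yang--Baxter equation. Since all crossings are evaluations of a single universal $R$-matrix, the braid relation holds for the triple $(\mathcal{V}_{q_1}^{*},\mathcal{F}_{q_2,q_3}^{(1,0)}(u),\mathcal{F}_{q_2,q_3}^{(1,0)}(v))$ at each of the $N$ sites; applying it once per site slides the Fock--Fock crossing $\check{\mathcal{R}}_{23}:=\check{\mathcal{R}}|_{\mathcal{F}_{q_2,q_3}^{(1,0)}(u)\otimes\mathcal{F}_{q_2,q_3}^{(1,0)}(v)}$ through the entire lattice and yields the intertwining relation
\[
\check{\mathcal{R}}_{23}\;T(u)_{12}\,T(v)_{13} = T(v)_{12}\,T(u)_{13}\;\check{\mathcal{R}}_{23}.
\]
I would then solve this for $T(u)_{12}T(v)_{13}$ and substitute into the expression above, so that $\mathcal{O}_N(\vec{x}|u)\mathcal{O}_N(\vec{x}|v)$ acquires factors $\check{\mathcal{R}}_{23}^{-1}$ and $\check{\mathcal{R}}_{23}$ acting on the outer Fock vacua. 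Here Proposition~\ref{prop-R-hor-vac} does the essential work: $\check{\mathcal{R}}_{23}$ maps $|\varnothing,u\rangle\otimes|\varnothing,v\rangle$ to $f^{q_2,q_3}_{q_2,q_3}(u/v)\,|\varnothing,v\rangle\otimes|\varnothing,u\rangle$, and by the same weight (triangularity) argument its inverse produces the reciprocal scalar on the dual vacua. The two scalars cancel, and what survives is exactly $\mathcal{O}_N(\vec{x}|v)\,\mathcal{O}_N(\vec{x}|u)$, establishing~\eqref{eq:57}.

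Finally, for~\eqref{eq:59} I would conjugate by $F_{q_1,q_2}(\vec{x})$, which preserves commutativity, and use Theorem~\ref{thm-tRS-ham} to write the conjugated transfer matrix as $q_2^{N(N-1)/2}\bigl(\sum_{k}(-w/q_1)^{k-N}H_k^{\mathrm{tRS},(N)}\bigr)\,q_1^{-\frac12\sum_i x_i\partial_{x_i}}$. By the Remark following Theorem~\ref{thm-tRS-ham} the common shift commutes with every $H_k^{\mathrm{tRS},(N)}$, so it can be pulled out of the product and cancelled; comparing coefficients of $u^{k-N}v^{l-N}$ in the resulting identity of Laurent polynomials in the two independent parameters $u$ and $v$ yields $[H_k^{\mathrm{tRS},(N)},H_l^{\mathrm{tRS},(N)}]=0$.

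The main obstacle --- and the only genuinely representation-theoretic step --- is establishing the intertwining relation (the railroad step) and checking that the Fock--Fock crossing can indeed be transported across all $N$ vector sites. This rests on the universality of the $R$-matrix, so that a single Yang--Baxter equation governs every mixed vector/Fock crossing, and on the triangularity of $\check{\mathcal{R}}$ on the vacuum sector, which guarantees that Proposition~\ref{prop-R-hor-vac} and its inverse contribute only the scalar $f^{q_2,q_3}_{q_2,q_3}(u/v)^{\pm1}$ with no leakage into higher weight states. Once this is secured, the cancellation of the two scalars and the subsequent extraction of the Hamiltonians are pure bookkeeping.
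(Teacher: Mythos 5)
Your proposal is correct and is essentially the paper's own argument, transcribed from diagrams into RTT notation: the paper likewise inserts $1=\check{\mathcal{R}}\check{\mathcal{R}}^{-1}$ on $\mathcal{F}_{q_2,q_3}^{(1,0)}(u)\otimes\mathcal{F}_{q_2,q_3}^{(1,0)}(v)$, slides the crossings through all $N$ vector sites by repeated Yang--Baxter moves, evaluates them on the Fock vacua via Proposition~\ref{prop-R-hor-vac} so that the scalar $f_{q_2,q_3}^{q_2,q_3}$ and its inverse cancel, and then deduces Eq.~\eqref{eq:59} from Theorem~\ref{thm-tRS-ham}. The only cosmetic difference is that you phrase the Yang--Baxter step as an intertwining (RTT) relation for the monodromy $T(w)$ rather than as a picture manipulation.
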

\begin{proof}
  Standard manipulations with $R$-matrices using the Yang-Baxter
  equation are easier to draw than to write out. We start by drawing
  the product of $\mathcal{O}_N(\vec{x}|u)$ for two different values of
  spectral parameters:
  \begin{equation}
    \mathcal{O}_N(\vec{x}|u) \mathcal{O}_N(\vec{x}|v)
    =\quad \includegraphics[valign=c]{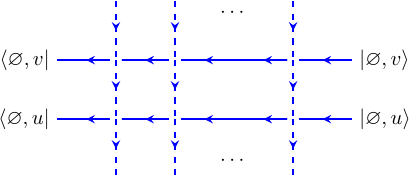}
  \end{equation}
  We insert a product $1 = \check{\mathcal{R}}
  \check{\mathcal{R}}^{-1}$ acting on
  $\mathcal{F}_{q_2,q_3}^{(1,0)}(u) \otimes
  \mathcal{F}_{q_2,q_3}^{(1,0)}(v)$ and then use the Yang-Baxter
  equation repeatedly to move $\check{\mathcal{R}}$ to the very left
  of the picture and $\check{\mathcal{R}}^{-1}$ to the very right. The
  result is :
  \begin{multline}
    \mathcal{O}_N(\vec{x}|u) \mathcal{O}_N(\vec{x}|v)  =\quad \includegraphics[valign=c]{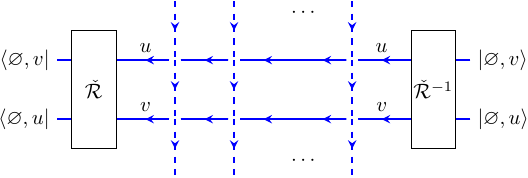}
    \quad =\\
    =\quad f_{q_2,q_3}^{q_2,q_3} \left( \frac{v}{u} \right)
    \times \includegraphics[valign=c]{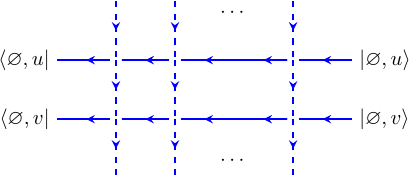} \times
    \left( f_{q_2,q_3}^{q_2,q_3} \left( \frac{v}{u} \right)
    \right)^{-1} \quad = \mathcal{O}_N(\vec{x}|v)
    \mathcal{O}_N(\vec{x}|u),
    \label{eq:62}
  \end{multline}
  In the second line of Eq.~\eqref{eq:62} we have used
  Proposition~\ref{prop-R-hor-vac}. Since the spectral parameters do
  not change along the horizontal lines in this case, the resulting
  function $f_{q_2,q_3}^{q_2,q_3} (u/v)$ and its inverse cancel giving
  the commutativity of $\mathcal{O}_N(\vec{x}|v)$.
\end{proof}
In the next section we use similar $R$-matrix manipulations to obtain
the eigenfunctions of the operators $\mathcal{O}_N(\vec{x}|v)$.
 
\subsection{Trigonometric RS eigenfunctions from intertwiners}
\label{sec:eigenf-from-intertw}
Let us recall some more results about the intertwiners and
$R$-matrices of $\mathcal{A}$.
\begin{prop}[\cite{FHHSY}]
  For any $u \in \mathbb{C}^*$ there exists an
  intertwining operator $\Phi^{q_1}_{q_1,q_2}: \mathcal{V}_{q_1}^{*} \otimes
  \mathcal{F}_{q_1,q_2}^{(1,0)}(u) \to
  \mathcal{F}_{q_1,q_2}^{(1,0)}(u/q_2)$ which is unique up to
  rescaling and given by
\begin{multline}
  \label{eq:21}
  \Phi^{q_1}_{q_1,q_2}(w) \stackrel{\mathrm{def}}{=} \Phi^{q_1}_{q_1,q_2} (|w\rangle \otimes
  \ldots) =
  \quad \includegraphics[valign=c]{figures/vect-vert-color-simpl-crop}
  \quad =\\
  = q_2^{-Q} w^{\frac{P}{\ln q_1}} \exp \left[ -
    \sum_{n\geq 1} \frac{w^n}{n} \frac{1-q_2^n}{1-q_1^n} a_{-n}
  \right] \exp \left[ \sum_{n\geq 1} \frac{w^{-n}}{n}
    \frac{1-q_2^{-n}}{1-q_1^{-n}} a_n \right]
\end{multline}
where $a_n$ are the Heisenberg generators acting on
$\mathcal{F}_{q_1,q_2}^{(1,0)}(u)$ satisfying
\begin{equation}
  \label{eq:63}
  [a_n, a_m] = n \frac{1 - q_1^{|n|}}{1 - q_2^{-|n|}} \delta_{n+m,0},
\end{equation}
and $P$ and $Q$ are the zero modes, $P$ having eigenvalue $\ln u$ on
$\mathcal{F}_{q_1,q_2}^{(1,0)}(u)$ (see
Appendix~\ref{sec:horiz-fock-repr}).
\label{prop-phi}
\end{prop}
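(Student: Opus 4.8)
The plan is to verify the defining intertwining property of $\Phi^{q_1}_{q_1,q_2}$ directly and then to deduce uniqueness from the same relations. By definition an operator $\Phi : \mathcal{V}_{q_1}^{*} \otimes \mathcal{F}_{q_1,q_2}^{(1,0)}(u) \to \mathcal{F}_{q_1,q_2}^{(1,0)}(u/q_2)$ is an intertwiner precisely when $\rho_{\mathrm{out}}(a)\,\Phi = \Phi\,(\rho_{\mathcal{V}}\otimes\rho_{\mathcal{F}})\Delta(a)$ for every $a\in\mathcal{A}$, with $\Delta$ the coproduct of $\mathcal{A}$ (Appendix~\ref{sec:quant-toro-algebra}). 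Since $\mathcal{A}$ is generated by the Drinfeld currents $e(z)$, $f(z)$ together with the Cartan currents $\psi^{\pm}(z)$ and the central/grading elements, and since the $\psi^{\pm}(z)$ are produced by the commutators $[e(z),f(w)]$, it suffices to impose the relation for $e(z)$ and $f(z)$; the consistency on the grading element is what will force the target spectral parameter to be $u/q_2$ rather than $u$.

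First I would record the action of the currents in each factor: on $\mathcal{F}_{q_1,q_2}^{(1,0)}$ the currents are bosonic vertex operators in the single Heisenberg algebra \eqref{eq:63} (Appendix~\ref{sec:horiz-fock-repr}), whose normal-ordering contractions are controlled by the structure function $\psi_{q_1,q_2}$ of \eqref{eq:51}; on $\mathcal{V}_{q_1}^{*}$ the currents act by the explicit shift-and-delta formulas of Appendix~\ref{sec:vect-repr}. Writing $\Phi(w)=\Phi(|w\rangle\otimes\cdots)$ for the component obtained by inserting the coordinate state $|w\rangle$, the coproduct $\Delta e(z)=e(z)\otimes 1+\psi^{-}(z)\otimes e(z)$ (up to the standard central-element insertions, and its mirror for $f(z)$) converts the intertwining relation into a functional equation of the schematic form $e(z)\,\Phi(w)=g_e(w/z)\,\Phi(w)\,e(z)+(\text{delta-function term})$, where $g_e$ is the rational function read off from the $\psi^{-}$-eigenvalue on $\mathcal{V}_{q_1}^{*}$ and the inhomogeneous term encodes the shift action of $e(z)$ on $|w\rangle$; the $f(z)$-relation is analogous.

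The core of the argument is to check that the normal-ordered exponential in \eqref{eq:21}, dressed by the zero modes $q_2^{-Q}w^{P/\ln q_1}$, solves these equations. This is a vertex-operator computation: commuting $e(z)$ through $\Phi(w)$ produces a contraction factor $\exp[\sum_{n\geq 1}(\cdots)]$ which resums to a rational function of $w/z$, and the coefficients $\tfrac{1-q_2^n}{1-q_1^n}$ and $\tfrac{1-q_2^{-n}}{1-q_1^{-n}}$ in the exponents of \eqref{eq:21} are exactly those making this function equal $g_e(w/z)$ with the correct zeros and poles. The prefactor $q_2^{-Q}$ shifts the $P$-eigenvalue from $\ln u$ to $\ln(u/q_2)$, sending the vacuum of $\mathcal{F}_{q_1,q_2}^{(1,0)}(u)$ to that of $\mathcal{F}_{q_1,q_2}^{(1,0)}(u/q_2)$ and thereby identifying the target, while $w^{P/\ln q_1}$ supplies the $w$-grading demanded by the grading-element relation.

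The step I expect to be the main obstacle is the pole/delta matching: one must check not only that the regular parts of the two sides of the $e(z)$- and $f(z)$-relations agree, but that the residues of the contraction factors at the points $z\in\{w,q_1w\}$ are normalized so that, via the standard identity relating $\tfrac{1}{1-x}$ and $\tfrac{1}{1-x^{-1}}$ to $\delta(x)$, they reproduce exactly the inhomogeneous delta-function terms coming from the vector-representation action. Once existence is settled, uniqueness is soft: writing an arbitrary intertwiner as $(\text{zero modes})\cdot\exp(\sum_{n\geq 1}c_{-n}a_{-n})\exp(\sum_{n\geq 1}c_{n}a_{n})$ and imposing the same two functional equations fixes every coefficient $c_{\pm n}$ and the zero-mode dependence, leaving only an overall constant; equivalently, the ratio of any two intertwiners commutes with the irreducible $\mathcal{A}$-action on $\mathcal{F}_{q_1,q_2}^{(1,0)}(u/q_2)$ and is hence a scalar.
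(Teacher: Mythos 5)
The paper does not actually prove this proposition: it is imported verbatim from \cite{FHHSY}, so there is no internal proof to compare against, and your proposal has to be judged against the standard argument in the literature, which it essentially reconstructs. The existence half of your plan is sound and is indeed how such statements are verified: it suffices to impose the intertwining relation on the currents $x^{\pm}(z)$ (the $\psi^{\pm}$ relations follow from the commutator $[x^{+}(z),x^{-}(w)]$, and $\gamma$ acts by matching scalars since $\gamma=1$ on $\mathcal{V}^{*}_{q_1}$); the coproduct~\eqref{eq:141} turns this into functional equations mixing the components $\Phi(w)$ and $\Phi(q_1^{\pm 1}w)$, and the Wick contractions of~\eqref{eq:21} against the Fock-space currents~\eqref{eq:145} produce exactly the rational functions whose difference of expansions yields the required delta-function terms. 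Your identification of the roles of $q_2^{-Q}$ (shifting the $P$-eigenvalue from $\ln u$ to $\ln(u/q_2)$, hence fixing the target) and of $w^{P/\ln q_1}$ (compensating the $u^{\pm 1}$ prefactors of $x^{\pm}(z)$ under $w\to q_1^{\mp 1}w$) is also correct, though the latter is forced by the delta-function matching in the current relations rather than by any grading-element consistency.

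The uniqueness part is where your argument has a genuine gap. You cannot begin by ``writing an arbitrary intertwiner as $(\text{zero modes})\cdot\exp(\sum_{n\geq 1}c_{-n}a_{-n})\exp(\sum_{n\geq 1}c_{n}a_{n})$'' --- a general operator between Fock spaces has no such form, so this assumes what must be proved; and the fallback ``the ratio of two intertwiners commutes with the $\mathcal{A}$-action, hence is a scalar'' is not well-formed, since the individual components $\Phi(w)$ are not $\mathcal{A}$-module maps and need not be invertible, so no Schur-type argument applies to them directly. The correct route is to extract from the intertwining relations the commutators $[a_{\pm n},\Phi(w)]=(\ldots)\,w^{\mp n}\,\Phi(w)$ with the Heisenberg modes; irreducibility of the Fock space over the Heisenberg algebra then forces $\Phi(w)$ to be the normal-ordered exponential of~\eqref{eq:21} times an operator commuting with all $a_{\pm n}$, i.e.\ a function of $w$ and the zero modes, and the residual functional equation relating $\Phi(q_1 w)$ to $\Phi(w)$ constrains that function only up to the known ambiguity. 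In particular, your conclusion that the equations fix everything ``leaving only an overall constant'' is too strong: as the remark immediately following the proposition in the paper states, the intertwiner is unique only up to rescaling by arbitrary functions of $P$ (equivalently of $u$) and $q_1$-periodic functions of $w$, and this weaker statement is the precise meaning of ``unique up to rescaling'' here.
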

\begin{rem}
  The intertwining operator~\eqref{eq:21} is defined up to rescaling
  not only by complex numbers but by arbitrary functions of $P$ (or
  equivalently of the spectral parameter $u$) and $q_1$-periodic
  functions of $w$.
\end{rem}
\begin{prop}
  The dual intertwiner $\Phi_{q_1,q_2}^{*q_1}:
  \mathcal{F}_{q_1,q_2}^{(1,0)}(u) \to \mathcal{F}_{q_1,
    q_2}^{(1,0)}(q_2 u) \otimes \mathcal{V}^{*}_{q_1}$ is given
  by~\cite{Zenkevich:2018fzl}:
\begin{multline}
  \label{eq:23}
  \Phi^{*q_1}_{q_1,q_2}(y) \stackrel{\mathrm{def}}{=}  (\ldots \otimes \langle y| ) \Phi^{*q_1}_{q_1,q_2} =
  \quad  \includegraphics[valign=c]{figures/vect-vert-dual-color-crop}
  \quad =\\
  = q_2^Q y^{-\frac{\ln (q_2) + P}{\ln q_1}} \exp \left[ \sum_{n\geq 1} \frac{y^n}{n}
    q_3^{\frac{n}{2}} \frac{1-q_2^n}{1-q_1^n}
    a_{-n} \right] \exp \left[ - \sum_{n\geq 1} \frac{y^{-n}}{n}
    q_3^{\frac{n}{2}} \frac{1-q_2^{-n}}{1-q_1^{-n}}
    a_n \right],
\end{multline}
where $a_n$, $P$ and $Q$ are as in Proposition~\ref{prop-phi}.
\label{prop-phi-star}
\end{prop}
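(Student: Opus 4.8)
The plan is to verify directly that the operator written in Eq.~\eqref{eq:23} intertwines the action of $\mathcal{A}$ in the required sense: writing $\rho_u$ for the Fock representation $\mathcal{F}^{(1,0)}_{q_1,q_2}(u)$, $\rho_{\mathcal{V}^*}$ for the vector representation $\mathcal{V}^*_{q_1}$, and $\Delta$ for the coproduct recalled in Appendix~\ref{sec:quant-toro-algebra}, one must show $\Phi^{*q_1}_{q_1,q_2}\,\rho_u(a) = (\rho_{q_2 u} \otimes \rho_{\mathcal{V}^*})(\Delta(a))\,\Phi^{*q_1}_{q_1,q_2}$ for all $a \in \mathcal{A}$. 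Since $\mathcal{A}$ is generated by the Drinfeld currents $e(z)$, $f(z)$, $\psi^\pm(z)$ together with the central and grading elements, it suffices to check this on the generators; the Cartan currents $\psi^\pm(z)$ and the grading then fix both the spectral shift $u \mapsto q_2 u$ and the zero-mode prefactor $q_2^{Q} y^{-(\ln q_2 + P)/\ln q_1}$.

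First I would substitute the explicit actions. On $\mathcal{F}^{(1,0)}_{q_1,q_2}$ the currents $e(z)$, $f(z)$ are the standard vertex operators built from the Heisenberg modes $a_n$ of Eq.~\eqref{eq:63}, while on $\mathcal{V}^*_{q_1}$ they act as multiplication and $q_1$-shift operators in the variable on which $\mathbb{C}((x))$ depends. Plugging these in and normal-ordering each current against the vertex operator~\eqref{eq:23}, the nonzero-mode part of the equation collapses to a functional identity whose unique solution forces the coefficients $q_3^{n/2}(1-q_2^{\pm n})/(1-q_1^{\pm n})$ appearing in the two exponentials, while the contact terms produced by the normal-ordering reproduce the difference/delta-function terms of the vector-representation action.

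A more economical route, which I would use to shortcut most of the bookkeeping, is to obtain $\Phi^{*q_1}_{q_1,q_2}$ as the formal adjoint of the already-established intertwiner $\Phi^{q_1}_{q_1,q_2}$ of Proposition~\ref{prop-phi}. Using the invariant pairing on $\mathcal{F}^{(1,0)}_{q_1,q_2}$, under which $a_n$ and $a_{-n}$ are adjoint up to the normalization in Eq.~\eqref{eq:63}, together with the natural pairing between $\mathcal{V}_{q_1}$ and its restricted dual $\mathcal{V}^*_{q_1}$, the intertwining property of $\Phi$ transposes into that of $\Phi^*$. Taking the adjoint of Eq.~\eqref{eq:21} exchanges the two exponentials, inverts the spectral shift into $u \mapsto q_2 u$, and produces precisely the factor $q_3^{n/2}$ from the $R$-matrix/antipode normalization used to dualize the vector factor, making the explicit form~\eqref{eq:23} transparent given~\eqref{eq:21}. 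Uniqueness up to rescaling then follows from a weight-space argument: the $\psi^\pm(z)$ relations fix the operator on each graded component up to a scalar, so any two solutions differ by a function of $P$ and a $q_1$-periodic function of $y$, exactly the freedom noted after Proposition~\ref{prop-phi}.

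The main obstacle is the careful treatment of the non-cocommutative coproduct of $\mathcal{A}$: the currents acting on the first tensor factor are dressed by the $\psi^\pm$ of the second, so the contact terms and the zero-mode prefactor must be matched simultaneously rather than independently. It is precisely this dressing that is responsible for the $q_3^{n/2}$ twist and the spectral shift $u \mapsto q_2 u$ distinguishing $\Phi^*$ from $\Phi$, and getting these two features to emerge consistently from the same intertwining relation is where the bulk of the care is required.
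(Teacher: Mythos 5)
The paper itself contains no proof of this proposition: it is imported with a citation to~\cite{Zenkevich:2018fzl}, so there is no in-paper argument to compare yours against. Your primary route --- imposing $\Phi^{*q_1}_{q_1,q_2}\,\rho_u(a) = (\rho_{q_2 u}\otimes\rho_{\mathcal{V}^{*}})(\Delta(a))\,\Phi^{*q_1}_{q_1,q_2}$ on the generating currents, substituting the explicit actions~\eqref{eq:145}, \eqref{eq:6} and the coproduct~\eqref{eq:141}, and solving the resulting functional equations for the vertex-operator coefficients --- is the standard derivation of such formulas and would constitute a complete proof once the normal-ordering computations are actually carried out. Your observations that the $q_3^{n/2}$ twist and the shift $u\mapsto q_2 u$ both originate from the central-element dressing ($\gamma_{(1)}$, $\gamma_{(2)}$) in the non-cocommutative coproduct, and that uniqueness holds only up to functions of $P$ and $q_1$-periodic functions of $y$, are correct and match the freedom noted after Proposition~\ref{prop-phi}.

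However, the ``more economical'' adjoint route, as stated, has a genuine gap, and since you lean on it to ``shortcut most of the bookkeeping'' this matters. Taking formal adjoints reverses operator order and therefore converts an intertwiner for $\Delta$ into an intertwiner for $\Delta^{\mathrm{op}}$ (equivalently, the construction passes through the antipode). To turn that object back into a $\Delta$-intertwiner of the asserted type $\mathcal{F}^{(1,0)}_{q_1,q_2}(u)\to\mathcal{F}^{(1,0)}_{q_1,q_2}(q_2 u)\otimes\mathcal{V}^{*}_{q_1}$ you would need (i) an identification of the antipode-twisted dual of $\mathcal{F}^{(1,0)}_{q_1,q_2}(u)$ with another Fock representation of specified spectral parameter, and (ii) an identification of the double dual of the vector representation with $\mathcal{V}^{*}_{q_1}$; neither is canonical for algebras of quantum affine type, where the square of the antipode is conjugation by a group-like element rather than the identity, and each introduces precisely the kind of spectral shifts and $q_3$-powers you are trying to pin down. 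The claim that the $q_3^{n/2}$ factor ``is produced by the $R$-matrix/antipode normalization'' is asserted rather than derived, so the shortcut does not stand on its own. Your direct verification is what carries the proof; the adjoint argument should be demoted to a consistency check or else fleshed out with the dualization data made explicit.
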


In what follows we will also need an expression for the $R$-matrix
acting on $\mathcal{V}_{q_1}^{*} \otimes
\mathcal{F}_{q,t^{-1}}^{(1,0)}(u)$.

\begin{thm}[\cite{Haouzi:2024qyo}]
  The only nontrivial matrix elements of the $R$-matrix of the algebra
  $\mathcal{A}$ acting on $\mathcal{V}_{q_1} \otimes
  \mathcal{F}_{q_1,q_2}^{(1,0)}(u)$ are
  \begin{multline}
    \label{eq:31}
    (\ldots) \otimes \left\langle \left.  w
        \left|\check{\mathcal{R}}|_{\mathcal{V}_{q_1} \otimes
            \mathcal{F}_{q_1,q_2}^{(1,0)}(u)}\right|
        q_3^{\frac{1}{2}}\, q_1^{-n} w \right. \right\rangle \otimes
    (\ldots)=\quad\includegraphics[valign=c]{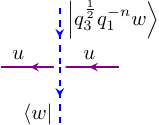}\quad
    =   \\
    = q_3^{-\frac{\ln (q_2) + P}{2 \ln q_1}} \frac{\left( q_3^{-1}
        ;q_1\right)_{\infty}}{(q_1;q_1)_{\infty}}
    \Phi^{q_1}_{q_1,q_2}(q_3^{\frac{1}{2}}\,q_1^{-n} w)
    \Phi^{*q_1}_{q_1,q_2}\left( w \right) = q_3^{-\frac{\ln (q_2) +
        P}{2 \ln q_1}} \frac{\left( q_3^{-1}
        ;q_1\right)_{\infty}}{(q_1;q_1)_{\infty}}
    \times\quad \includegraphics[valign=c]{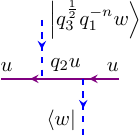}
  \end{multline}
  for $n \in \mathbb{Z}_{\geq 0}$.
  \label{thm-r-q}
\end{thm}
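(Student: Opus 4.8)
The plan is to identify both sides of~\eqref{eq:31} as intertwining operators for $\mathcal{A}$ and then invoke uniqueness. Recall that $\check{\mathcal{R}}|_{\mathcal{V}_{q_1}\otimes\mathcal{F}_{q_1,q_2}^{(1,0)}(u)}$ is, up to a scalar depending only on the zero modes, the unique operator $\mathcal{V}_{q_1}\otimes\mathcal{F}_{q_1,q_2}^{(1,0)}(u)\to\mathcal{F}_{q_1,q_2}^{(1,0)}(u)\otimes\mathcal{V}_{q_1}$ commuting with the coproduct action of $\mathcal{A}$. I would therefore exhibit an explicit intertwiner of this type built from the vertex operators of Propositions~\ref{prop-phi} and~\ref{prop-phi-star}, verify the quantized support of the matrix elements, and finally pin down the scalar by a single normal-ordering computation.

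First I would construct the right-hand side as an intertwiner. By Proposition~\ref{prop-phi-star} the operator $\Phi^{*q_1}_{q_1,q_2}$ intertwines $\mathcal{F}_{q_1,q_2}^{(1,0)}(u)\to\mathcal{F}_{q_1,q_2}^{(1,0)}(q_2 u)\otimes\mathcal{V}^{*}_{q_1}$, and by Proposition~\ref{prop-phi} the operator $\Phi^{q_1}_{q_1,q_2}$ intertwines $\mathcal{V}^{*}_{q_1}\otimes\mathcal{F}_{q_1,q_2}^{(1,0)}(q_2 u)\to\mathcal{F}_{q_1,q_2}^{(1,0)}(u)$. Diagrammatically this is precisely the second equality of~\eqref{eq:31}: the crossing of the dashed vector line with the solid Fock line is resolved into two trivalent junctions, the vector line being emitted by $\Phi^{*q_1}_{q_1,q_2}$ at parameter $w$ and reabsorbed by $\Phi^{q_1}_{q_1,q_2}$, while the Fock spectral parameter returns to $u$ after the intermediate excursion $u\to q_2 u\to u$. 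Composing these two intertwiners yields an operator commuting with $\Delta(\mathcal{A})$, which by uniqueness must agree with $\check{\mathcal{R}}$ up to normalization.

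The delicate part --- and the step I expect to be the main obstacle --- is to show that the matrix elements are \emph{nonzero only} for the quantized in-state $q_3^{\frac{1}{2}}q_1^{-n}w$ with $n\in\mathbb{Z}_{\geq0}$. I would extract this from the intertwining relations for the Cartan currents $\psi^{\pm}(z)$ of $\mathcal{A}$: on a weight state of $\mathcal{V}_{q_1}$ with parameter $w'$ these act by an explicit rational function of $w'/z$, so matching $\check{\mathcal{R}}\,\Delta(\psi^{\pm})=\Delta(\psi^{\pm})\,\check{\mathcal{R}}$ between the out-state $\langle w|$ and the in-state $|w'\rangle$ forces $w'$ to lie on the lattice $q_3^{\frac12}q_1^{\mathbb{Z}}w$. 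The one-sidedness $n\geq0$ is then the semi-infinite structure of $\mathcal{F}_{q_1,q_2}^{(1,0)}$ made concrete: in the next step the contraction of the two vertex operators produces a factor $\exp\bigl[\sum_{k\geq1}\tfrac1k\,q_1^{kn}\,\tfrac{1-q_2^{k}}{1-q_1^{-k}}\bigr]$ (using $q_1q_2q_3=1$), whose defining series converges only for $n\geq0$, so only nonnegative shifts survive. This triangularity argument, rather than any single contraction, is the technical heart.

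Finally I would fix the scalar. Normal-ordering the exponentials of~\eqref{eq:21} and~\eqref{eq:23} with the Heisenberg relation~\eqref{eq:63}, the contraction of the annihilation part of $\Phi^{q_1}_{q_1,q_2}$ against the creation part of $\Phi^{*q_1}_{q_1,q_2}$ gives the factor above, which for $n=0$ resums to the $q$-Pochhammer ratio $\tfrac{(q_1;q_1)_{\infty}}{(q_3^{-1};q_1)_{\infty}}$. Matching the resulting vacuum expectation value $\langle\varnothing,u|\cdots|\varnothing,u\rangle$ against the value dictated by the universal $R$-matrix --- computed exactly as in Proposition~\ref{prop-R-hor-vac} but for the pair $\mathcal{V}_{q_1}\otimes\mathcal{F}_{q_1,q_2}^{(1,0)}(u)$ --- fixes the prefactor to $q_3^{-\frac{\ln(q_2)+P}{2\ln q_1}}\tfrac{(q_3^{-1};q_1)_{\infty}}{(q_1;q_1)_{\infty}}$, which inverts the contraction and reproduces~\eqref{eq:31}. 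This is consistent with the companion evaluation in Theorem~\ref{thm-miura-r}, where the same universal $R$-matrix is taken in a neighbouring pair of representations.
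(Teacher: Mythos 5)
You should first be aware that the paper contains no proof of Theorem~\ref{thm-r-q}: it is quoted verbatim from \cite{Haouzi:2024qyo}, so your proposal can only be judged on its own terms, not against an internal argument.

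Your architecture (uniqueness of the intertwiner, explicit candidate built from $\Phi$, $\Phi^{*}$, normalization) is reasonable, and your contraction is correct: normal ordering $\Phi^{q_1}_{q_1,q_2}(q_3^{1/2}q_1^{-n}w)\,\Phi^{*q_1}_{q_1,q_2}(w)$ with~\eqref{eq:63} gives $\exp\bigl[\sum_{k\geq1}\tfrac{1}{k}q_1^{nk}\tfrac{1-q_2^{k}}{1-q_1^{-k}}\bigr]=\tfrac{(q_1^{n+1};q_1)_{\infty}}{(q_3^{-1}q_1^{n};q_1)_{\infty}}$, which vanishes for $n<0$ (the numerator contains the factor $1-q_1^{0}$) and at $n=0$ is the inverse of the stated Pochhammer prefactor; this is exactly the mechanism behind the remark below Eq.~\eqref{eq:1}. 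However, the step you yourself call the technical heart --- extracting the lattice support from the $\psi^{\pm}$ intertwining relations --- fails. By~\eqref{eq:33} the Cartan currents act \emph{diagonally} on $\mathcal{V}_{q_1}$; they never move $|w'\rangle$. Sandwiching $\check{\mathcal{R}}\Delta(\psi^{\pm})=\Delta(\psi^{\pm})\check{\mathcal{R}}$ between $\langle w|$ and $|w'\rangle$ therefore yields $\psi^{\pm}_{\mathcal{F}}(z)\,M\,\psi^{\pm}_{\mathcal{F}}(z)^{-1}=(\text{ratio of rational functions of }w,w',z)\,M$ for the Fock-operator-valued matrix element $M=\langle w|\check{\mathcal{R}}|w'\rangle$. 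Since $\psi^{+}_{\mathcal{F}}(z)$ (resp.\ $\psi^{-}_{\mathcal{F}}(z)$) is the exponential of annihilation (resp.\ creation) modes only, this equation merely determines the coherent profile of $M$ --- its mode coefficients as functions of $(w,w')$ --- and admits a solution for \emph{every} pair $(w,w')$; no quantization follows. The quantization onto $w'\in q_3^{1/2}q_1^{\mathbb{Z}}w$ (and the one-sidedness, before any contraction argument) comes from the $x^{\pm}$ relations: $\rho_{\mathcal{V}_{q_1}}(x^{\pm}(z))$ contains $\delta(w'/z)$ and shifts $|w'\rangle\mapsto|q_1^{\pm1}w'\rangle$, while the central charge $\gamma=q_3^{1/2}$ of $\mathcal{F}^{(1,0)}_{q_1,q_2}$ enters the coproducts on the two sides asymmetrically; matching delta-function supports produces recursions relating $M$ at lattice-shifted arguments and kills everything off the lattice. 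That analysis is the real heart of the proof and is absent from your proposal.

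There are two further gaps. First, the uniqueness you invoke presupposes irreducibility of $\mathcal{V}_{q_1}(w)\otimes\mathcal{F}^{(1,0)}_{q_1,q_2}(u)$ at generic parameters (plus a Schur argument applied coset by coset, since $\mathcal{V}_{q_1}$ is only a direct integral of the $\mathcal{V}_{q_1}(w)$); this is asserted, not proven. Moreover, the candidate as you literally describe it, $\Phi^{q_1}_{q_1,q_2}\circ\Phi^{*q_1}_{q_1,q_2}:\mathcal{F}(u)\to\mathcal{F}(q_2u)\to\mathcal{F}(u)$, is an operator on the Fock space alone, so ``agrees with $\check{\mathcal{R}}$ by uniqueness'' is a type mismatch: to produce an honest element of $\mathrm{Hom}\bigl(\mathcal{V}^{*}_{q_1}\otimes\mathcal{F}(u),\,\mathcal{F}(u)\otimes\mathcal{V}^{*}_{q_1}\bigr)$ one must take $(\Phi^{q_1}_{q_1,q_2}\otimes\mathrm{id})\circ(\mathrm{id}\otimes\Phi^{*q_1}_{q_1,q_2})$ --- the incoming dashed line terminates on $\Phi$ while a \emph{different} outgoing dashed line is created at $\Phi^{*}$ --- and then relate $\mathcal{V}^{*}_{q_1}$ to $\mathcal{V}_{q_1}$ matrix elements; the flip of tensor factors implicit in your ``emitted and reabsorbed'' picture is not an intertwiner. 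Second, the normalization: the $P$-dependent prefactor $q_3^{-(\ln q_2+P)/(2\ln q_1)}$ cannot be obtained ``exactly as in Proposition~\ref{prop-R-hor-vac}'' --- that proposition concerns Fock$\otimes$Fock vacua and is itself imported from \cite{FJMM-R}. For $\mathcal{V}_{q_1}\otimes\mathcal{F}$ the Cartan modes of~\eqref{eq:7} act as nonzero scalars on $|w'\rangle$ (no state is annihilated on the vector side), so both the Cartan part and the $x^{+}\otimes x^{-}$ tail contribute and must be resummed explicitly; that evaluation is essentially the computation of \cite{Haouzi:2024qyo} being quoted, so your proposal assumes the result precisely at the step where the prefactor is fixed.
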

\begin{cor}[\cite{Haouzi:2024qyo}]
  The $R$-matrix acting on $\mathcal{V}^{*}_{q_1} \otimes
  \mathcal{F}_{q_1,q_2}^{(1,0)}(u)$ is given by the following
  $q_1$-difference operator on functions of $x$:
  \begin{multline}
    \label{eq:1}
    R_{q_1,q_2}^{q_1}(x|u) \stackrel{\mathrm{def}}{=}
    \check{\mathcal{R}}|_{\mathcal{V}^{*}_{q_1} \otimes
      \mathcal{F}_{q_1,q_2}^{(1,0)}(u)}=\\
    =
    \quad\includegraphics[valign=c]{figures/d3-ns5-same-inv-pure-crop}\quad
    = q_3^{-\frac{\ln (q_2) + P}{2 \ln q_1}} \frac{\left( q_3^{-1}
        ;q_1\right)_{\infty}}{(q_1;q_1)_{\infty}} \sum_{n \geq 0}
    \Phi^{q_1}_{q_1,q_2}(q_3^{\frac{1}{2}}\,q_1^{-n} x)
    \Phi^{*q_1}_{q_1,q_2}\left( x \right) q_3^{\frac{1}{2}
      x \partial_x} q_1^{-nx\partial_{x}}.
  \end{multline}
\end{cor}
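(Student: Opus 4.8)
The plan is to derive the difference operator \eqref{eq:1} directly from the matrix-element formula of Theorem~\ref{thm-r-q} by reading off the induced action on the dual vector representation. Since $\mathcal{V}^{*}_{q_1}$ is realized on $\mathbb{C}((x))$, an element of this space is a function $f(x)$ whose argument plays the role of the spectral parameter labelling the corresponding state $|x\rangle$ of $\mathcal{V}_{q_1}$. The operator $R^{q_1}_{q_1,q_2}(x|u)$ is then nothing but the kernel of $\check{\mathcal{R}}$, transcribed from the $\{|w\rangle\}$-basis matrix elements of Theorem~\ref{thm-r-q} into an operator on functions.

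First I would write the action on a test function as
\begin{equation*}
[R^{q_1}_{q_1,q_2}(x|u)\, f](x) = \sum_{w} \left\langle x \left|\check{\mathcal{R}}\right| w\right\rangle f(w),
\end{equation*}
the sum running over input states $|w\rangle$. By Theorem~\ref{thm-r-q} the only nonvanishing matrix elements occur for $w = q_3^{\frac{1}{2}} q_1^{-n} x$ with $n \in \mathbb{Z}_{\geq 0}$, so the sum collapses to a sum over $n \geq 0$ and the input argument becomes $f(q_3^{\frac{1}{2}} q_1^{-n} x)$. The crucial observation is that this spectral shift is precisely what the operator $q_3^{\frac{1}{2} x\partial_x} q_1^{-n x\partial_x}$ implements on functions, since $q^{x\partial_x} f(x) = f(qx)$; this is exactly the step that turns the discrete family of matrix elements into a single $q_1$-difference operator. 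Substituting the explicit matrix element from Theorem~\ref{thm-r-q} — the vertex-operator product $\Phi^{q_1}_{q_1,q_2}(q_3^{\frac{1}{2}} q_1^{-n} x)\,\Phi^{*q_1}_{q_1,q_2}(x)$ times the prefactor $q_3^{-\frac{\ln(q_2)+P}{2\ln q_1}}\frac{(q_3^{-1};q_1)_\infty}{(q_1;q_1)_\infty}$ — and noting that this prefactor is independent of $n$ so that it pulls out of the sum, one arrives at \eqref{eq:1} once the shift operators are placed to the right of the (explicitly shifted) vertex operators.

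The points requiring care are bookkeeping rather than substance. One must fix the orientation of the duality so that the spectral shift lands on the \emph{input} argument, producing $q_1^{-n}$ rather than $q_1^{+n}$; one must confirm that the $n$-independent normalization, in particular the zero-mode factor $q_3^{-\frac{\ln(q_2)+P}{2\ln q_1}}$, genuinely factors out of the sum; and one should note that the two occurrences of $x$ — as a c-number argument of the Fock-space vertex operators on the one hand, and as the variable acted on by $q^{x\partial_x}$ in $\mathbb{C}((x))$ on the other — live in different tensor factors, so the shift operators act only on the external function and not on the internal structure of $\Phi^{q_1}_{q_1,q_2},\Phi^{*q_1}_{q_1,q_2}$. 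Finally, the infinite sum over $n$ is well defined as a formal $q_1$-difference operator on $\mathbb{C}((x))$, since each $q_1^{-n x\partial_x}$ merely rescales Laurent modes and the summation is controlled term by term. The main (mild) obstacle is thus simply to pin down the duality convention consistently with the earlier Remark about $\mathcal{V}^{*}_{q_1}$, after which the statement is an immediate reorganization of Theorem~\ref{thm-r-q}.
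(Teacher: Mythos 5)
Your proposal is correct and is essentially the paper's own (implicit) derivation: the statement appears as a corollary of Theorem~\ref{thm-r-q} precisely because the difference operator~\eqref{eq:1} is obtained by summing the nonvanishing matrix elements over the intermediate states $n \geq 0$ and realizing the spectral shifts $x \mapsto q_3^{\frac{1}{2}} q_1^{-n} x$ as the operators $q_3^{\frac{1}{2}x\partial_x} q_1^{-n x\partial_x}$ on $\mathbb{C}((x))$, with the $n$-independent prefactor pulled out of the sum. Your bookkeeping caveats — fixing the duality orientation so the shift lands on the input argument (giving $q_1^{-n}$ rather than $q_1^{+n}$), and keeping the shift operators to the right so they act only on the external function and not on the c-number arguments of $\Phi^{q_1}_{q_1,q_2}$, $\Phi^{*q_1}_{q_1,q_2}$ — are exactly the conventions the paper pins down in the Remarks surrounding Theorem~\ref{thm-miura-r} and Eq.~\eqref{eq:1}.
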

\begin{rem}
  In fact, performing the normal ordering of $\Phi$ and $\Phi^{*}$ in
  the r.h.s.\ of Eq.~\eqref{eq:1} one finds that
  \begin{equation}
    \label{eq:22}
    \Phi^{q_1}_{q_1,q_2}(q_3^{\frac{1}{2}}\,q_1^{-n} x) \Phi^{*q_1}_{q_1,q_2}\left(
      x \right) = 0 \qquad \text{for } n \in \mathbb{Z}_{<0}.
  \end{equation}
  Therefore we can formally extend the summation in Eq.~\eqref{eq:1}
  to all $n \in \mathbb{Z}$.
\end{rem}
\begin{rem}
  The difference operator~\eqref{eq:1} in our convention is understood
  to act on a wavefunction corresponding to a piece of the diagram
  sitting above the picture.
\end{rem}

We give an explicit combination of intertwining operators and
$R$-matrices of the algebra $\mathcal{A}$ that produces the joint
eigenfunction for the tRS Hamiltonians with generic eigenvalues.
\begin{dfn}
  Let $u_i \in \mathbb{C}^{*}$, $i=1,\ldots,N$ be generic
  parameters. Define $f_N (\vec{x}|\vec{u}|q_1,q_2)$ to be the
  following matrix element of intertwiners:
\begin{equation}
  \label{eq:19}
   f_N
  (\vec{x}|\vec{u}|q_1,q_2) = (G_{q_1,q_2}(\vec{x}))^{-1}\times \quad\includegraphics[valign=c]{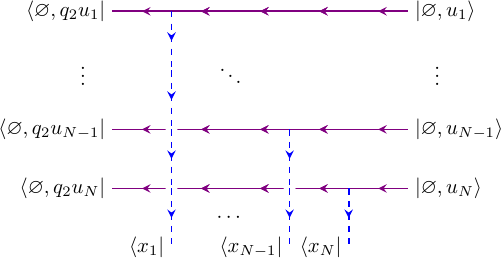}
\end{equation}
where
\begin{equation}
  \label{eq:74}
  G_{q_1,q_2}(\vec{x}) = \prod_{i<j} \frac{\left(
      \frac{x_j}{x_i} ; q_1 \right)_{\infty} \left(
      q_1 \frac{x_j}{x_i} ; q_1 \right)_{\infty}}{\left(
      \frac{1}{q_2} \frac{x_j}{x_i} ; q_1 \right)_{\infty} \left(
      \frac{1}{q_3} \frac{x_j}{x_i} ; q_1 \right)_{\infty}}.
\end{equation}
\end{dfn}
\begin{thm} Let
  \begin{equation}
    \label{eq:36}
    \psi_N (\vec{x}|\vec{u}|q_1,q_2) =   \prod_{i<j} \left[
      x_i^{\frac{\ln q_2}{\ln q_1}} \frac{\left(
      q_1 \frac{x_j}{x_i} ; q_1 \right)_{\infty}}{\left(
      \frac{1}{q_3} \frac{x_j}{x_i} ; q_1 \right)_{\infty}}  \right] f_N (\vec{x}|\vec{u}|q_1,q_2). 
  \end{equation}
  Then
  \begin{equation}
    \label{eq:32}
    H_k^{\mathrm{tRS},(N)} \psi_N (\vec{x}|\vec{u}|q_1,q_2) = e_k(\vec{u}) \psi_N (\vec{x}|\vec{u}|q_1,q_2),
  \end{equation}
  where $e_k$ are elementary symmetric functions.
\end{thm}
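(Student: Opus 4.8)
The plan is to reduce the $N$ simultaneous eigenvalue equations to a single generating-function identity for the transfer matrix $\mathcal{O}_N(\vec{x}|v)$ acting on the ``triangle'' intertwiner matrix element $T_N$ appearing in~\eqref{eq:19} (so that $f_N = (G_{q_1,q_2})^{-1}T_N$), and then to evaluate that action. First I would simplify the normalization: a direct manipulation of the $q$-Pochhammer symbols shows that the prefactor in~\eqref{eq:36} multiplied by $(G_{q_1,q_2})^{-1}$ collapses --- after the $(q_1 x_j/x_i;q_1)_\infty$ and $(q_3^{-1}x_j/x_i;q_1)_\infty$ factors cancel --- to exactly $(F_{q_1,q_2})^{-1}$ with $F_{q_1,q_2}$ as in~\eqref{eq:49}; hence $\psi_N = (F_{q_1,q_2})^{-1}T_N$. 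By Theorem~\ref{thm-tRS-ham} the operator $\mathcal{O}_N(\vec{x}|v)$ is conjugate by $F_{q_1,q_2}$ to the generating function $q_2^{N(N-1)/2}\sum_k(-v/q_1)^{k-N}H_k^{\mathrm{tRS},(N)}\,S$ with $S=q_1^{-\frac12\sum_i x_i\partial_{x_i}}$. Since $S$ commutes with every $H_k^{\mathrm{tRS},(N)}$ and $v$ is free, it suffices to prove the single identity $\mathcal{O}_N(\vec{x}|v)\,T_N = q_2^{N(N-1)/2}\bigl[\sum_{k=0}^N(-v/q_1)^{k-N}e_k(\vec{u})\bigr]\,F_{q_1,q_2}\,S\,F_{q_1,q_2}^{-1}\,T_N$ and then match coefficients of $v$, which yields $H_k^{\mathrm{tRS},(N)}\psi_N = e_k(\vec{u})\psi_N$ for each $k$ after stripping $S$.

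The heart of the argument is to compute $\mathcal{O}_N(\vec{x}|v)\,T_N$. I would expand $\mathcal{O}_N$ as in~\eqref{eq:54}, a sum over subsets $I\subseteq\{1,\dots,N\}$ of products of vertex operators $V^{\pm}_{q_2,q_3}$ dressed by the shifts $\prod_{i\in I}q_1^{x_i\partial_{x_i}}$ and the overall shift $S$. The triangle $T_N$ is itself a normally ordered product of the intertwiners $\Phi^{q_1}_{q_1,q_2}(x_i)$ of Proposition~\ref{prop-phi} (together with their duals of Proposition~\ref{prop-phi-star}) sandwiched between Fock vacua, with the $i$-th vertex attached to a horizontal Fock line whose spectral parameter equals $u_i$ up to the cascade of $q_2$-shifts produced by the intertwining property $\mathcal{V}^{*}_{q_1}\otimes\mathcal{F}(u)\to\mathcal{F}(u/q_2)$. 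The key mechanism is that each shift $q_1^{x_i\partial_{x_i}}$ with $i\in I$ acts on the zero-mode factor $x_i^{P/\ln q_1}$ of $\Phi^{q_1}_{q_1,q_2}(x_i)$ and, because $P$ has eigenvalue $\ln u_i$ on the relevant Fock line, pulls out precisely the spectral parameter $u_i$. Summing $\prod_{i\in I}u_i$ over all $I$ with $|I|=k$ then produces the elementary symmetric function $e_k(\vec{u})$, the claimed eigenvalue. Equivalently, one may argue diagrammatically: insert $\mathcal{O}_N$ as the horizontal $v$-line crossing the vertical lines atop the triangle and pull it through using the Yang--Baxter equation and the intertwining relations, whereupon the Fock--Fock crossing scalars of Proposition~\ref{prop-R-hor-vac} cancel between entering and exiting in the spirit of the proof of Theorem~\ref{thm-O-comm}, leaving a purely rational factor.

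The main obstacle is the normalization and shift bookkeeping rather than any conceptual step. One must track how the Fock spectral parameter is shifted by $q_2$ at each of the $N$ intertwiners, and check that these shifts, together with the explicit $q_2^{-k(k-1)/2}$ in~\eqref{eq:47} and the overall $q_2^{N(N-1)/2}$, combine to reproduce $e_k(\vec{u})$ in the $u_i$ \emph{as defined}, not in shifted variables. One must also verify that all rational factors --- the $\psi_{q_2,q_3}$ arising from the commutation~\eqref{eq:50} when reordering $V^{+}$ past $V^{-}$, the Pochhammer factors generated in normally ordering the $\Phi$'s (which assemble into $G_{q_1,q_2}$), and the residual shift of the $x_i$-dependence under $q_1^{x_i\partial_{x_i}}$ acting on the oscillator part of the vertex operators --- cancel completely against the conjugation by $F_{q_1,q_2}$, leaving no transcendental remainder. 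Once these cancellations are confirmed and the rational $v$-dependence $\sum_k(-v/q_1)^{k-N}e_k(\vec{u})$ is isolated, matching the coefficient of each power of $v$ completes the proof.
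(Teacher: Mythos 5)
Your framing is right in several places, and in one respect cleaner than the paper: the Pochhammer manipulation showing that the prefactor in \eqref{eq:36} times $(G_{q_1,q_2})^{-1}$ collapses to $(F_{q_1,q_2})^{-1}$, so that $\psi_N=(F_{q_1,q_2})^{-1}T_N$, is correct, and reducing the theorem via Theorem~\ref{thm-tRS-ham} to a single generating-function identity for $\mathcal{O}_N(\vec{x}|v)\,T_N$, with the half-shift $S=q_1^{-\frac12\sum_i x_i\partial_{x_i}}$ stripped at the end using its invertibility and its commutation with the $H_k$, is exactly the right frame (the paper does the same, more tersely). The gap is in the heart of the argument, the evaluation of $\mathcal{O}_N T_N$. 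Your primary mechanism --- expand $\mathcal{O}_N$ as in \eqref{eq:54} and let each $q_1^{x_i\partial_{x_i}}$, $i\in I$, hit the zero mode $x_i^{P/\ln q_1}$ of $\Phi^{q_1}_{q_1,q_2}(x_i)$ to pull out $u_i$, then sum over $I$ to get $e_k(\vec{u})$ --- identifies where the eigenvalue \emph{would} come from, but it does not prove the eigenfunction property. The shift operators act on \emph{all} of the $x_i$-dependence of $T_N$: on the oscillator parts of the intertwiners, and on the crossings inside the triangle, each of which is itself an infinite sum \eqref{eq:1} of $\Phi\Phi^{*}$ pairs dressed by further shifts. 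One must show that after this action the whole infinite series reassembles into $T_N$ times a scalar. That self-reproduction is the entire content of the theorem, not ``normalization and shift bookkeeping''; verified directly it would amount to re-deriving nontrivial recursion identities for the series $T_N$, which your proposal never supplies.

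The paper closes this gap with the diagrammatic argument you mention only in one sentence, and your version of that sentence contains a wrong turn: you say the Fock--Fock vacuum scalars of Proposition~\ref{prop-R-hor-vac} ``cancel between entering and exiting in the spirit of the proof of Theorem~\ref{thm-O-comm}''. If they cancelled as in Theorem~\ref{thm-O-comm}, the eigenvalue would be $1$. The crux of the paper's proof is that here they do \emph{not} cancel: pulling the $v$-line below the $i$-th horizontal line produces $f_{q_2,q_3}^{q_1,q_2}(q_2 u_i/v)$ on one side of the diagram but $\bigl(f_{q_2,q_3}^{q_1,q_2}(u_i/v)\bigr)^{-1}$ on the other, because each intertwiner $\Phi^{q_1}_{q_1,q_2}:\mathcal{V}^{*}_{q_1}\otimes\mathcal{F}^{(1,0)}_{q_1,q_2}(u_i)\to\mathcal{F}^{(1,0)}_{q_1,q_2}(u_i/q_2)$ shifts the horizontal spectral parameter by a factor of $q_2$ between the two ends of the picture. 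The surviving ratios, computed from \eqref{eq:61}, give $\prod_{i=1}^N\bigl(1-\tfrac{1}{q_1}\tfrac{u_i}{v}\bigr)$, which is precisely the generating function of the eigenvalues $e_k(\vec{u})$; meanwhile, once the $v$-line has passed below all horizontal lines it crosses nothing and its vacuum matrix element is trivial, so the diagram returns to $T_N$ itself. In your write-up the eigenvalues are attributed to the zero modes and the surviving rational factor is left unidentified, so neither of your two routes, as stated, actually closes the proof.
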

\begin{proof}
  Consider the difference operator $\mathcal{O}_N(\vec{x}|v)$ defined in
  Eq.~\eqref{eq:46}. Let it act on the r.h.s.\ of
  Eq.~\eqref{eq:19}. We get the following diagram:
  \begin{equation}
    \label{eq:65}
    \mathcal{O}_N(\vec{x}|v) G_{q_1,q_2}(\vec{x})
    f^{\mathrm{tRS}}_N (\vec{x}|\vec{u}|q_1,q_2)
    =\quad\includegraphics[valign=c]{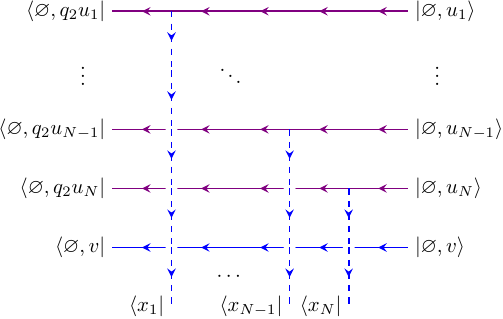}
  \end{equation}
  Similarly to the proof of Theorem~\ref{thm-O-comm} we insert the
  $R$-matrix and its inverse acting on
  $\mathcal{F}^{(1,0)}_{q_1,q_2}(u_N) \otimes
  \mathcal{F}^{(1,0)}_{q_2,q_3}(v)$ and move them to the left and
  right of the diagram respectively. The blue and lowest violet lines
  are exchanged in the process. After that we use
  Proposition~\ref{prop-R-hor-vac} and get a nontrivial prefactor
  depending on $\frac{u_N}{v}$ since the spectral parameter on
  $\mathcal{F}^{(1,0)}_{q_1,q_2}$ changes from to the right of the
  picture $u_N$ to $q_2 u_N$ to the left of it.
  \begin{multline}
    \label{eq:39}
    \mathcal{O}_N(\vec{x}|v) G_{q_1,q_2}(\vec{x})
    f^{\mathrm{tRS}}_N (\vec{x}|\vec{u}|q_1,q_2)=\\
    \\=\quad\includegraphics[valign=c]{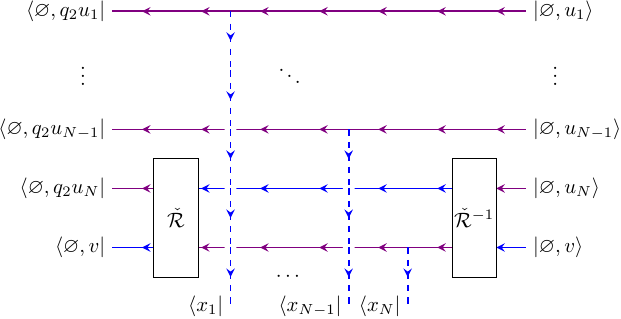}=\\
    \\=\quad f_{q_2,q_3}^{q_1,q_2} \left( \frac{q_2 u}{v} \right)
    \times \includegraphics[valign=c]{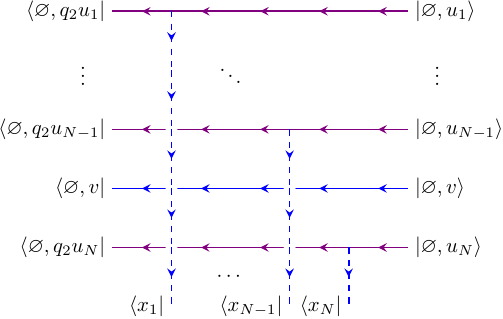}
    \times \left( f_{q_2,q_3}^{q_1,q_2} \left( \frac{u}{v} \right)
    \right)^{-1}
  \end{multline}
  Repeating the previous steps for all the other violet lines above
  the blue line we get:
  \begin{multline}
  \label{eq:64}
  \mathcal{O}_N(\vec{x}|v)G_{q_1,q_2}(\vec{x}) f^{\mathrm{tRS}}_N (\vec{x}|\vec{u}|q_1,q_2)
  =\\
  \\
  = \prod_{i=1}^N f_{q_2,q_3}^{q_1,q_2} \left( \frac{q_2 u_i}{v}
  \right)
  \times \includegraphics[valign=c]{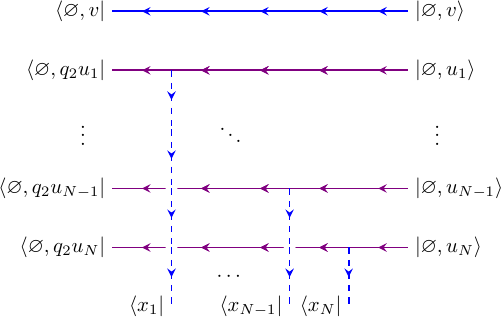}
  \times \prod_{i=1}^N \left( f_{q_2,q_3}^{q_1,q_2} \left(
      \frac{u_i}{v} \right)
  \right)^{-1}.
\end{multline}
Using the explicit form of the function $f_{q_2,q_3}^{q_1,q_2}(z)$ we
eventually find
  \begin{equation}
    \mathcal{O}_N(\vec{x}|v) G_{q_1,q_2}(\vec{x}) f^{\mathrm{tRS}}_N (\vec{x}|\vec{u}|q_1,q_2)
    = \prod_{i=1}^N \left( 1 - \frac{1}{q_1} \frac{u_i}{v} \right)
    G_{q_1,q_2}(\vec{x}) f^{\mathrm{tRS}}_N (\vec{x}|\vec{u}|q_1,q_2).
\end{equation}
Dividing both sides of Eq.~\eqref{eq:64} by $F_{q_1,q_2}(\vec{x})$,
using Theorem~\ref{thm-tRS-ham} and expanding in negative powers of
$v$ we obtain Eq.~\eqref{eq:32}.
\end{proof}

\section{Shiraishi functions from spiralling branes}
\label{sec:shir-funct-their}
In this section we give a proof of the theorem stated
in~\cite{Zenkevich:2023cza} that relates the Shiraishi
wavefunctions~\cite{Shir-func}, \cite{Langmann:2020utd} to certain
infinite systems of intertwiners of the quantum toroidal algebra
$\mathcal{A}$. The system of intertwiners has the form of an infinite
spiral winding around a cylinder.

We start by recalling the construction of the Shiraishi functions from
the system of screened affine vertex operators~\cite{Shir-func} in
sec.~\ref{sec:defin-shir-funct}. In sec.~\ref{sec:affine-vert-oper} we
reinterpret all the elements in the Shiraishi's construction in terms
of intertwiners of representations of the quantum toroidal algebra
$\mathcal{A}$.  To make the identification with~\cite{Shir-func},
\cite{Langmann:2020utd} easier in this section we use the parameters
$q = q_1$ and $t = q_2^{-1}$.

\subsection{The definition of Shiraishi functions}
\label{sec:defin-shir-funct}
The construction is based on the system of vertex operators and
screening currents corresponding to the affine root system
$\widehat{A}_{N-1}$. In this and the following section we assume $N
 \geq 3$.
\begin{dfn}
  Let $\kappa$, $q$ and $t$ be generic parameters. Let the ``weight
  type'' Heisenberg generators $\beta^i_n$, $i=0,\ldots,N-1$, $n \in
  \mathbb{Z} \backslash \{0\}$ satisfy the following commutation
  relations:
  \begin{align} [\beta^i_n, \beta^i_m] &= n \frac{1- t^n}{1-q^n} \frac{1
    - \left( \kappa \frac{q}{t} \right)^n}{1-\kappa^n} \delta_{n+m,0},
  \qquad i
  = 0,\ldots, N-1,\notag\\
  [\beta^i_n, \beta^j_m] &= n \frac{1- t^n}{1-q^n} \frac{1 - \left(
      \frac{q}{t} \right)^n}{1-\kappa^n} \kappa^{n
    \frac{j-i}{N}}\delta_{n+m,0}, \qquad i <j,  \label{eq:8}\\
  [\beta^i_n, \beta^j_m] &= n \kappa^n \frac{1- t^n}{1-q^n} \frac{1 -
    \left( \frac{q}{t} \right)^n}{1-\kappa^n} \kappa^{n \frac{j-i}{N}}
  \delta_{n+m,0}, \qquad i >j.\notag
\end{align}
Let $|0\rangle$ be the vacuum vector such that $\beta^i_n |0\rangle =
0$ for all $i = 0, \ldots, N-1$, $n > 0$ and let $\mathfrak{F}_N$ be
the vector space spanned by $\beta^{i_1}_{-n_1} \cdots
\beta^{i_k}_{-n_k} |0\rangle$ for $k \in \mathbb{Z}_{\geq 0}$, $i_j =
0,\ldots, N-1$, $n >0$.
\end{dfn}
Having the ``weight type'' generators $\beta^i_n$ it is natural to
define the (unscreened) affine vertex operators.
\begin{dfn}
  We call
  \begin{equation}
    \label{eq:10}
    \phi_i(z) = :\exp \left( \sum_{n \neq 0} \frac{1}{n} \beta^i_n
      z^{-n} \right):, \qquad i = 0,\ldots, N-1,
  \end{equation}
  affine vertex operators, where $:(\ldots):$ stands for the normal
  ordered product in which $\beta^i_n$ with positive $n$ are to the
  right of those with negative $n$.
\end{dfn}
We also introduces the differences of the ``weight type'' Heisenberg
generators.
\begin{dfn}
  Let the ``root type'' Heisenberg generators $\alpha^i_n$ be given by
  \begin{equation}
    \label{eq:9}
    \alpha^i_n =
    \begin{cases}
      \kappa^{-\frac{n}{N}} \beta^{N-1}_n - \beta^0_n, & i = 0,\\
      \kappa^{-\frac{n}{N}} \beta^{i-1}_n - \beta^i_n, & i =1, \ldots,
      N-1.
    \end{cases}
  \end{equation}
\end{dfn}
The root type generators are used to build the affine screening
currents.
\begin{dfn}
  We call the following vertex operators
  \begin{equation}
    \label{eq:11}
    S_i(z) = :\exp \left( -\sum_{n \neq 0} \frac{1}{n} \alpha^i_n
      z^{-n} \right):, \qquad i = 0,\ldots, N-1,
  \end{equation}
  affine screening currents.
\end{dfn} 

Combining the affine vertex operators~\eqref{eq:10} with affine
screening currents~\eqref{eq:11} one defines the screened affine
vertex operators
\begin{dfn}
  Let $x_i \in \mathbb{C}^*$, $i\in \mathbb{Z}$ be a
  set of parameters such that $x_{i+N} = x_i$. Thus, $x_1, \ldots,
  x_N$ define all $x_i$. Screened affine vertex operators are given by
  \begin{multline}
    \label{eq:12}
    \Psi^i(z|\vec{x},p) =\\
    = \sum_{\lambda} \left( \frac{\left( \frac{q}{t} ; q
        \right)_{\infty}}{(q;q)_{\infty}} \right)^{l(\lambda)}
    \prod_{k \geq 1} \left( p^{\frac{1}{N}}
      \frac{x_{N-i+k}}{x_{N-i+k-1}} \right)^{\lambda_k}
    \phi_{(i-l(\lambda))\mod N} \left( \kappa^{\frac{l(\lambda)+1}{N}}z
    \right) \prod_{j=1}^{
      \begin{smallmatrix}
        \leftarrow\\
        l(\lambda)
      \end{smallmatrix}
    } S_{(i-j + 1) \mod N} \left( \kappa^{\frac{j}{N}} q^{\lambda_j} z \right)
  \end{multline}
  where the sum is over all partitions (Young diagrams) $\lambda$,
  $l(\lambda)$ is the number of parts of the partition $\lambda$ (the
  length of the Young diagram) and the ordered product is defined as
  $\prod\limits_{j=1}^{
      \begin{smallmatrix}
        \leftarrow\\
        l
      \end{smallmatrix}
 } A_j = A_l \cdots A_1 $.
\end{dfn}
Finally, one can define the Shiraishi functions as the vacuum matrix
element of the product of screened affine vertex operators.
\begin{dfn}[\cite{Shir-func}]
  Define the shift operators $\mathbf{r}_i$ $i=1,\ldots,N$ acting on
  the $x_i$ variables as
  \begin{equation}
    \label{eq:15}
    \mathbf{r}_i x_j =
    \begin{cases}
      x_j, & j\leq i,\\
      tx_j, & j > i.
    \end{cases}
  \end{equation}
  Let
  \begin{equation}
    \label{eq:13}
    \widehat{f}_N(\vec{x}, p^{\frac{1}{N}} | \vec{s},
    \kappa^{\frac{1}{N}}|q,t) = \prod_{i<j} \frac{\left(
        t \frac{\tilde{s}_j}{\tilde{s}_i} ; q,\kappa \right)_{\infty} \left(
        \frac{q}{t} \frac{\tilde{s}_j}{\tilde{s}_i} ; q,\kappa \right)_{\infty}}{\left(
        \frac{\tilde{s}_j}{\tilde{s}_i} ; q,\kappa \right)_{\infty} \left(
        q \frac{\tilde{s}_j}{\tilde{s}_i} ; q,\kappa \right)_{\infty}}
    \langle 0 | \Psi^0(s_1| \mathbf{r}_N \vec{x}, p) \Psi^1(s_2|
    \mathbf{r}_{N-1} \vec{x}, p) \cdots \Psi^{N-1}(s_N| \mathbf{r}_1 \vec{x}, p) | 0\rangle,
  \end{equation}
  where $\tilde{s}_i = \kappa^{\frac{i}{N}} s_i$ and
  \begin{equation}
    \label{eq:14}
    (x;q,\kappa)_{\infty} = \prod_{n,m \geq 0} (1- q^n \kappa^m x).
  \end{equation}
  We call $\widehat{f}_N(\vec{x}, p^{\frac{1}{N}} | \vec{s},
  \kappa^{\frac{1}{N}}|q,t)$ the Shiraishi functions.
\label{dfn-shir-funct}
\end{dfn}

There are explicit expressions for Shiraishi's functions as series in
$\frac{x_i}{x_{i-1}}$ but we will not need them here.

Shiraishi's construction is very explicit but not too illuminating: at
more than one step one is tempted to ask why this concrete choice of
the parameters or relations for the operators were made and not the
other. In the next section we offer a way to view the affine vertex
operators and screenings as compositions of some elementary building
blocks, namely the intertwiners of the quantum toroidal algebra.

\subsection{Affine vertex operators and screening currents from quantum
toroidal algebra}
\label{sec:affine-vert-oper}

We construct both the ``weight type''~\eqref{eq:8} and ``root
type''~\eqref{eq:9} Heisenberg generators from $N$ independent
Heisenberg generators $a^{(i)}_n$ acting the tensor product
$\bigotimes\limits_{i=1}^N \mathcal{F}_{q,t^{-1}}^{(1,0)}\left(u_i
\right)$ of $N$ horizontal Fock representations of the algebra
$\mathcal{A}$ for certain choice of $u_i$. See
Appendix~\ref{sec:horiz-fock-repr} for the definitions of
$\mathcal{F}_{q,t^{-1}}^{(1,0)}\left(u_i \right)$. We have
\begin{equation}
  \label{eq:16}
  [a^{(i)}_n, a^{(j)}_m] = n \delta_{i,j} \frac{1-q^{|n|}}{1-t^{|n|}} \delta_{n+m,0}.
\end{equation}
\begin{lem} Given commutation relations~\eqref{eq:16} the combinations
  \begin{multline}
    \label{eq:17}
    \beta^i_n = \frac{1-t^n}{1-q^{-n}} \kappa^{-n\frac{i}{N}} \Biggl[
    - \sum_{j=1}^N a^{(j)}_n \left( \frac{t}{q} \right)^{n
      \frac{j-1}{2}} \frac{1 - \left( \frac{t}{q}
      \right)^{\frac{n+|n|}{2}} }{1 - \kappa^n} +\\
    + \sum_{k=1}^i \left( \frac{t}{q} \right)^{n \frac{k-1}{2}}
    a^{(k)}_n \left( 1 - \left( \frac{t}{q} \right)^{\frac{n+|n|}{2}}
    \right) + \left( \frac{t}{q} \right)^{n \frac{i}{2}} a^{(i+1)}_n
    \Biggr],
  \end{multline}
  for $i = 0,\ldots, N-1$ satisfy the commutation
  relations~\eqref{eq:8}.  \label{lem-1}
\end{lem}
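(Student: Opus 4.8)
The plan is to exploit the fact that each $\beta^i_n$ is, by~\eqref{eq:17}, a linear combination of the mutually commuting Heisenberg modes $a^{(l)}_n$. Consequently $[\beta^i_n,\beta^j_m]$ is automatically proportional to $\delta_{n+m,0}$, and it suffices to compute $[\beta^i_n,\beta^j_{-n}]$. Writing $\beta^i_n = \frac{1-t^n}{1-q^{-n}}\kappa^{-ni/N}\sum_{l=1}^N C^i_l(n)\,a^{(l)}_n$ with coefficients $C^i_l(n)$ read off directly from~\eqref{eq:17}, and using $[a^{(l)}_n,a^{(l)}_{-n}] = n\frac{1-q^{|n|}}{1-t^{|n|}}$ from~\eqref{eq:16}, the entire computation collapses to evaluating the single diagonal sum $\sum_{l=1}^N C^i_l(n)\,C^j_l(-n)$.

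The observation that makes this tractable is the chirality encoded in the factor $1-(t/q)^{(n+|n|)/2}$ appearing in~\eqref{eq:17}: its exponent equals $n$ for $n>0$ and $0$ for $n<0$, so this factor vanishes identically when $n<0$. Hence for $n<0$ the first two groups of terms in~\eqref{eq:17} drop out and $\beta^i_n$ reduces to the single term $\frac{1-t^n}{1-q^{-n}}\kappa^{-ni/N}(t/q)^{ni/2}\,a^{(i+1)}_n$. By antisymmetry of the commutator I may assume $n>0$; then $\beta^j_{-n}$ is of this collapsed form, so only the index $l=j+1$ survives in the diagonal sum and one obtains the compact expression
\begin{equation*}
[\beta^i_n,\beta^j_{-n}] = n\,\frac{1-t^{-n}}{1-q^{-n}}\,\kappa^{n(j-i)/N}\,(t/q)^{-nj/2}\,C^i_{j+1}(n).
\end{equation*}

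It then remains to substitute $C^i_{j+1}(n)$ for $n>0$. Its three constituent terms separate exactly according to whether $j<i$, $j=i$, or $j>i$, which is precisely the three-case structure of~\eqref{eq:8}. Each case is a one-line simplification using the identity $\frac{1-t^{-n}}{1-q^{-n}}=(q/t)^n\frac{1-t^n}{1-q^n}$ together with $(q/t)^n\kappa^n=(\kappa q/t)^n$. For example, in the diagonal case $j=i$ one computes $(t/q)^{-ni/2}C^i_{i+1}(n) = \frac{(t/q)^n-\kappa^n}{1-\kappa^n}$, and multiplying by the prefactor yields exactly $n\frac{1-t^n}{1-q^n}\frac{1-(\kappa q/t)^n}{1-\kappa^n}$; the cases $j<i$ and $j>i$ produce in the same way the two off-diagonal relations of~\eqref{eq:8}, with the expected extra factor $\kappa^n$ when $j<i$ (i.e.\ $i>j$).

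I anticipate no genuine obstacle beyond careful bookkeeping of signs and of the half-integer powers of $t/q$ and of $\kappa^{1/N}$; the only conceptual point is recognizing the $n\gtrless 0$ asymmetry, after which each relation falls out of a single algebraic step. Finally, the $n<0$ branch (in which $\beta^i_n$ is collapsed while $\beta^j_{-n}$ is the full expression) requires no separate computation, since it follows from $[\beta^i_n,\beta^j_{-n}] = -[\beta^j_{-n},\beta^i_n]$.
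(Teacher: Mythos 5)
Your proposal is correct: the chirality observation that $1-(t/q)^{(n+|n|)/2}$ kills all but the $a^{(i+1)}_n$ term for $n<0$ does collapse the computation to the single diagonal sum you describe, and your three case evaluations of $C^i_{j+1}(n)$ reproduce exactly the three lines of~\eqref{eq:8} (I verified each). The paper states Lemma~\ref{lem-1} without proof, treating it as a direct Heisenberg-algebra verification, which is precisely what you have carried out.
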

\begin{cor}
  The Fock space generated by the ``weight type'' generators is
  isomorphic to the tensor product of horizontal Fock representations
  of $\mathcal{A}$,
  \begin{equation}
    \label{eq:35}
    \mathfrak{F}_N \simeq \bigotimes\limits_{i=1}^N
    \mathcal{F}_{q,t^{-1}}^{(1,0)}\left(u_i \right),
\end{equation}
and $|0\rangle = \bigotimes\limits_{i=1}^N \left|\varnothing,
  u_i \right\rangle$ for generic $u_i \in \mathbb{C}*$.
\end{cor}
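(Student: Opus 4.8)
The plan is to deduce the isomorphism directly from the structure of the defining formula~\eqref{eq:17}, reading it as an invertible linear change of generators performed independently in each Fourier mode. For each fixed $n \neq 0$, \eqref{eq:17} expresses the $N$ weight-type generators $\beta^0_n,\ldots,\beta^{N-1}_n$ as linear combinations of the $N$ independent generators $a^{(1)}_n,\ldots,a^{(N)}_n$, with coefficients that are rational functions of $q,t,\kappa$ depending only on $n$ (no mixing of distinct modes). Writing $\beta^{i}_n = \sum_{j=1}^N (M_n)_{i+1,j}\, a^{(j)}_n$ for an $N\times N$ matrix $M_n$, the whole statement reduces to the claim that $\det M_n \neq 0$ for generic parameters and every $n \neq 0$. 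Once this is known, $\{\beta^i_n\}$ and $\{a^{(j)}_n\}$ generate the same Heisenberg operators mode by mode, and since $M_n$ preserves the mode number the resulting identification of Fock spaces is graded.

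First I would dispose of the modes $n<0$, where the transformation is diagonal: for $n<0$ one has $\tfrac{n+|n|}{2}=0$, so both factors $1-(t/q)^{(n+|n|)/2}$ in the first two terms of~\eqref{eq:17} vanish identically and only the last term survives, giving $\beta^i_n \propto (t/q)^{ni/2}\, a^{(i+1)}_n$; hence $M_n$ is diagonal with manifestly nonzero entries for generic $q,t,\kappa$. For $n>0$ I would strip off the nonvanishing row prefactor $\tfrac{1-t^n}{1-q^{-n}}\kappa^{-ni/N}$ and the column scaling $(t/q)^{n(j-1)/2}$, which reduces $\det M_n$ up to nonzero factors to $\det K$, where $K$ has $1-\theta$ on the diagonal, $\sigma-\theta$ below it, and $-\theta$ above it, with $\sigma = 1-(t/q)^n$ and $\theta=\sigma/(1-\kappa^n)$. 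The row operations $R_i\to R_i - R_{i-1}$ bring $K$ to an almost bidiagonal form; setting $\theta=0$ makes $K$ lower triangular with unit diagonal so that $\det K=1$ there, whence $\det K$ is a polynomial in $\theta,\sigma$ that cannot vanish identically (for $N=3$ one finds $\det K = 1-\theta(\sigma^2-3\sigma+3)$), and is therefore nonzero for generic $q,t,\kappa$.

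With $M_n$ invertible for all $n\neq 0$ the isomorphism follows. For each $n$ the negative modes $\beta^i_{-n}$ and $a^{(j)}_{-n}$ span the same subspace, so applying polynomials in the $\beta^i_{-n}$ to a vacuum produces the same space as applying polynomials in the $a^{(j)}_{-n}$; the two vacua coincide because $\beta^i_n|0\rangle=0$ for all $i$ and all $n>0$ is equivalent, by invertibility of $M_n$, to $a^{(j)}_n|0\rangle=0$ for all $j$, which is exactly the condition defining $\bigotimes_i|\varnothing,u_i\rangle$. Since the spectral parameters $u_i$ enter the Fock representations only through the zero modes, which are decoupled from the $\beta^i_n$ with $n\neq 0$, any generic choice of $u_i\in\mathbb{C}^{*}$ yields an admissible identification, establishing~\eqref{eq:35}.

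The one genuinely nontrivial step is the nondegeneracy check for $n>0$. The cleanest way to organize it is to use Lemma~\ref{lem-1} as input rather than computing $\det M_n$ head-on: since the $\beta^i_n$ already satisfy~\eqref{eq:8}, their Gram matrix $G_n$ (the matrix of $[\beta^i_n,\beta^j_{-n}]/n$) is the explicit right-hand side of~\eqref{eq:8}, and \eqref{eq:16} gives $G_n = \tfrac{1-q^{|n|}}{1-t^{|n|}}\, M_n M_{-n}^{\top}$. As $M_{-n}$ is diagonal and invertible, nondegeneracy of this explicit Gram matrix is equivalent to $\det M_n\neq 0$, so the entire argument rests on verifying that the determinant of the matrix in~\eqref{eq:8} is not identically zero as a function of $q,t,\kappa$.
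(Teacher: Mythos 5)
Your proof is correct and takes essentially the route the paper leaves implicit: the Corollary is stated there without proof as an immediate consequence of Lemma~\ref{lem-1}, the unwritten point being precisely the mode-by-mode invertibility of the change of basis~\eqref{eq:17}, which you verify (diagonal for $n<0$, and via the $\theta=0$ specialization for $n>0$), together with the matching of annihilation conditions on the vacua. For the record, your determinant evaluates in closed form to $\det K = \bigl((t/q)^{nN}-\kappa^n\bigr)/\bigl(1-\kappa^n\bigr)$, so the exact genericity condition is $\kappa^n \neq (t/q)^{nN}$ and $\kappa^n \neq 1$ for all $n \geq 1$, a countable family of proper conditions consistent with your generic-parameter claim.
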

The values of the spectral parameters $u_i$ in Eq.~\eqref{eq:35}
cannot be deduced from the identification~\eqref{eq:17} of the
Heisenberg generators alone. In Proposition~\ref{prop-aff-screened}
and Theorem~\ref{thm-shir-main} we specify $u_i$ in such a way that
the zero modes of a certain system of intertwiners of $\mathcal{A}$
reproduce the prefactors in the affine screened vertex
operators~\eqref{eq:12}.

The expression for the ``root type'' generators $\alpha^i_n$ in terms
of $a^{(i)}_n$ follows from the definition~\eqref{eq:9}:
\begin{equation}
  \label{eq:18}
  \alpha^i_n =
  \begin{cases}
    \frac{1 - t^n}{1 - q^{-n}} \left( \left( \frac{t}{q}
      \right)^{\frac{|n|+nN}{2}} \kappa^{-n} a^{(N)}_n - a^{(1)}_n
    \right), & i=0,\\
    \frac{1 - t^n}{1 - q^{-n}} \kappa^{-n \frac{i}{N}} \left(
      \frac{t}{q} \right)^{n \frac{i}{2}} \left( \left( \frac{t}{q}
      \right)^{\frac{|n|}{2}} a^{(i)}_n - a^{(i+1)}_n \right), & i=1,
    \ldots, N-1.
  \end{cases}
\end{equation}
Let us introduce the notation for a particular matrix element of the
$R$-matrix $\check{\mathcal{R}}|_{\mathcal{V}_q \otimes
  \mathcal{F}_{q,t^{-1}}^{(1,0)}(u)}$ corresponding to $n=0$ in
Eq.~\eqref{eq:31}.
\begin{dfn} Let $a_n$ satisfy~\eqref{eq:63}. We call
  \begin{multline}
    \label{eq:27}
    \bar{\mathbf{x}}_{q,t^{-1}}^q(w) \stackrel{\mathrm{def}}{=}
    (\ldots) \otimes \left\langle \left.  \sqrt{q/t}\, w
        \left|\check{\mathcal{R}}|_{\mathcal{V}_q \otimes
            \mathcal{F}_{q,t^{-1}}^{(1,0)}(u)} \right. \right| w
    \right\rangle \otimes
    (\ldots)=\\
    =\quad\includegraphics[valign=c]{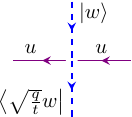}\quad
    = \exp \left[ \sum_{n \geq 1} \frac{w^{-n}}{n} \frac{1 - t^n}{1 -
        q^{-n}} \left( 1 - \left( \frac{t}{q} \right)^n \right) a_n
    \right]
  \end{multline}
  the crossing operator. 
\end{dfn}

To distinguish between the $R$-matrices, crossing operators and
intertwiners acting on each of $N$ horizontal Fock spaces
$\mathcal{F}_{q,t^{-1}}^{(1,0)}\left(u_i \right)$ we give them an
extra label $i = 1,\ldots, N$:
\begin{align}
  \label{eq:22}
    \check{\mathcal{R}}^{(i)} &\stackrel{\mathrm{def}}{=} \underbrace{1
  \otimes \ldots \otimes 1}_{i-1} \otimes
\check{\mathcal{R}}|_{\mathcal{V}_q \otimes
            \mathcal{F}_{q,t^{-1}}^{(1,0)}(u_i)} \otimes 1 \otimes
  \ldots \otimes 1,\\
  \bar{\mathbf{x}}_{q,t^{-1}}^{q,(i)}(w) &\stackrel{\mathrm{def}}{=} \underbrace{1
  \otimes \ldots \otimes 1}_{i-1} \otimes
  \bar{\mathbf{x}}_{q,t^{-1}}^q(w) \otimes 1 \otimes
  \ldots \otimes 1,\\
  \Phi^{q,(i)}_{q,t^{-1}}(w) &\stackrel{\mathrm{def}}{=} \underbrace{1
  \otimes \ldots \otimes 1}_{i-1} \otimes \Phi^{q}_{q,t^{-1}}(w) \otimes 1
\otimes \ldots \otimes 1,\\
  \Phi^{*q,(i)}_{q,t^{-1}}(w) &\stackrel{\mathrm{def}}{=} \underbrace{1
  \otimes \ldots \otimes 1}_{i-1} \otimes \Phi^{*q}_{q,t^{-1}}(w) \otimes 1
  \otimes \ldots \otimes 1,
\end{align}
e.g.~$\Phi^{(i)}$ acts only on the $i$-th horizontal Fock space and
therefore is written in terms of $a^{(i)}_n$ generators.

From Lemma~\ref{lem-1} and the
definitions~\eqref{eq:21},~\eqref{eq:27} we notice that the affine
vertex operators $\phi^i(w)$ for $i=0,\ldots,N-1$ can be expressed as
an infinite products of crossing operators
$\bar{\mathbf{x}}_{q,t^{-1}}^q$ and a single intertwiner
$\Phi^q_{q,t^{-1}}$.
\begin{lem}
  We have
\begin{multline}
  \label{eq:20}
  N(q,t,\kappa) \phi^i_0 \phi_i(w) =  \left\{ \prod_{m \geq 1} \prod_{j=1}^N
    \bar{\mathbf{x}}_{q,t^{-1}}^{q,(j)} \left( \kappa^{m+ \frac{i}{N}}
      \left( \frac{q}{t} \right)^{\frac{j-1}{2}}  w \right)
  \right\}\times\\
  \times \prod_{k=1}^i  \bar{\mathbf{x}}_{q,t^{-1}}^{q,(k)} \left( \kappa^{\frac{i}{N}}
    \left( \frac{q}{t} \right)^{\frac{k-1}{2}}  w \right) \Phi^{q,(i+1)}_{q,t^{-1}} \left(
    \kappa^{\frac{i}{N}} \left( \frac{q}{t} \right)^{\frac{i}{2}} w \right), 
\end{multline}
where $\phi^i_0 = t^{Q_i} \left( \kappa^{\frac{i}{N}}
  (q/t)^{\frac{i}{2}} w \right)^{\frac{P_i}{\ln q}}$ is the zero mode
part and
\begin{equation}
  \label{eq:38}
  N(q,t,\kappa) = \frac{\left(\kappa^{-1} ; q,\kappa^{-1}
    \right)_{\infty} \left(\kappa^{-1} q^{-1} ; q,\kappa^{-1}
    \right)_{\infty}}{\left(\kappa^{-1} \frac{t}{q} ; q,\kappa^{-1}
    \right)_{\infty}\left(\kappa^{-1} t^{-1} ; q,\kappa^{-1} \right)_{\infty}}.
\end{equation}
\label{lem-phi}
\end{lem}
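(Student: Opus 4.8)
The plan is to evaluate the right-hand side of~\eqref{eq:20} directly in terms of the independent oscillators $a^{(j)}_n$ of~\eqref{eq:16}, bring the whole product to normal-ordered form, and match it against the definition $\phi_i(w)=\,:\exp\bigl(\sum_{n\neq 0}\tfrac{1}{n}\beta^i_n w^{-n}\bigr):$ with $\beta^i_n$ given by the combination~\eqref{eq:17} of Lemma~\ref{lem-1}. The key simplification is that each crossing operator $\bar{\mathbf{x}}^{q,(j)}_{q,t^{-1}}$ of~\eqref{eq:27} carries only annihilation modes $a^{(j)}_n$ with $n>0$ (so the crossing operators all commute among themselves), whereas the single intertwiner $\Phi^{q,(i+1)}_{q,t^{-1}}$ of~\eqref{eq:21} supplies both creation and annihilation modes together with the zero modes. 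The comparison therefore splits cleanly into positive modes, negative modes, and the zero-mode/normalization part.

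First I would treat the positive modes. For each fixed factor $j$, the coefficient of $a^{(j)}_n$ collects the annihilation contributions of every crossing operator on that factor. In the infinite product $\prod_{m\geq 1}$ the arguments differ only by a power $\kappa^{m}$, so summing the geometric series $\sum_{m\geq 1}\kappa^{-mn}=(\kappa^{n}-1)^{-1}$ produces exactly the factor $(1-\kappa^{n})^{-1}$ multiplying the first term in the bracket of~\eqref{eq:17}; the finite product $\prod_{k=1}^{i}$ reproduces the second term, and the annihilation part of $\Phi^{q,(i+1)}_{q,t^{-1}}$ reproduces the lone $(t/q)^{ni/2}a^{(i+1)}_n$ third term. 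Verifying that the spectral prefactors $\kappa^{i/N}(q/t)^{(j-1)/2}$ line up with the powers $\kappa^{-ni/N}(t/q)^{n(j-1)/2}$ in~\eqref{eq:17} is routine bookkeeping.

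The negative modes are where the structure of~\eqref{eq:17} becomes transparent. For $n<0$ one has $\tfrac{n+|n|}{2}=0$, so every factor $1-(t/q)^{(n+|n|)/2}$ in~\eqref{eq:17} vanishes and only the last term survives, giving $\beta^i_n\propto a^{(i+1)}_n$ alone. This matches the fact that on the right-hand side the only source of creation modes is the creation part of $\Phi^{q,(i+1)}_{q,t^{-1}}$; I would check that its coefficient, with argument $\kappa^{i/N}(q/t)^{i/2}w$, reproduces $\tfrac{1}{n}\beta^i_n$ for $n<0$. The zero modes likewise originate solely from $\Phi^{q,(i+1)}_{q,t^{-1}}$ and assemble into the prefactor $\phi^i_0$.

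The hard part is the overall constant $N(q,t,\kappa)$, which arises entirely from normal ordering. The creation part of $\Phi^{q,(i+1)}_{q,t^{-1}}$ must be commuted to the far left through the annihilation modes of those crossing operators acting on the \emph{same} factor $i+1$ --- precisely the $j=i+1$ terms of the infinite product, since the finite product never reaches factor $i+1$ --- and each commutator~\eqref{eq:16} produces a c-number. Collecting these gives a double sum $\sum_{m\geq 1}\sum_{n\geq 1}$ whose $m$-sum is again geometric, after which the remaining $n$-sum must be recognized as the logarithm of the ratio of double $q$-Pochhammer symbols $(\cdot\,;q,\kappa^{-1})_\infty$ in~\eqref{eq:38}. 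Carrying out this resummation while keeping precise track of the $q$-, $t$- and $\kappa$-dependence is the main computational obstacle; once the geometric series over $m$ is performed, all the mode-matching above follows essentially automatically.
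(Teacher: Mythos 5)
Your strategy coincides with the paper's own (implicit) derivation: the paper offers no proof of Lemma~\ref{lem-phi} beyond the observation that it follows from Lemma~\ref{lem-1} together with the definitions~\eqref{eq:21} and~\eqref{eq:27}, plus the same remark you make that the crossing operators carry only annihilation modes and hence commute. Your mode-matching is correct: the geometric sum $\sum_{m\geq1}\kappa^{-mn}=(\kappa^{n}-1)^{-1}$ reproduces the first term in the bracket of~\eqref{eq:17}, the finite product the second, the annihilation half of $\Phi^{q,(i+1)}_{q,t^{-1}}$ the third, and for $n<0$ only the $a^{(i+1)}_n$ term of~\eqref{eq:17} survives, matching the creation half of $\Phi^{q,(i+1)}_{q,t^{-1}}$ and the zero modes.

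The gap is the step you explicitly defer, and it does not go through as asserted. The only nontrivial contractions are between the creation half of $\Phi^{q,(i+1)}_{q,t^{-1}}(v)$, $v=\kappa^{i/N}(q/t)^{i/2}w$, and the $j=i+1$ crossing operators at arguments $\kappa^{m}v$; using~\eqref{eq:16}, each such contraction gives the c-number $-\sum_{n\geq1}\frac{\kappa^{-mn}}{n}\frac{(1-(t/q)^{n})(1-t^{-n})}{1-q^{-n}}$, so after the geometric $m$-sum the normal-ordering constant is
\begin{equation*}
\exp\left[-\sum_{n\geq1}\frac{1}{n\,(\kappa^{n}-1)}\,\frac{\left(1-(t/q)^{n}\right)\left(1-t^{-n}\right)}{1-q^{-n}}\right]
=\frac{\left(\kappa^{-1}t;q,\kappa^{-1}\right)_{\infty}\left(\kappa^{-1}\frac{q}{t};q,\kappa^{-1}\right)_{\infty}}{\left(\kappa^{-1};q,\kappa^{-1}\right)_{\infty}\left(\kappa^{-1}q;q,\kappa^{-1}\right)_{\infty}}.
\end{equation*}
This is indeed a ratio of double Pochhammer symbols, but it is \emph{not} the expression~\eqref{eq:38}: the arguments there ($\kappa^{-1}$, $\kappa^{-1}q^{-1}$ over $\kappa^{-1}t/q$, $\kappa^{-1}t^{-1}$) are related to the ones above by $x\mapsto\kappa^{-2}x^{-1}$ with numerator and denominator exchanged, and the two functions are genuinely different: to first order in $\kappa^{-1}$ the constant computed above equals $1+\kappa^{-1}\frac{(t-q)(1-t)}{t(1-q)}+O(\kappa^{-2})$, while~\eqref{eq:38} equals $1-\kappa^{-1}\frac{(t-q)(1-t)}{qt(1-q)}+O(\kappa^{-2})$, the linear terms differing by a factor $-1/q$. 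So the claim that the $n$-sum ``must be recognized as'' \eqref{eq:38} fails at face value. Since the crossing operators commute among themselves and act on distinct Fock factors except for the $j=i+1$ tower, there are no further contributions to hunt for; completing your proof therefore requires either reconciling the constant above with~\eqref{eq:38} by an identity (no standard double-Pochhammer reflection does this), or concluding that~\eqref{eq:38} as printed must be corrected to the expression displayed above. Either way, the one genuinely delicate point of the lemma --- the overall normalization --- is exactly what the proposal leaves unresolved.
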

Note that the ordering in the products in Eq.~\eqref{eq:20} is not
important since the crossing operators contain only positive
Heisenberg generators. Recalling the setup of~\cite{Zenkevich:2023cza}
we can interpret the infinite product of crossing operators in
Eq.~\eqref{eq:20} graphically as a spiralling ``tail'' attached to the
intertwiner $\Phi$ as follows:
\begin{equation}
  \label{eq:25}
  \phi^i_0 \phi_i \left( \kappa^{-\frac{i}{N}} \left( \frac{q}{t}
    \right)^{-\frac{i}{2}} w \right) =   \quad  \includegraphics[valign=c]{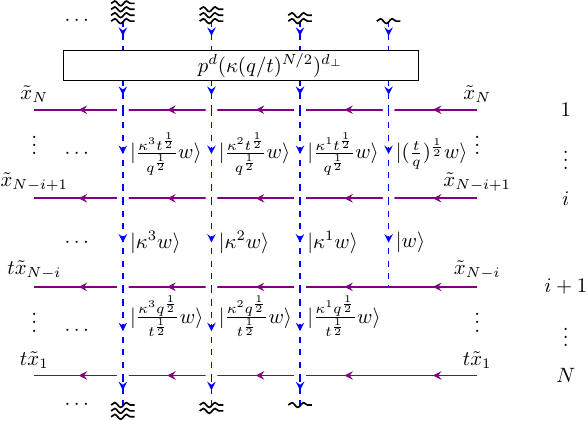}
\end{equation}
where $\tilde{x}_i = p^{\frac{i}{N}} x_i$ and the grading operators
$d$ and $d_{\perp}$ are defined in
Appendix~\ref{sec:quant-toro-algebra}. Note that in the r.h.s.\ of
Eq.~\eqref{eq:25} we assume that there is \emph{no summation} over the
intermediate states in the vector representations $\mathcal{V}_q$
(vertical dashed lines). To emphasize this we indicate a ket
\emph{state} over each intermediate dashed line. We will see in a
moment that the \emph{screened} affine vertex operator will correspond
precisely to adding back the sum over all the rest of the intermediate
states in the vector representations.

The relation~\eqref{eq:18} between the ``root type'' generators and
$a_n^{(i)}$ implies a simple construction of affine screening
currents~\eqref{eq:11} in terms of $\Phi$ and $\Phi^{*}$
intertwiners~\eqref{eq:21},~\eqref{eq:23}.
\begin{lem}
  We have
  \begin{equation}
    \label{eq:24}
    S^i_0 S_i(w) =
    \begin{cases}
      \Phi^{q,(1)}_{q,t^{-1}}(  w )
      \Phi^{*q,(N)}_{q,t^{-1}} \left( \kappa \left(
          \frac{q}{t} \right)^{\frac{N}{2}} w \right), & i =0,\\
      \Phi^{q,(i+1)}_{q,t^{-1}}\left( \kappa^{\frac{i}{N}} \left(
          \frac{q}{t} \right)^{\frac{i}{2}} w \right)
      \Phi^{*q,(i)}_{q,t^{-1}} \left( \kappa^{\frac{i}{N}} \left(
          \frac{q}{t} \right)^{\frac{i}{2}} w \right), & i=1,\ldots, N-1.
    \end{cases}
  \end{equation}
  where
  \begin{equation}
    \label{eq:26}
    S^i_0 =
    \begin{cases}
      t^{Q_1-Q_N}  w^{\frac{P_1 - P_N + \ln p + \ln t}{\ln
      q}} \left( \kappa (q/t)^{\frac{N}{2}} \right)^{\frac{\ln t -
      P_N}{\ln q}}, &i=0,\\
      t^{Q_{i+1}-Q_i} \left( \kappa^{\frac{i}{N}} (q/t)^{\frac{i}{2}}w \right)^{\frac{P_{i+1} - P_i + \ln t}{\ln
      q}},& i = 1,\ldots,N-1.
    \end{cases}
  \end{equation}
  is the zero mode part.
  \label{lem-S}
\end{lem}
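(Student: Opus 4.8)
The plan is to verify the two claimed operator identities by a direct oscillator-by-oscillator comparison of both sides, treating the finite roots $i=1,\ldots,N-1$ and the affine root $i=0$ separately, since the expression \eqref{eq:18} for $\alpha^i_n$ has a qualitatively different form in the two cases. The whole computation is made tractable by the observation that, by \eqref{eq:22} and the displayed definitions, $\Phi^{q,(i+1)}_{q,t^{-1}}$ involves only the generators $a^{(i+1)}_n$ while $\Phi^{*q,(i)}_{q,t^{-1}}$ involves only $a^{(i)}_n$; since these act on distinct tensor factors they commute, so the product is automatically normal-ordered and its exponent is just the sum of the two single-space exponents.

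For $i=1,\ldots,N-1$ I would set $z=\kappa^{i/N}(q/t)^{i/2}w$ and expand the right-hand side $\Phi^{q,(i+1)}_{q,t^{-1}}(z)\Phi^{*q,(i)}_{q,t^{-1}}(z)$ using \eqref{eq:21} and \eqref{eq:23} with $q_1=q$, $q_2=t^{-1}$, $q_3=t/q$. It then remains to read off the coefficient of each $a^{(i)}_n$ and $a^{(i+1)}_n$, $n\in\mathbb{Z}\setminus\{0\}$, and match it against $-\tfrac{1}{n}\alpha^i_n w^{-n}$ obtained from \eqref{eq:11} and \eqref{eq:18}. The factor $q_3^{n/2}=(t/q)^{n/2}$ carried by $\Phi^{*}$ is precisely what reproduces the relative weight $(t/q)^{|n|/2}$ multiplying $a^{(i)}_n$ in $\alpha^i_n$, while the common prefactor $\kappa^{-ni/N}(t/q)^{ni/2}$ is produced by evaluating both intertwiners at the shifted argument $z$; this explains why a single argument suffices for both factors in the finite-root case.

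For the affine root $i=0$ I would repeat the computation with the asymmetric arguments $w$ and $\kappa(q/t)^{N/2}w$. Here the matching is what \emph{forces} these arguments: equating the $a^{(N)}_n$ coefficient requires the $\Phi^{*}$ contribution $z_N^{-n}(t/q)^{n/2}$ to reproduce the weight $(t/q)^{(|n|+nN)/2}\kappa^{-n}$ attached to $a^{(N)}_n$ in $\alpha^0_n$, which fixes $z_N=\kappa(q/t)^{N/2}w$, whereas the $a^{(1)}_n$ coefficient matches automatically at argument $w$.

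Finally I would assemble the zero-mode prefactors. Multiplying the zero-mode parts $t^{Q_{i+1}}z^{P_{i+1}/\ln q}$ of $\Phi^{q,(i+1)}$ and $t^{-Q_i}z^{(\ln t-P_i)/\ln q}$ of $\Phi^{*q,(i)}$ (these commute, being supported on different Fock factors) yields precisely the stated $S^i_0$ for $i\geq 1$, with \emph{no} occurrence of $p$. The main obstacle lies in the affine case: the naive product of the zero modes of $\Phi^{q,(1)}(w)$ and $\Phi^{*q,(N)}(\kappa(q/t)^{N/2}w)$ reproduces every factor of $S^0_0$ except the extra $w^{\ln p/\ln q}$. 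I expect this discrepancy to be accounted for by the cyclic identification closing the affine Dynkin chain from the $N$-th back to the first Fock factor — that is, by the imaginary-root (wrapping) direction, implemented diagrammatically by the $p^{d}$ grading shift \eqref{eq:150} inserted across the portal. Showing that this wrapping contributes exactly $w^{\ln p/\ln q}$, and hence produces the $\ln p$ appearing in $S^0_0$, is the step I anticipate will require the most care.
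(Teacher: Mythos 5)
Your proposal is correct and is essentially the paper's own (implicit) argument: the paper states Lemma~\ref{lem-S} without a written proof, treating it as a direct consequence of the mode identification~\eqref{eq:18} and the explicit intertwiner formulas~\eqref{eq:21},~\eqref{eq:23}, which is exactly the oscillator-by-oscillator matching you carry out, including the observation that the $q_3^{n/2}=(t/q)^{n/2}$ dressing of $\Phi^{*}$ reproduces the $(t/q)^{|n|/2}$ weights and that the asymmetric arguments $w$ and $\kappa(q/t)^{N/2}w$ are forced by the $a^{(N)}_n$ coefficient in $\alpha^0_n$.

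The one step you leave open — the extra $w^{\ln p/\ln q}$ in $S^0_0$ — closes in one line, so there is no real obstacle there. The affine screening is the unique segment that passes through the portal, and by~\eqref{eq:69} the grading operator $p^{d}$ carried by the portal acts on the vector representation $\mathcal{V}^{*}_{q}$ as multiplication by $x^{\ln p/\ln q}$; evaluated on the leg supported at position $w$ this is precisely the scalar $w^{\ln p/\ln q}$. Equivalently, on the Fock side: since $d=Q$ by~\eqref{eq:149} and $[P,Q]=1$ by~\eqref{eq:144}, conjugating $\Phi^{q,(1)}(w)$ by the portal shift replaces $P_1\to P_1+\ln p$ in its zero mode $w^{P_1/\ln q}$ while leaving the oscillator part untouched — this is the statement that the wrapped-around first Fock space carries spectral parameter $p\tilde{x}_1$ rather than $\tilde{x}_1$, consistent with the quasi-periodicity $\tilde{x}_{j+N}=p\,\tilde{x}_j$ of $\tilde{x}_j=p^{j/N}x_j$. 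Either way the wrapping contributes exactly $w^{\ln p/\ln q}$, which is why $\ln p$ appears only in the $i=0$ zero mode of~\eqref{eq:26} and in none of the finite-root ones.
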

We note that again no normal ordering is needed in the r.h.s.\ of
Eq.~\eqref{eq:24} since $\Phi$ and $\Phi^{*}$ act on different Fock
spaces. Eq.~\eqref{eq:24} can be depicted as the following diagram of intertwiners:
\begin{equation}
  \label{eq:29}
    S^0_0 S_0 (w) =
    \quad  \includegraphics[valign=c]{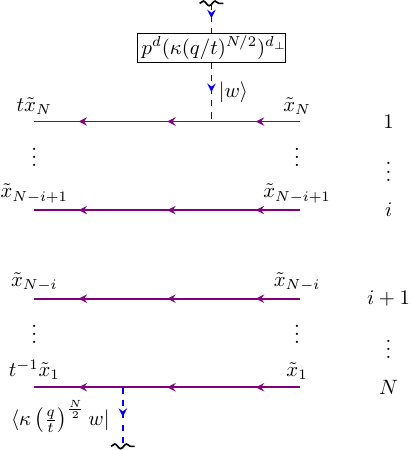},
  \end{equation}
  and
  \begin{equation}
  S^i_0 S_i \left(\kappa^{-\frac{i}{N}} \left( \frac{q}{t}
      \right)^{-\frac{i}{2}} w \right) =
    \quad  \includegraphics[valign=c]{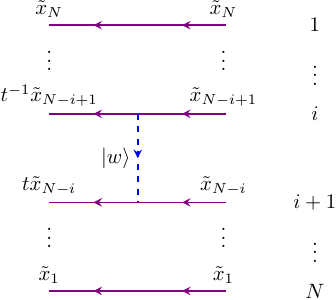},
    \qquad \text{for } i=1,\ldots,N-1.\label{eq:30}
\end{equation}

\begin{lem}[\cite{Shir-func}]
  We have
  \begin{equation}
    \label{eq:28}
    \phi_i(z) = :\phi_{(i-1) \mod N}\left(\kappa^{\frac{1}{N}}z \right) S_i(z):,
    \qquad i=0,\ldots, N-1.
  \end{equation}
\end{lem}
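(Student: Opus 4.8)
The plan is to reduce the operator identity~\eqref{eq:28} to the purely algebraic definition~\eqref{eq:9} of the ``root type'' generators $\alpha^i_n$ in terms of the ``weight type'' generators $\beta^i_n$. Since $\phi_i$ and $S_i$ are both exponentials of linear combinations of the Heisenberg modes (Eqs.~\eqref{eq:10},~\eqref{eq:11}), the normal-ordered product $:\phi_{(i-1)\mod N}(\kappa^{\frac1N}z)\,S_i(z):$ is, by the very definition of normal ordering, the single normal-ordered exponential whose exponent is the sum of the two individual exponents; the outer colons remove all contraction terms that would otherwise arise from reordering positive and negative modes. It therefore suffices to compare, for each $n\in\mathbb{Z}\setminus\{0\}$, the coefficient of $\frac1n z^{-n}$ in the exponents on the two sides.

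First I would write out the exponent of the right-hand side explicitly. The only point requiring attention is the argument shift: substituting $z\mapsto\kappa^{\frac1N}z$ in $\phi_{(i-1)\mod N}$ sends $z^{-n}$ to $\kappa^{-\frac{n}{N}}z^{-n}$, so that $\phi_{(i-1)\mod N}(\kappa^{\frac1N}z)$ contributes $\sum_{n\neq 0}\frac1n\,\kappa^{-\frac{n}{N}}\beta^{(i-1)\mod N}_n z^{-n}$ to the exponent, while $S_i(z)$ contributes $-\sum_{n\neq 0}\frac1n\,\alpha^i_n z^{-n}$. Hence the coefficient of $\frac1n z^{-n}$ on the right-hand side is exactly $\kappa^{-\frac{n}{N}}\beta^{(i-1)\mod N}_n-\alpha^i_n$.

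Next I would invoke~\eqref{eq:9}. For $i=1,\ldots,N-1$ this reads $\alpha^i_n=\kappa^{-\frac{n}{N}}\beta^{i-1}_n-\beta^i_n$, so $\kappa^{-\frac{n}{N}}\beta^{(i-1)\mod N}_n-\alpha^i_n=\beta^i_n$ and the right-hand side collapses to $:\exp(\sum_{n\neq 0}\frac1n\beta^i_n z^{-n}):\,=\phi_i(z)$. The case $i=0$ is identical once one notes that $(i-1)\mod N=N-1$ and uses the first branch $\alpha^0_n=\kappa^{-\frac{n}{N}}\beta^{N-1}_n-\beta^0_n$, so that the two cases are treated uniformly. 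The computation is elementary and there is no genuine obstacle; the only thing demanding care is that the factor $\kappa^{-\frac{n}{N}}$ produced by the shift matches the $\kappa^{-n/N}$ appearing in~\eqref{eq:9} for all $n$ of both signs, which holds because~\eqref{eq:9} is postulated for every $n\in\mathbb{Z}\setminus\{0\}$.
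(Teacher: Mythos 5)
Your proof is correct and follows the same route the paper indicates: the paper justifies Eq.~\eqref{eq:28} precisely by appeal to the definition~\eqref{eq:9} of the root-type generators, and your mode-by-mode comparison of exponents (with the $\kappa^{-n/N}$ factor from the argument shift cancelling against the one in~\eqref{eq:9}, uniformly for $i=0$ and $i=1,\ldots,N-1$) is just that argument written out in full. The paper additionally remarks that the identity can be seen in the intertwiner/graphical language via~\eqref{eq:27}, but that is an alternative interpretation, not a different proof.
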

Eq.~\eqref{eq:28} follows from the definition of the ``root type''
generators~\eqref{eq:9}. In the intertwiner language it follows from
the identity~\eqref{eq:27}. Graphically Eq.~\eqref{eq:28} can be seen
as adding one segment (the screening current $S_i$) to the infinite
spiralling ``tail'' attached to the intertwiner
$\Phi^{q,(i-1)}_{q,t^{-1}}$.

Next let us consider the \emph{screened} affine vertex operators
$\Psi^i(z|\vec{x},p)$ as defined in Eq.~\eqref{eq:12}. They are given
by the sums of products of unscreened affine vertex operators and
affine screening currents. We are able to also express them through
intertwiners and $R$-matrices of the algebra $\mathcal{A}$.
\begin{prop}
  Using the identification~\eqref{eq:35} affine screened vertex
  operator $\Psi^i(z|\vec{x},p)$ can be viewed as an operator
  \begin{equation}
    \label{eq:42}
    \Psi^i(w|\mathbf{r}_{N-i}\vec{x},p): \bigotimes\limits_{j=1}^N
  \mathcal{F}_{q,t^{-1}}^{(1,0)}\left( t^{\delta_{j>i+1}}
    \tilde{x}_{N-j+1} \right) \to \bigotimes\limits_{j=1}^N
  \mathcal{F}_{q,t^{-1}}^{(1,0)}\left(t^{\delta_{j \geq i+1}}
    \tilde{x}_{N-j+1} \right),
\end{equation}
 where $\tilde{x}_i =
  x_i p^{\frac{i}{N}}$
  It is given by the following product of $R$-matrices and an
  intertwining operator:
  \begin{multline}
    \label{eq:37}
    N(q,t,\kappa) \Psi^i(w|\mathbf{r}_{N-i}\vec{x},p) =\\
    = \sum_{\lambda}   \prod_{m \geq 1}^{\leftarrow} \left\{ \prod_{j=1}^N
\left\langle  \kappa^{m+ \frac{i+1}{N}}
        \left( \frac{q}{t} \right)^{\frac{j}{2}} q^{\lambda_{i+1 + N
            m - j  }} w
    \right|\check{\mathcal{R}}^{(j)}\left| \kappa^{m+ \frac{i+1}{N}}
        \left( \frac{q}{t} \right)^{\frac{j-1}{2}} q^{\lambda_{i+2 + N
            m - j  }} w \right\rangle
    \right\} \times\\
    \times \prod_{k=1}^i   \left\langle  \kappa^{\frac{i+1}{N}}
      \left( \frac{q}{t} \right)^{\frac{k}{2}} q^{\lambda_{i-k+1}} w
    \right|\check{\mathcal{R}}^{(k)}\left| \kappa^{\frac{i+1}{N}}
      \left( \frac{q}{t} \right)^{\frac{k-1}{2}} q^{\lambda_{i-k+2}} w \right\rangle 
    \Phi^{q,(i+1)}_{q,t} \left(
      \kappa^{\frac{i+1}{N}} \left( \frac{q}{t} \right)^{\frac{i}{2}}
      q^{\lambda_1} w \right)
  \end{multline}
  for $i = 0,\ldots, N-1$.
  \label{prop-aff-screened}
\end{prop}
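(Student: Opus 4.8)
The plan is to build $\Psi^i(w|\mathbf{r}_{N-i}\vec{x},p)$ directly from its definition~\eqref{eq:12} by substituting the intertwiner realizations of its constituents. Conceptually, recall from the discussion around~\eqref{eq:25} that the \emph{unscreened} operator $\phi_{(i-l(\lambda))\bmod N}$ is a spiral carrying a fixed ket state on each dashed (vector-representation) line, while screening amounts to restoring the sum over the remaining intermediate states; the sum over Young diagrams $\lambda$ in~\eqref{eq:12} is precisely this sum over intermediate states. First I would use Lemma~\ref{lem-phi} to write the $\phi$-factor as an infinite spiral of crossing operators capped by a single $\Phi^{q,(i+1)}_{q,t^{-1}}$, and Lemma~\ref{lem-S} to write each screening current $S_{(i-j+1)\bmod N}$ as a product $\Phi^{q,(\cdot+1)}_{q,t^{-1}}\Phi^{*q,(\cdot)}_{q,t^{-1}}$ on two neighbouring Fock factors. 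Reading along the resulting chain, each $\Phi^{*q,(j)}_{q,t^{-1}}$ from one factor lands on the same Fock space as the $\Phi^{q,(j)}_{q,t^{-1}}$ from the adjacent factor, and Theorem~\ref{thm-r-q} (Eq.~\eqref{eq:31}) is exactly the statement that such a pair $\Phi^{q,(j)}_{q,t^{-1}}(q_3^{1/2}q_1^{-n}w)\,\Phi^{*q,(j)}_{q,t^{-1}}(w)$ equals, up to a scalar, the matrix element $\langle w|\check{\mathcal{R}}^{(j)}|q_3^{1/2}q_1^{-n}w\rangle$. Fusing every consecutive pair this way converts the product of $\phi$ and the $S$'s into the product of $R$-matrix matrix elements $\check{\mathcal{R}}^{(j)}$ together with one left-over $\Phi^{q,(i+1)}_{q,t^{-1}}$, matching the structure of the right-hand side of~\eqref{eq:37}.

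The combinatorial heart is to match the sum over $\lambda$ with the arguments in~\eqref{eq:37}. The spiral of Lemma~\ref{lem-phi} supplies arguments of the form $\kappa^{m+i/N}(q/t)^{(j-1)/2}w$, and the $q^{\lambda_j}$ shifts carried by the screening currents in~\eqref{eq:12} promote these to the bra/ket arguments $\kappa^{m+(i+1)/N}(q/t)^{j/2}q^{\lambda_{\cdot}}w$ and $\kappa^{m+(i+1)/N}(q/t)^{(j-1)/2}q^{\lambda_{\cdot}}w$ appearing in~\eqref{eq:37}. Here I would verify that the shift integer $n$ selecting the $R$-matrix matrix element in each fusion equals the difference of the two consecutive parts of $\lambda$ attached to that factor: since $q_1=q$, $q_3=t/q$ and the bra/ket in~\eqref{eq:31} differ by a factor $q_1^{-n}=q^{-n}$, the ratio of the two arguments forces $n$ to be exactly that difference. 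The partition ordering $\lambda_k\geq\lambda_{k+1}$ is then precisely the constraint $n\in\mathbb{Z}_{\geq0}$ under which Eq.~\eqref{eq:31} holds (and $\Phi(\ldots)\Phi^{*}(\ldots)$ vanishes for $n<0$, as noted after the Corollary to Theorem~\ref{thm-r-q}). Thus the sum over Young diagrams reorganizes into the admissible shifts in the individual factors, while the cyclic label $(i-j+1)\bmod N$ dictates which Fock space each $\check{\mathcal{R}}^{(j)}$ acts on.

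Finally I would carry out the scalar and zero-mode bookkeeping. The per-screening factor $\big((q/t;q)_{\infty}/(q;q)_{\infty}\big)^{l(\lambda)}$ in~\eqref{eq:12} matches the product of the $l(\lambda)$ prefactors $(q_3^{-1};q_1)_{\infty}/(q_1;q_1)_{\infty}$ generated by the fusions in Eq.~\eqref{eq:31}, using $q_3^{-1}=q/t$ and $q_1=q$; the single overall normalization $N(q,t,\kappa)$ of~\eqref{eq:38} then absorbs the leftover $q_3$-powers coming from the zero-mode factors $q_3^{-(\ln q_2+P)/(2\ln q_1)}$ in Eq.~\eqref{eq:31} together with the zero-mode pieces $\phi^i_0$ and $S^i_0$ of Lemmas~\ref{lem-phi} and~\ref{lem-S}. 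The monomial coefficient $\prod_{k\geq1}\big(p^{1/N}x_{N-i+k}/x_{N-i+k-1}\big)^{\lambda_k}$ should then be produced by these zero modes once the spectral parameters $u_i$ in~\eqref{eq:35} are fixed compatibly with $\tilde{x}_i=x_i p^{i/N}$. To obtain the source and target spaces~\eqref{eq:42} I would track the accumulated spectral-parameter shifts along the chain: by Propositions~\ref{prop-phi} and~\ref{prop-phi-star} each $\Phi$ rescales its Fock factor's parameter by $q_2^{-1}=t$ and each $\Phi^{*}$ by $q_2=t^{-1}$, and summing these over the chain reproduces the $t^{\delta_{j>i+1}}$ and $t^{\delta_{j\geq i+1}}$ shifts coming from the $\mathbf{r}$-operators in~\eqref{eq:15}.

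The step I expect to be the main obstacle is precisely this last bookkeeping: pinning down the spectral parameters $u_i$ and checking that the cumulative shifts of the full chain of $\Phi$'s and $\Phi^{*}$'s reproduce both the source/target Fock spaces~\eqref{eq:42} and, via the zero modes, the exact monomial prefactor of~\eqref{eq:12}. By contrast, the conceptual core --- the identification of the partition sum with the admissible $R$-matrix shifts through the fusion of adjacent $\Phi\,\Phi^{*}$ pairs --- is clean but index-heavy, and the care needed there lies mostly in keeping the cyclic index $(i-j+1)\bmod N$ and the spiral level $m$ consistent across all factors.
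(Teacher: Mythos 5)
Your proposal is correct and follows essentially the same route as the paper's own (much terser) proof: substitute Lemmas~\ref{lem-phi} and~\ref{lem-S} into the definition~\eqref{eq:12}, fuse each adjacent $\Phi\,\Phi^{*}$ pair acting on the same Fock space into an $R$-matrix element via Theorem~\ref{thm-r-q}, and attribute the prefactor $\left(\left(q/t;q\right)_{\infty}/(q;q)_{\infty}\right)^{l(\lambda)}$ to these fusions and the monomial prefactor to the zero modes $\phi^i_0$, $S^i_0$. Your extra observation that the condition $n\in\mathbb{Z}_{\geq 0}$ in Eq.~\eqref{eq:31} is what enforces the Young-diagram ordering $\lambda_k\geq\lambda_{k+1}$ matches the remark the paper makes immediately after the proposition.
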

\begin{proof}
  The idea of the proof is to reorganize affine screening operators
  $S_i(z)$ and the affine vertex operator $\phi_i(z)$ in the
  definition~\eqref{eq:12} using the identifications~\eqref{eq:20}
  and~\eqref{eq:24} in terms of the crossing operators $\mathbf{x}$
  and intertwiners $\Phi$ and $\Phi^{*}$. One then uses
  Theorem~\ref{thm-r-q} to collect pairs of $\Phi$ and $\Phi^{*}$
  acting on the same Fock space into the $R$-matrices.

  The prefactor $\left( \frac{\left( \frac{q}{t} ; q
      \right)_{\infty}}{(q;q)_{\infty}} \right)^{l(\lambda)}$
  in~\eqref{eq:12} originates from applying Theorem~\ref{thm-r-q} to
  every pair of adjacent affine screening currents and also the last
  screening and the intertwiner. The prefactor $\left( p^{\frac{1}{N}}
    \frac{x_{N-i+k}}{x_{N-i+k-1}} \right)^{\lambda_k} = \left(
    \frac{\tilde{x}_{N-i+k}}{\tilde{x}_{N-i+k-1}} \right)^{\lambda_k}$
  in~\eqref{eq:12} appears from the zero mode parts $\phi_i^0$ and
  $S_i^0$ featuring in the Lemmas~\ref{lem-phi} and~\ref{lem-S}
  respectively.
\end{proof}

\begin{rem}
  The sum over Young diagrams in Eq.~\eqref{eq:37} can be extended to
  the sum over all finite length sequences of integers since
  Theorem~\ref{thm-r-q} automatically enforces the Young diagram
  condition $\lambda_i \geq \lambda_{i+1}$ for all $i \in
  \mathbb{N}$. Thus the sum in Eq.~\eqref{eq:37} is the sum over the
  complete basis of states in the intermediate vector representations
  $\mathcal{V}_q$ between the adjacent $R$-matrices in the product and
  between the last $R$-matrix in the product and the intertwiner
  $\Phi$.
\end{rem}

\begin{rem}
  Notice that $\Psi^i(z|\vec{x},p)$ shifts the spectral parameter
  $\tilde{x}_{i+1}$ of the $(i+1)$-th Fock space in the tensor product
  by $t$. One can check that it is consistent with the action of the
  shift operators $\mathbf{r}_i$ on $x_j$ in the
  Definition~\ref{dfn-shir-funct} of the Shiraishi functions.
\end{rem}

This means that $\Psi^i(w|\vec{x},p)$ is a matrix element of a
composition of an infinite number of intertwining operators of the
algebra $\mathcal{A}$. This will be crucial for understanding the
properties of Shiraishi functions in what follows.

It is perhaps easier to understand the structure of the
formula~\eqref{eq:37} by drawing the corresponding picture of
intertwiners:
\begin{multline}
  N(q,t,\kappa) \Psi^i\left( \left.  \kappa^{-\frac{i+1}{N}} \left( \frac{q}{t} \right)^{-\frac{i}{2}}
      w \right|\mathbf{r}_{N-i}\vec{x},p \right) =\\
  = \sum_{\lambda}
  \quad  \includegraphics[valign=c]{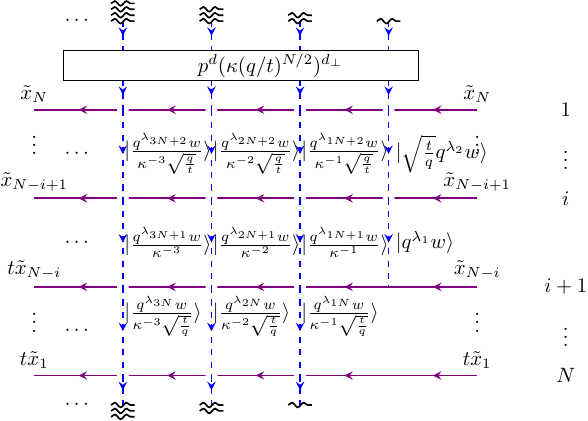}\quad
  = \notag
\end{multline}
\begin{equation}
=  \quad  \includegraphics[valign=c]{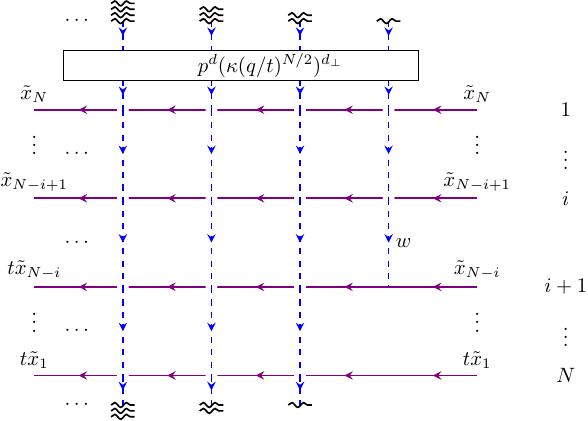}\label{eq:34}
\end{equation}
In the first picture in Eq.~\eqref{eq:34} we write out explicitly the
sum over intermediate states in the vector representations (hence the
ket states on the dashed lines and the sum over $\lambda$). In the
second picture the sums are implicit and we the intermediate dashed
lines are just compositions of operators along $\mathcal{V}_{q_1}$
representation (see similar discussion after Eq.~\eqref{eq:25}).

Let us introduce a graphical abbreviation for the
picture~\eqref{eq:34} (notice the shift of $w$ on both sides of the
equality):
\begin{equation}
  \label{eq:40}
  N(q,t,\kappa) \Psi^i(w|\mathbf{r}_{N-i}\vec{x},p) = \quad  \includegraphics[valign=c]{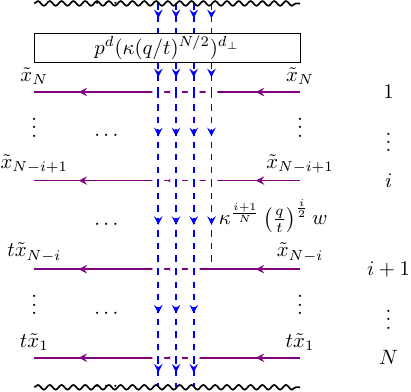} 
\end{equation}

Using the expression~\eqref{eq:37} for affine screened vertex
operators we obtain the main theorem of this section expressing the
Shiraishi functions as intertwiners of the quantum toroidal algebra.
\begin{thm}
  Shiraishi function is given by the following vacuum-vacuum matrix
  element of intertwiners and $R$-matrices of the algebra
  $\mathcal{A}$:
  \begin{multline}
    \label{eq:41}
      \widehat{G}_{q,t^{-1}}(\vec{s})\, \widehat{f}_N(\vec{x}, p^{\frac{1}{N}} | \vec{s},
      \kappa^{\frac{1}{N}}|q,t) =\\
      \\
    = \quad  \includegraphics[valign=c]{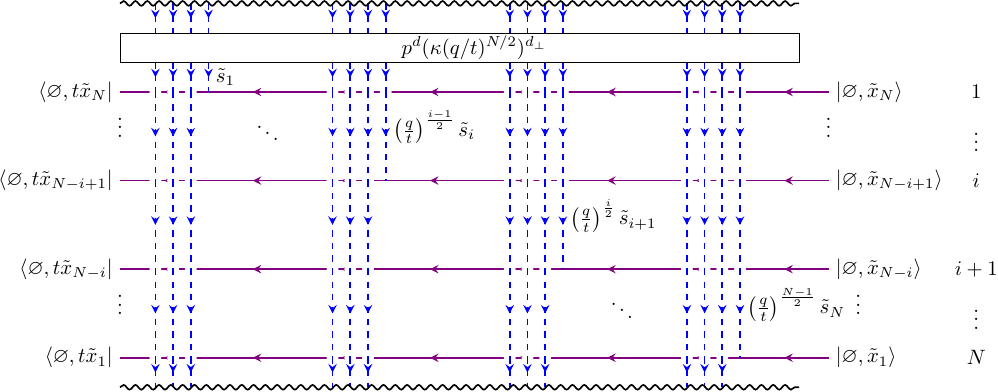}
  \end{multline}
  where $\tilde{s}_i = \kappa^{\frac{i}{N}} s_i$,
  \begin{equation}
    \label{eq:120}
    \widehat{G}_{q,t^{-1}}(\vec{s}) =   (N(q,t,\kappa))^N  \prod_{i<j} \frac{\left(
        \frac{\tilde{s}_j}{\tilde{s}_i} ; q,\kappa \right)_{\infty} \left(
        q \frac{\tilde{s}_j}{\tilde{s}_i} ; q,\kappa
      \right)_{\infty}}{\left( t
        \frac{\tilde{s}_j}{\tilde{s}_i} ; q,\kappa \right)_{\infty} \left(
        \frac{q}{t}  \frac{\tilde{s}_j}{\tilde{s}_i} ; q,\kappa
      \right)_{\infty}},
  \end{equation}
  and we have used the
  graphical abbreviations~\eqref{eq:40}.
\label{thm-shir-main}
\end{thm}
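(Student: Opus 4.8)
The plan is to read off the statement directly from Definition~\ref{dfn-shir-funct} by substituting the operator expression \eqref{eq:37} for each of the $N$ screened affine vertex operators in the product. First I would observe that the prefactor $\widehat{G}_{q,t^{-1}}(\vec{s})$ in \eqref{eq:120} has been engineered precisely so that its ratio of double Pochhammer symbols is the exact reciprocal of the prefactor multiplying the vacuum matrix element in \eqref{eq:13}, while the leftover factor $N(q,t,\kappa)^N$ collects one copy of the normalization $N(q,t,\kappa)$ for each of the $N$ applications of Proposition~\ref{prop-aff-screened}. Hence the two finite products over $i<j$ cancel and one is left with the clean reduction
\begin{equation*}
  \widehat{G}_{q,t^{-1}}(\vec{s})\,\widehat{f}_N(\vec{x},p^{\frac1N}|\vec{s},\kappa^{\frac1N}|q,t) = \langle 0 | \prod_{i=0}^{N-1} \bigl( N(q,t,\kappa)\, \Psi^i(s_{i+1}|\mathbf{r}_{N-i}\vec{x},p) \bigr) | 0 \rangle,
\end{equation*}
so that everything reduces to identifying the right-hand side with the diagram in \eqref{eq:41}.

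Second, I would insert the expression \eqref{eq:37} for each bracketed factor $N(q,t,\kappa)\Psi^i$, which by Proposition~\ref{prop-aff-screened} is an infinite product of $R$-matrix matrix elements $\check{\mathcal{R}}^{(j)}$ capped by a single intertwiner $\Phi^{q,(i+1)}_{q,t^{-1}}$ and summed over Young diagrams $\lambda$. Using the graphical abbreviation \eqref{eq:40}, each such factor is the spiralling tail of crossings terminating in one $\Phi$; stacking the $N$ spirals in the order dictated by the product then assembles the single coherent picture on the right-hand side of \eqref{eq:41}. The essential consistency check is that the codomain of $\Psi^i$ matches the domain of $\Psi^{i-1}$: by the mapping property \eqref{eq:42}, $\Psi^i$ shifts exactly one Fock spectral parameter by $t$, and reading the product from right to left these shifts telescope so that the intermediate tensor-product spaces glue, with the accompanying $P$-eigenvalue shifts carried entirely by the zero modes $\phi^i_0$ and $S^i_0$ of Lemmas~\ref{lem-phi} and~\ref{lem-S}.

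The main obstacle, as always in this graphical calculus, is the careful tracking of spectral parameters. Concretely one must verify that the arguments $\kappa^{m+\frac{i+1}{N}}(q/t)^{\bullet}q^{\lambda_\bullet}w$ of the successive $R$-matrices in \eqref{eq:37}, after the reindexing $w=s_{i+1}$ and $\tilde{s}_i=\kappa^{\frac{i}{N}}s_i$, line up along each vertical $\mathcal{V}_q$ line so that the intermediate states summed in neighbouring $\Psi^i$ and $\Psi^{i+1}$ factors run over one and the same basis of the vector representation. This is exactly where Theorem~\ref{thm-r-q} does the work: its matrix elements are nonzero only for the admissible state shifts, which (as noted in the remark following Proposition~\ref{prop-aff-screened}) enforces the ordering $\lambda_k \geq \lambda_{k+1}$ and lets the separate sums over $\lambda$ in adjacent factors merge into single sums over complete bases along the dashed lines, producing the implicit-sum picture of \eqref{eq:34}. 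Finally I would confirm that composing the $N$ operator expressions and sandwiching between $\langle 0|$ and $|0\rangle$ generates no extra normal-ordering factors beyond those already packaged in the $N(q,t,\kappa)$ normalizations and the double Pochhammer symbols of $\widehat{G}_{q,t^{-1}}$, since the only contractions across distinct $\Psi^i$ occur between $\Phi$ and $\Phi^{*}$ on a common Fock factor and are precisely the ones turned into $R$-matrices by Theorem~\ref{thm-r-q}. Collecting these observations yields \eqref{eq:41}.
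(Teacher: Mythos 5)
Your proposal is correct and follows essentially the same route as the paper's own (very terse) proof: cancel the Pochhammer prefactor of Definition~\ref{dfn-shir-funct} against $\widehat{G}_{q,t^{-1}}(\vec{s})$, absorb the factor $(N(q,t,\kappa))^N$ into the $N$ applications of Proposition~\ref{prop-aff-screened}, and glue the resulting blocks~\eqref{eq:40} along the horizontal Fock lines, checking that the $t$-shifts of~\eqref{eq:42} telescope under composition. One small misdescription in your third paragraph: the dashed coils of neighbouring $\Psi^i$ and $\Psi^{i+1}$ are disjoint spirals, so adjacent blocks are glued only along the Fock spaces and no matching of vector-representation bases across different $\Psi$'s is required --- the sums over intermediate $\mathcal{V}_q$ states are internal to each block (see the remark after Proposition~\ref{prop-aff-screened}), and your second paragraph already performs the actual gluing correctly.
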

\begin{proof}
  The product of screened affine vertex operators in
  Definition~\ref{dfn-shir-funct} is manifestly reproduced by the
  composition of blocks of dashed lines in the picture. Notice that
  the $t$-shifts in the spectral parameters of the horizontal lines
  (see Eq.~\eqref{eq:42}) are compatible with taking the compositions.
\end{proof}
\begin{rem}
  The prefactor $\widehat{G}_{q,t^{-1}}(\vec{s})$ in the definition of
  the Shiraishi functions~\eqref{eq:41} is equal to the term with
  trivial partitions $\lambda^{(i)}$ in $\Psi^i$. Thus, Shiraishi
  functions $\widehat{f}_N(\vec{x}, p^{\frac{1}{N}} | \vec{s},
  \kappa^{\frac{1}{N}}|q,t)$ can be written as a sum over $N$
  partitions $\lambda^{(i)}$ with the first term being identity.
\end{rem}

It will be useful in what follows to redraw the system of
intertwiners~\eqref{eq:41} in an equivalent way. We notice that the
diagram~\eqref{eq:41} can be viewed as drawn on a cylinder, so that
the $i$-th group of dashed lines (a ``coil'') starts on the $i$-th
horizontal line and winds around it an infinite number of
times. Instead of taking a composition of ``coils'' we \emph{first}
attach the $i$-th intertwiners $\Phi^{q_1,(i)}_{q_1,q_2}$ to the
$i$-th horizontal line and then wind the resulting $N$ dashed
``tails'' around the cylinder \emph{together} as a group:
\begin{equation}
  \label{eq:114}
  \includegraphics[valign=c]{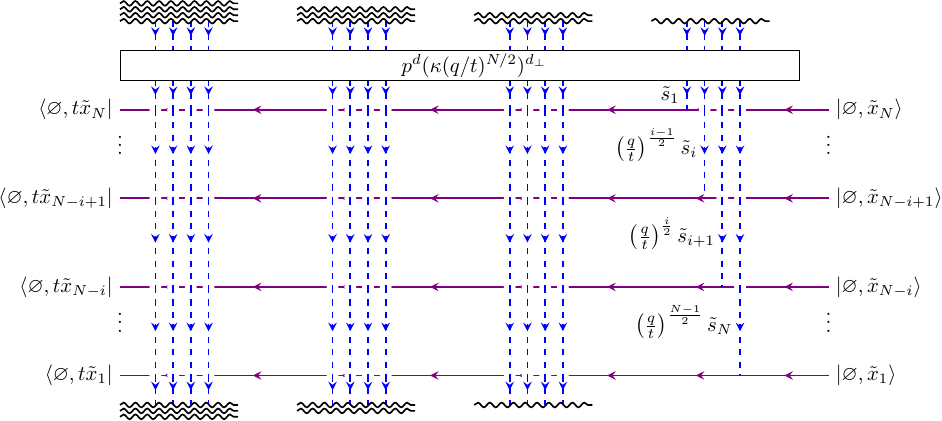}
\end{equation}
\label{conj-shir}
\begin{conj}
  The diagrams~\eqref{eq:114} and~\eqref{eq:41} evaluate to the same
  result.
  \label{conj-redraw}
\end{conj}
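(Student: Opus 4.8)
The plan is to treat both sides of Conjecture~\ref{conj-redraw} as two evaluations of one and the same geometric network of intertwiners drawn on the cylinder, differing only in the order in which the network is sliced into a composition of operators. For a \emph{finite} collection of solid and dashed lines the value of such a network is an isotopy invariant: this is exactly the Yang-Baxter equation for the universal $R$-matrix of $\mathcal{A}$ together with the intertwining property of $\Phi^{q}_{q,t^{-1}}$ and $\Phi^{*q}_{q,t^{-1}}$, the same tools already used in the proofs of Theorems~\ref{thm-O-comm} and~\ref{thm-shir-main}. Picture~\eqref{eq:41} reads the configuration coil-by-coil, completing each infinite tail before beginning the next, whereas~\eqref{eq:114} first attaches all $N$ base intertwiners $\Phi^{q,(i)}_{q,t^{-1}}$ and only then winds the $N$ dashed tails around the cylinder simultaneously. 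First I would make precise the claim that these are isotopic, exhibiting the explicit sequence of planar isotopies and Yang-Baxter crossing exchanges that carries one into the other.

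The decisive algebraic simplification, already noted after Eq.~\eqref{eq:20}, is that every crossing operator $\bar{\mathbf{x}}^{q,(j)}_{q,t^{-1}}(w)$ in the tails is built solely from positive Heisenberg modes $a^{(j)}_n$, $n>0$, whose mutual commutators vanish. Consequently all crossing operators commute --- within a single Fock factor and across distinct factors alike --- so the product forming the ``tails wound together'' in~\eqref{eq:114} is unambiguous, and the entire discrepancy between the two pictures is reduced to the order in which each base intertwiner $\Phi^{q,(i)}_{q,t^{-1}}$ sits relative to the crossing operators $\bar{\mathbf{x}}^{q,(i)}_{q,t^{-1}}$ of the \emph{other} coils acting on the same $i$-th Fock factor (it commutes trivially with the crossings on all other factors).

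The substance of the argument is then to transport each $\Phi^{q,(i)}_{q,t^{-1}}$ forward through these crossings. Every such elementary exchange is controlled by the intertwining relation and by Theorem~\ref{thm-r-q}, which recombines each adjacent $\Phi,\Phi^{*}$ pair into a single $R$-matrix and thereby keeps the screened structure of~\eqref{eq:37} intact, while the spectral-parameter shifts generated by the grading operator $p^{d}\mu^{d_{\perp}}$ at every turn around the cylinder are tracked explicitly. Because each move preserves the value of the finite sub-network it acts on, any common finite portion of the two diagrams evaluates identically; the task is to organize these moves into a single coherent rearrangement carrying all of~\eqref{eq:41} into all of~\eqref{eq:114}.

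The main obstacle is the infinite length of the spirals. Each individual move is harmless, but infinitely many are required, and truncating the two spirals cuts them at genuinely different places, so that the finite truncations of~\eqref{eq:41} and~\eqref{eq:114} differ by tail contributions localized near the cut. The crux is to show that these boundary contributions vanish in the limit --- uniformly, and not merely order-by-order in the expansion over the partitions $\lambda^{(i)}$ in the $x_{i}/x_{i-1}$-adic topology in which the Shiraishi series converges. Securing this uniform convergence is what separates the heuristic diagrammatic identity from a rigorous proof, and is presumably the reason the statement is recorded here as a conjecture rather than a theorem.
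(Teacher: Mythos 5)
There is no ``paper's own proof'' to compare against here: the statement you are addressing is left as an open conjecture in the paper, whose accompanying remark already proposes exactly the route you outline (the intertwining property of $\Phi$ plus the Yang--Baxter equation) and already names the obstruction (an infinite number of commutations must be controlled). Your proposal therefore adds value only through its ``decisive algebraic simplification'', and that step is incorrect for the diagrams in question. The claim that every crossing in the tails is a crossing operator $\bar{\mathbf{x}}^{q,(j)}_{q,t^{-1}}(w)$ of Eq.~\eqref{eq:27}, built from positive Heisenberg modes only and hence commuting with all other crossings, is true only for the \emph{unscreened} vertex operator \eqref{eq:25}, where each intermediate dashed line carries a fixed ket state and only the $n=0$ matrix element of Theorem~\ref{thm-r-q} appears. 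The diagrams of Conjecture~\ref{conj-redraw} are the \emph{screened} ones: their intermediate vector-representation lines carry full sums over states --- this is precisely what produces the sums over partitions $\lambda$ in \eqref{eq:37} --- so the crossings are the general matrix elements $\Phi^{q,(j)}_{q,t^{-1}}\bigl(q_3^{1/2} q^{-n} w\bigr)\,\Phi^{*q,(j)}_{q,t^{-1}}(w)$ with arbitrary $n \in \mathbb{Z}_{\geq 0}$. For $n \neq 0$ the negative-mode parts of $\Phi$ and $\Phi^{*}$ do not cancel, so these operators contain negative Heisenberg modes and do not commute with one another on the same Fock factor; equivalently, two full $R$-matrices threaded on the same dashed line do not commute. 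Your reduction of the whole problem to transporting each base intertwiner $\Phi^{q,(i)}_{q,t^{-1}}$ through a family of mutually commuting crossings therefore collapses, and the argument fails before the convergence question is even reached.

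Even setting that error aside, the proposal does not prove the statement: your final paragraph concedes that the uniform vanishing of the boundary contributions produced by truncating the two spirals --- the step that would turn the finite-diagram isotopy invariance into an identity between the two infinite compositions --- is not established. That step is not a technicality to be deferred; it is the entire content of the conjecture, which is exactly why the paper records it as a conjecture rather than a theorem. What you have written is a restatement of the paper's own remark, augmented by a reduction that is invalid for screened diagrams, so the statement remains as open after your argument as before it.
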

We hope that one can use the intertwining property of the $\Phi$ and
the Yang-Baxter equation satisfied by the $R$-matrices to prove
Conjecture~\ref{conj-redraw}. However, the proof would require a
careful analysis of an infinite number of commutations between the
intertwiners, so we leave it for the future.

\section{Affine $qq$-characters and noncommutative Jacobi identity
  from spiralling branes}
\label{sec:nonc-jacobi-ident}

In this section we use the spiralling brane construction incorporating
Miura $R$-matrices~\eqref{eq:5} to prove the general version of the
noncommutative Jacobi identity for $qq$-characters of affine
$\widehat{A}_0$ type recently proposed in~\cite{Grekov:2024ayn} (see
also~\cite{Grekov:2024fyh,Grekov:2023psy,Grekov:2023fek}). In~\cite{Grekov:2024ayn}
the identity was proven only in the Nekrasov-Shatashvili limit which
corresponds to the $q$-character limit of the $qq$-characters. Our
approach is general and provides the connection with representation
theory of the quantum toroidal algebra $\mathcal{A}$.

\subsection{Fundamental $\widehat{A}_0$ $qq$-character}
\label{sec:fundamntal-a_0-qq}
\begin{dfn}
  Let $\tilde{q}_3, \tilde{q}_4 \in \mathbb{C}^{*}$ be
  parameters satisfying
  \begin{equation}
    \label{eq:87}
    q_1 q_2 \tilde{q}_3 \tilde{q}_4 = 1,
  \end{equation}
  but otherwise generic. Fundamental $qq$-character of $\widehat{A}_0$
  type
  \begin{equation}
    \label{eq:80}
    \mathsf{X}(z|q_1,q_2, \tilde{q}_3, \tilde{q}_4, p) :
    \mathcal{F}_{q_1,q_2}^{(1,0)}(v) \to \mathcal{F}_{q_1,q_2}^{(1,0)}(v)
  \end{equation}
  is given by  
  \begin{equation}
    \label{eq:78}
    \mathsf{X}(z|q_1,q_2, \tilde{q}_3, \tilde{q}_4, p) = \sum_{\lambda} p^{|\lambda|}
    z_{\lambda}^{\mathrm{adj}} (\tilde{q}_3, \tilde{q}_4,q_1,
    q_2)\, : \prod_{(a,b)\in
      \mathrm{CC}(\lambda)}
    \mathsf{Y}(\tilde{q}_3^a \tilde{q}_4^b z) \prod_{(c,d)\in
      \mathrm{CV}(\lambda)}
    \left(\mathsf{Y}(\tilde{q}_3^c \tilde{q}_4^d z)\right)^{-1}:,
  \end{equation}
  where $\lambda$ is a Young diagram, $|\lambda|$ its total number of
  boxes, $\mathrm{CC}(\lambda)$ (resp.~$\mathrm{CV}(\lambda)$) is the
  set of concave (convex) corners of $\lambda$ (see Fig.~\ref{fig:1}),
  \begin{equation}
    \label{eq:84}
    z_{\lambda}^{\mathrm{adj}} (\tilde{q}_3, \tilde{q}_4,q_1,
    q_2) = \prod_{(a,b)\in \lambda} \psi_{q_1, q_2} \left(
      \tilde{q}_3^{\mathrm{Leg}_{\lambda}(a,b)+1} \tilde{q}_4 ^{- \mathrm{Arm}_{\lambda}(a,b)}\right),
  \end{equation}
  where
  \begin{equation}
    \label{eq:101}
    \mathrm{Arm}_{\lambda}(a,b) = \lambda_a - b, \qquad
    \mathrm{Leg}_{\lambda}(a,b) = a - \lambda^{\mathrm{T}}_b,
  \end{equation}
  $\psi_{q_1, q_2}(z)$ is given by Eq.~\eqref{eq:51} and
  \begin{equation}
    \label{eq:81}
    \mathsf{Y}(z) =\, :\exp \left[ \sum_{n \neq 0} \frac{\mathsf{y}_n}{n}
      z^{-n} \right]:,
  \end{equation}
  where
\begin{equation}
  \label{eq:86}
  \mathsf{y}_n =
  \begin{cases}
    \frac{1-q_2^{-n}}{1-\tilde{q}_3^{-n}} a_n^{(q_1,q_2)}, &n>0,\\
    \frac{1-q_2^{-n}}{1-\tilde{q}_4^n} a_n^{(q_1,q_2)}, &n<0.
  \end{cases}
\end{equation}
and $a_n^{(q_1,q_2)}$ satisfy the commutation relations~\eqref{eq:44}.
\label{dfn-qq}
\end{dfn}

\begin{rem}
  It is important not to confuse $q_3 = (q_1 q_2)^{-1}$ and
  $\tilde{q}_3$ which is a parameter independent of $q_1$, $q_2$. Note
  that $\tilde{q}_3 \tilde{q}_4 = q_3$.
\end{rem}
\begin{rem}
  The generators $\mathsf{y}_n$ satisfy the commutation relations
  \begin{equation}
    \label{eq:79}
    [\mathsf{y}_n, \mathsf{y}_m] = n \frac{(1 - q_1^n)(1-q_2^n)}{(1-\tilde{q}_3^{-n})(1-\tilde{q}_4^{-n})} \delta_{n+m,0}.
  \end{equation}
\end{rem}
\begin{rem}
  In physical terms $z_{\lambda \lambda}^{\mathrm{adj}} (\tilde{q}_3,
  \tilde{q}_4,q_1, q_2)$ is the Nekrasov factor for a $5d$ $U(1)$
  gauge theory in an $\Omega$-background $\mathbb{C}_{\tilde{q}_3}
  \times \mathbb{C}_{\tilde{q}_4} \times S^1$ with adjoint matter
  hypermultiplet of exponentiated mass $ \sqrt{\frac{q_1}{q_2}}$.
\end{rem}

We introduce the spiral transfer matrix which will play the main part
in our proof of the noncommutative Jacobi identity.
\begin{dfn}
  We call the spiral transfer matrix
  \begin{equation}
    \label{eq:68}
    \widehat{R}^{q_3, M}_{q_1,q_2}(x|v|p,\mu): \mathcal{V}^{*}_{q_3} \otimes
    \mathcal{F}_{q_1,q_2}^{(1,0)}(v) \to 
    \mathcal{F}_{q_1,q_2}^{(1,0)}(v) \otimes \mathcal{V}^{*}_{q_3}
  \end{equation}
  the following twisted product of Miura $R$-matrices
  \begin{multline}
  \label{eq:66}
 \widehat{R}^{q_3, M}_{q_1,q_2}(x|v|p,\mu) = p^{\frac{M^2}{2}}
 \left(-\frac{v}{q_3}\right)^M P \left(p^{\left(M - \frac{1}{2}\right) d}
 \left(\frac{\mu}{\sqrt{q_3}} \right)^{-M d_{\perp}} \otimes
    1 \right) \left(\mathcal{R}|_{\mathcal{V}^{*}_{q_3} \otimes
      \mathcal{F}_{q_1,q_2}^{(1,0)}(v)}(p^{-d} \mu^{d_{\perp}} \otimes
    1) \right)^{2 M} \times\\
    \times  \left( p^{\left(M + \frac{1}{2}\right) d} (\sqrt{q_3} \mu)^{-Md_{\perp}} \otimes
      1 \right) 
\end{multline}
where $d$ and $d_{\perp}$ are the grading operators of the algebra
$\mathcal{A}$ (see Eqs.~\eqref{eq:71}--\eqref{eq:70}) and $P$ is the operator permuting the two factors in the
tensor product.
\end{dfn}
\begin{rem}
  In Eq.~\eqref{eq:66} we use the $R$-matrix $\mathcal{R}$ which
  differs from $\check{\mathcal{R}} = P \mathcal{R}$ by the
  permutation operator $P$ acting on the two tensor factors (see
  Appendix~\ref{sec:universal-r-matrix-1}).
\end{rem}
\begin{rem}
  The grading operators before and after the product of $R$-matrices
  in Eq.~\eqref{eq:66} are a matter of convenience. They produce an
  overall shift in $v$ and $x$ parameters.
\end{rem}
\begin{rem}
  The spiral transfer matrix can be drawn using the diagram notation
  as follows:
  \begin{equation}
    \label{eq:75}
    \widehat{R}^{q_3, M}_{q_1,q_2}(x|v|p,\mu)    = p^{\frac{M^2}{2}} \left(-
      \frac{v}{q_3} \right)^M \times    \quad  \includegraphics[valign=c]{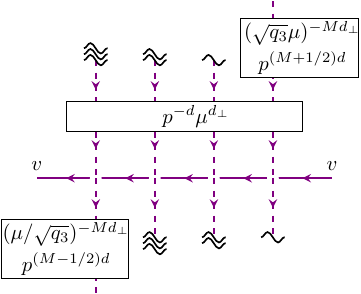}
  \end{equation}
\end{rem}

The noncommutative Jacobi identity follows from the relation between
the spiral transfer matrix and the fundamental $qq$-character of type
$\widehat{A}_0$.

\begin{prop}
  Let $|\mu|>|q_3|^{\pm 1/2}$. Then the limit
  \begin{equation}
    \label{eq:83}
    \widehat{R}^{q_3, \infty}_{q_1,q_2}(x|v|p,\mu) = \lim_{M \to \infty} \widehat{R}^{q_3, M}_{q_1,q_2}(x|v|p,\mu)
  \end{equation}
  makes sense as a formal power series in $p^{1/2}$.
  \label{prop-infty}
\end{prop}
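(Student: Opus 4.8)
The plan is to analyze the infinite product structure in the definition~\eqref{eq:66} of the spiral transfer matrix and show that the $M\to\infty$ limit converges order-by-order in a formal expansion in $p^{1/2}$. First I would unpack the building block $\mathcal{R}|_{\mathcal{V}^{*}_{q_3} \otimes \mathcal{F}_{q_1,q_2}^{(1,0)}(v)}(p^{-d} \mu^{d_{\perp}} \otimes 1)$, which is a Miura $R$-matrix of the form given in Theorem~\ref{thm-miura-r} conjugated by a grading shift. Using the explicit expression~\eqref{eq:5}, each factor is a sum of two terms $V^{+}_{q_1,q_2}(x) q_3^{\frac{1}{2}x\partial_x}$ and $-(v/q_3)^{-1}V^{-}_{q_1,q_2}(x) q_3^{-\frac{1}{2}x\partial_x}$, each dressed by the grading operator $p^{-d}\mu^{d_{\perp}}$. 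The key observation is that the grading operator $p^{-d}$ shifts the spectral parameter of the vector representation by a factor involving $p$, so that as we multiply $2M$ such factors, the successive terms carry increasing powers of $p$.

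The central step is to track how $p$ enters the normal-ordered product. When the vertex operators $V^{\pm}$ at different vertical positions are normal-ordered against each other, their contractions produce ratios of the shifted spectral parameters; because each grading shift $p^{-d}$ scales the position by a factor of $p$, contractions between operators separated by $k$ steps along the spiral contribute factors of $p^{k}$ (or $p^{k/2}$ after accounting for the half-integer shifts built into the $d_\perp$ gradings and the $q_3^{\pm\frac12 x\partial_x}$ pieces). Concretely, I would show that any fixed power $p^{n/2}$ in the expansion of $\widehat{R}^{q_3,M}_{q_1,q_2}$ receives contributions only from a bounded number of terms in the $2M$-fold product — those where the total ``winding distance'' between the chosen $V^{+}/V^{-}$ insertions is at most $n$. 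Once $M$ exceeds this bound, the coefficient of $p^{n/2}$ stabilizes, which is exactly what convergence as a formal power series in $p^{1/2}$ means.

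To make this precise I would expand the $2M$-fold product as a sum over binary sequences choosing $V^{+}$ or $V^{-}$ at each factor, and for each such choice compute the normal-ordered contribution. The condition $|\mu|>|q_3|^{\pm1/2}$ enters to guarantee that the geometric series arising from summing the contractions over the semi-infinite spiral converge in the appropriate region, controlling the zero-mode $\mu^{d_\perp}$ factors and preventing spurious divergences from the $\mathcal{V}^{*}_{q_3}$ direction. The overall prefactor $p^{M^2/2}(-v/q_3)^M$ and the symmetric grading insertions $p^{(M\mp\frac12)d}$ in~\eqref{eq:66} are designed precisely to cancel the leading $M$-dependent scaling, so I would verify that after this cancellation each fixed-degree coefficient in $p^{1/2}$ is $M$-independent for large $M$.

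The main obstacle I anticipate is the bookkeeping of the grading operators interleaved with the $R$-matrices: one must carefully commute the $p^{-d}\mu^{d_{\perp}}$ factors through the vertex operators $V^{\pm}$ and the shift operators $q_3^{\pm\frac12 x\partial_x}$ to collect all $p$-dependence into a single manageable form, and then argue that the resulting infinite sum truncates at each order in $p^{1/2}$. Establishing the precise power-counting lemma — that a contraction spanning winding distance $k$ contributes at least $p^{k/2}$, with no cancellations that could lower the effective degree — is the delicate technical heart of the argument, and it is where the genericity assumptions on $q_1,q_2$ and the region $|\mu|>|q_3|^{\pm1/2}$ are genuinely used.
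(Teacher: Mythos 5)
Your overall skeleton --- expand the $2M$-fold product of Miura $R$-matrices into $2^{2M}$ terms labelled by sign sequences, argue that only finitely many terms contribute to any fixed power of $p^{1/2}$, and then take $M\to\infty$ term by term --- is the same as the paper's. But your central power-counting mechanism is wrong, and this is a genuine gap. You claim the powers of $p$ are produced by \emph{contractions} between $V^{+}$ and $V^{-}$ at different windings, ``because each grading shift $p^{-d}$ scales the position by a factor of $p$.'' Neither half of this is true: on $\mathcal{V}^{*}_{q_3}$ the operator $p^{d}$ acts by multiplication by $x^{\ln p/\ln q_3}$ (Eq.~\eqref{eq:69}), while it is $\mu^{d_{\perp}}$ that rescales the position, by $\mu$, not by $p$ (Eq.~\eqref{eq:70}). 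Consequently the vertex operators in the expanded product sit at the points $\mu^{i-M-1}q_3^{M/2}x$ (Eq.~\eqref{eq:67}), and every contraction produces a factor $\psi_{q_1,q_2}$ evaluated at a ratio of the form $\mu^{\cdots}q_3^{\cdots}$ (Eq.~\eqref{eq:89}) --- completely independent of $p$. All of the $p$-dependence resides in the scalar prefactors generated when the zero-mode factors $x^{-\ln p/\ln q_3}$ are commuted through the shift operators $q_3^{\pm\frac{1}{2}x\partial_x}$: a term with sign sequence $(s_i)$ carries $p^{\frac{M^2}{2}+\sum_i s_i\left(\frac14+\frac{M-i}{2}\right)}$ (Eq.~\eqref{eq:85}). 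The step your proposal is missing is the combinatorial identification that makes this exponent usable: sign sequences are Maya diagrams, i.e.\ pairs of a level $k$ and a partition $\lambda$, and the fermionic energy formula \eqref{eq:91} converts the exponent into $\frac{k^2}{2}+|\lambda|$ (Eq.~\eqref{eq:92}). Finiteness of the number of pairs $(k,\lambda)$ with $\frac{k^2}{2}+|\lambda|\le K$ is what bounds the number of terms per power of $p$; your substitute lemma (``a contraction spanning winding distance $k$ contributes at least $p^{k/2}$'') is simply false, since contractions contribute no $p$ at all.

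A second, related error: your claim that for $M$ large the coefficient of $p^{n/2}$ literally ``stabilizes'' contradicts your own (correct) instinct that a convergence condition on $\mu$ is needed. For fixed $(k,\lambda)$ the corresponding term is still a product of $\sim 2M$ vertex operators, so it depends on $M$; what one proves is that its $M\to\infty$ limit exists as an operator. After normal ordering, the $p$-independent scalar part is a \emph{finite} product of $|\lambda|$ factors $\psi_{q_1,q_2}$ (Eq.~\eqref{eq:93}) --- no convergence issue there --- while the normal-ordered operator part is an infinite product of $V_{\pm}$ whose exponents are geometric series in $\tilde{q}_4=\sqrt{q_3}\,\mu^{-1}$ and $\tilde{q}_3^{-1}=(\sqrt{q_3}\,\mu)^{-1}$ (Eqs.~\eqref{eq:96}--\eqref{eq:99}). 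This is exactly where $|\mu|>|q_3|^{\pm 1/2}$ enters: it gives $|\tilde{q}_4|<1<|\tilde{q}_3|$, so these series converge and the limit is the operator $\mathsf{Y}$ of Eq.~\eqref{eq:81}. So the hypothesis controls the operator exponents, not ``sums of contractions.'' To repair your argument you would need to (i) redo the power counting through the zero modes and the Maya/Young-diagram correspondence, and (ii) replace ``stabilization'' of coefficients by term-wise convergence of the operator parts under the stated condition on $\mu$.
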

  
\begin{thm}[Noncommutative Jacobi identity]
  \begin{equation}
    \label{eq:82}
    \widehat{R}^{q_3, \infty}_{q_1,q_2}(x|v|p,\mu)  = \sum_{k \in \mathbb{Z}} p^{\frac{k^2}{2}}
    (-\tilde{v})^k \mathsf{X}(\tilde{q}_4^k x|q_1,q_2, \tilde{q}_3, \tilde{q}_4, p) q_3^{k x \partial_x},
  \end{equation}
  where $\tilde{v} = \frac{v}{q_3}$,
  \begin{align}
    \label{eq:73}
    \tilde{q}_3 &= \sqrt{q_3} \mu,\\
    \tilde{q}_4 &= \sqrt{q_3} \mu^{-1},\label{eq:97}
  \end{align}
  and the r.h.s.\ of Eq.~\eqref{eq:82} is viewed as an
  element of $\mathrm{Hom} \left( \mathcal{V}^{*}_{q_3} \otimes
    \mathcal{F}_{q_1,q_2}^{(1,0)}(v).
    \mathcal{F}_{q_1,q_2}^{(1,0)}(v) \otimes \mathcal{V}^{*}_{q_3}
  \right)$.
  \label{thm-jacobi}
\end{thm}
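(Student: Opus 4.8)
The plan is to adapt the expansion-and-normal-ordering computation from the proof of Theorem~\ref{thm-tRS-ham} to a single vector representation that spirals $2M$ times around the Fock line rather than to $N$ distinct horizontal crossings. First I would apply Theorem~\ref{thm-miura-r} under the cyclic relabelling $q_1\to q_3\to q_2\to q_1$ to write the relevant Miura $R$-matrix on $\mathcal{V}^{*}_{q_3}\otimes\mathcal{F}^{(1,0)}_{q_1,q_2}(v)$ as the two-term difference operator $V^{+}_{q_1,q_2}(x)\,q_3^{\frac12 x\partial_x}-(v/q_3)^{-1}V^{-}_{q_1,q_2}(x)\,q_3^{-\frac12 x\partial_x}$, whose mode content already matches the $(1-q_2^{\mp n})$ prefactor of $\mathsf{y}_n$ in \eqref{eq:86}. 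Next I would commute all the grading insertions $p^{-d}\mu^{d_{\perp}}$ in \eqref{eq:66} through to one side: since $d$ and $d_{\perp}$ rescale the Fock parameter $v$ and the vector-representation argument $x$, this converts the twisted $2M$-fold product into an ordinary product of Miura $R$-matrices with successive arguments shifted by powers of $\mu$ and $p$, which is exactly the unfolding of the spiral onto the plane; the permutation $P$ in front records the swap $\mathcal{V}^{*}_{q_3}\otimes\mathcal{F}\to\mathcal{F}\otimes\mathcal{V}^{*}_{q_3}$ asserted in the Hom statement.

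I would then expand the $2M$-fold product into $2^{2M}$ terms indexed by sign sequences $\vec{\epsilon}\in\{+,-\}^{2M}$, just as in \eqref{eq:52}. Each term is a normal-orderable product of $V^{\epsilon_j}$ at shifted arguments, times a scalar collected from the $-(v/q_3)^{-1}$ factors, times an overall shift $q_3^{\frac12(\#_{+}-\#_{-})\,x\partial_x}$. Writing $\#_{+}=M+k$, $\#_{-}=M-k$, the explicit normalization $(-v/q_3)^M$ multiplies the expansion scalar $(-(v/q_3)^{-1})^{\#_{-}}$ to give precisely $(-\tilde{v})^{k}$ and leaves the residual shift $q_3^{k x\partial_x}$, so the integer $k$ organizes the answer into the charge sectors summed over in \eqref{eq:82}. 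In the limit $M\to\infty$, the sequences of fixed excess $k$ with an all-minus tail on one end and an all-plus tail on the other are in bijection with Young diagrams $\lambda$: the interior sign changes are the concave and convex corners of the boundary profile of $\lambda$, located at lattice positions $\tilde{q}_3^{a}\tilde{q}_4^{b}$ with $\tilde{q}_3=\sqrt{q_3}\mu$, $\tilde{q}_4=\sqrt{q_3}\mu^{-1}$, and the residual base-point shift by charge $k$ moves the diagram to $\tilde{q}_4^{k}x$.

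At each corner the two semi-infinite tails of equal-sign $V$'s get geometrically resummed—$\sum_{m\ge 0}\tilde{q}_3^{-mn}=1/(1-\tilde{q}_3^{-n})$ on the positive side and the analogous series with $\tilde{q}_4$ on the negative side—which is precisely what upgrades each $V^{\pm}_{q_1,q_2}$ into a single $\mathsf{Y}^{\pm 1}(\tilde{q}_3^{a}\tilde{q}_4^{b}x)$, reproducing the normal-ordered product $:\prod_{\mathrm{CC}(\lambda)}\mathsf{Y}\,\prod_{\mathrm{CV}(\lambda)}\mathsf{Y}^{-1}:$ of Definition~\ref{dfn-qq}. Normal ordering the surviving product by means of \eqref{eq:50} generates one factor $\psi_{q_1,q_2}$ for every pass of a $V^{+}$ over a $V^{-}$; the central identity to establish is that, for the sign sequence associated with $\lambda$, this total crossing product equals the adjoint Nekrasov factor $z^{\mathrm{adj}}_{\lambda}$ of \eqref{eq:84}, i.e.\ the arm–leg product $\prod_{(a,b)\in\lambda}\psi_{q_1,q_2}(\tilde{q}_3^{\mathrm{Leg}+1}\tilde{q}_4^{-\mathrm{Arm}})$. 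I would prove this by induction on adding a single box, verifying that the incremental change of the lattice-path crossing product contributes exactly one $\psi$-factor with the correct arm and leg. The grading zero modes then supply the remaining $p^{k^2/2}$ per charge sector and the $p^{|\lambda|}$ weight inside $\mathsf{X}$, and Proposition~\ref{prop-infty} guarantees that the individually divergent tail contributions regroup into the finite $\mathsf{Y}$-insertions plus these explicit prefactors, with the whole sum over $k$ and $\lambda$ converging as a formal series in $p^{1/2}$.

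The hard part will be this last matching: confirming the combinatorial identity between the normal-ordering crossing product and $z^{\mathrm{adj}}_{\lambda}$ while simultaneously controlling the two semi-infinite tails in the $M\to\infty$ limit, so that their divergent pieces cancel against the explicit $p^{M^2/2}(-v/q_3)^{M}$ normalization and leave nothing beyond $z^{\mathrm{adj}}_{\lambda}$, the $\mathsf{Y}$-insertions, and the prefactors $p^{k^2/2}(-\tilde{v})^{k}q_3^{k x\partial_x}$. The convergence statement of Proposition~\ref{prop-infty} makes the limit well defined, but extracting the coefficients term-by-term and ruling out spurious surviving factors is the delicate core of the argument.
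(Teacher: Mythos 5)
Your proposal follows essentially the same route as the paper's proof of Theorem~\ref{thm-jacobi}: unfold the twisted product~\eqref{eq:66} into a $2M$-fold ordinary product of two-term Miura operators $V_{\pm}$ with $\mu$- and $p$-shifted arguments, expand over sign (Maya) sequences, split into charge-$k$ sectors labelled by Young diagrams $\lambda$, normal-order so that each $V_{+}$-past-$V_{-}$ crossing contributes a $\psi_{q_1,q_2}$ factor assembling into $z^{\mathrm{adj}}_{\lambda}$, and resum the semi-infinite equal-sign tails into the corner insertions $\mathsf{Y}^{\pm 1}(\tilde{q}_3^a\tilde{q}_4^b x)$, with convergence as a formal series in $p^{1/2}$. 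The only differences are cosmetic: the paper obtains the crossing-product identity not by your proposed box-by-box induction but by a direct bijection between crossing pairs of boundary points and boxes of $\lambda$ (reading off arm and leg lengths, Eq.~\eqref{eq:89}), and it extracts the weight $p^{\frac{k^2}{2}+|\lambda|}$ from the standard fermionic energy formula $\sum_i i s_i = M^2-k^2-k(2M+1)-2|\lambda|$ rather than from zero-mode bookkeeping, proving Proposition~\ref{prop-infty} simultaneously instead of invoking it.
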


\begin{proof}
  We prove Proposition~\ref{prop-infty} and Theorem~\ref{thm-jacobi}
  together. Plugging the explicit explicit expression for the Miura
  $R$-matrix from Theorem~\ref{thm-miura-r} into the definition of the
  spiral transfer matrix~\eqref{eq:68} we get
\begin{multline}
  \label{eq:67}
  \widehat{R}^{q_3, M}_{q_1,q_2}(x|v|p,\mu) =\\
  = p^{\frac{M^2}{2}}
 \left(- \tilde{v}\right)^M x^{\left( M  -\frac{1}{2} \right) \frac{\ln p}{\ln q_3}} \left(\frac{\mu}{\sqrt{q_3}}\right)^{-M x \partial_x}
  \left( R_{q_1,q_2}^{q_3}(x|v) x^{-\frac{\ln p}{\ln
        q_3}} \mu^{x \partial_x}
  \right)^{2M} x^{\left( M  +\frac{1}{2} \right) \frac{\ln p}{\ln q_3}}
  \left(\sqrt{q_3} \mu\right)^{-M x \partial_x} = \\
  = p^{\frac{M^2}{2}} \prod_{i=1}^{
    \begin{smallmatrix}
      \to\\
      2M
    \end{smallmatrix}
    } \left( \tilde{v}^{\frac{1}{2}} p^{-\frac{i-M}{2} +
    \frac{1}{4}}
    V^{+}_{q_1,q_2}(\mu^{i-M-1} q_3^{M/2}  x) q_3^{\frac{1}{2}
      x \partial_x} - \tilde{v}^{-\frac{1}{2}} p^{\frac{i-M}{2}-\frac{1}{4}}
    V^{-}_{q_1,q_2}(\mu^{i-M-1} q_3^{M/2})
    q_3^{-\frac{1}{2} x \partial_x} \right),
\end{multline}
where $V^{\pm}_{q_1,q_2} (x)$ are given by
Eq.~\eqref{eq:43}.
In this section from now on we will abbreviate $V^{\pm}_{q_1,q_2} (x)$
as $V_{\pm}(x)$.

Expanding the brackets in the product in Eq.~\eqref{eq:67} we get
$2^{2M}$ terms, each corresponds a sequence of signs $(s_1, \ldots,
s_{2M})$, $s_i \in \{\pm 1\}$. Indeed in each bracket we can pick
either the $V_{+}$ or $V_{-}$ vertex operator.  The sequence of signs
can be drawn as a Maya diagram (see Fig.~\ref{fig:1}). We have:
\begin{equation}
  \label{eq:85}
\widehat{R}^{q_3, M}_{q_1,q_2}(x|v|p,\mu) = \sum_{(s_i) \in \{\pm
  1\}^{2M}} 
\tilde{v}^{\frac{1}{2}\sum_{i=1}^{2M} s_i} p^{\frac{M^2}{2} + \sum_{i=1}^{2M} s_i
  \left( \frac{1}{4} + \frac{M-i}{2} \right)}  \prod_{i=1}^{
    \begin{smallmatrix}
      \to\\
      2M
    \end{smallmatrix}
} \left( s_i V_{s_i} (\mu^{i-M-1} q_3^{M/2}  x)  q_3^{\frac{s_i}{2} x \partial_x} \right) .
\end{equation}

\begin{figure}[h]
  \centering
  \includegraphics[valign=c]{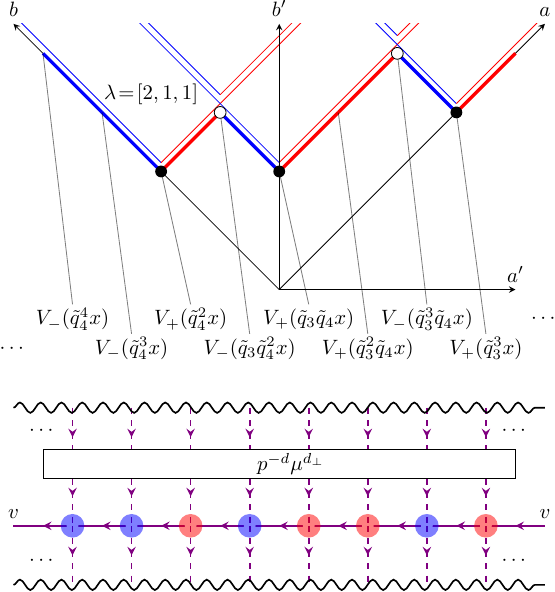}
  \caption{A term in the expansion of the spiral transfer matrix
    $\widehat{R}^{q_3, \infty}_{q_1,q_2}(x|v|p,\mu)$ as a Maya diagram
    and the corresponding Young diagram $\lambda = [2,1,1]$ (the level
    $k = 0$ in this example). Blue (resp.\ red) dots in the Maya
    diagram correspond to $V_{-}$ (resp.\ $V_{+}$) vertex operators in
    the product. The positions of the $V_{\pm}$ insertions are
    $\tilde{q}_3^x \tilde{q}_4^y$ with $(x,y)$ coordinates of a point
    on the boundary of $\lambda$. Black (resp.\ white) dots are
    concave (resp.\ convex) corners of $\lambda$. Thin angles
    suspended above black (resp.\ white) dots represent the operators
    $\mathsf{Y}(\tilde{q}_3^x \tilde{q}_4^y)$ (resp.\
    $\left(\mathsf{Y}(\tilde{q}_3^x \tilde{q}_4^y)\right)^{-1}$) which are
    infinite products of $V_{\pm}$.}
  \label{fig:1}
\end{figure}

We use the standard identification between Maya diagrams and Young
diagrams (see Fig.~\ref{fig:1}). A sequence of signs $(s_i)$
corresponds to a pair $(k,\lambda)$ consisting of a level $k = -M,
\ldots, M$ and a Young diagram $\lambda$ fitting inside a rectangular
$(k-M)\times (k+M)$ box. The level is given by the difference between
the total number of minus and plus signs:
\begin{equation}
  \label{eq:90}
  \frac{1}{2} \sum_{i=1}^{2M} s_i = -k.
\end{equation}
We also use the standard formula for the energy of the fermionic state
labelled by the Maya diagram $(s_i)$:
\begin{equation}
\sum_{i=1}^{2M} i s_i = M^2 - k^2 -k(2M+1) - 2|\lambda|.\label{eq:91}
\end{equation}
Plugging Eqs.~\eqref{eq:90}, \eqref{eq:91} into Eq.~\eqref{eq:85} we
get
\begin{equation}
  \label{eq:92}
  \widehat{R}^{q_3, M}_{q_1,q_2}(x|v|p,\mu) = \sum_{k = -M}^M
  \sum_{\lambda \in P_{(k-M)\times (k+M)}} (-\tilde{v})^{-k}
  p^{\frac{k^2}{2} + |\lambda|} \prod_{i=1}^{
    \begin{smallmatrix}
      \to\\
      2M
    \end{smallmatrix}
} \left(  V_{s_i} (\mu^{i-M-1} q_3^{M/2}  x)  q_3^{\frac{s_i}{2} x \partial_x} \right),
\end{equation}
where $P_{(k-M)\times (k+M)}$ is the set of Young diagrams fitting
inside the $(k-M)\times (k+M)$ box and we have kept the Maya diagram
variables $s_i$ in the vertex operator product. From Eq.~\eqref{eq:92}
we deduce that the number of terms with a given power $K$ of $p$ is
finite and independent of $M$ for $M$ sufficiently large. Hence we can
consider the limit $M \to \infty$ in each term and then sum over $k$
and $\lambda$.

Let us move the $q_3$-shift operators in the vertex operator product
to the right. This will introduce a shift $q_3^{\frac{1}{2}\sum_{j<i}
  s_i}$ in the argument of the $i$-th vertex operator. We notice that
the resulting argument
\begin{equation}
  \label{eq:94}
  q_3^{\frac{M}{2} + \frac{1}{2}\sum_{j<i} s_i} \mu^{i-M-1}x =
  \mu^{a'_i+k} q_3^{b'_i - k/2} x, 
\end{equation}
where $(a'_i, b'_i)$ are coordinates of the integer point on the
boundary $\partial \lambda$ of the Young diagram $\lambda$ (rotated by
$\frac{\pi}{4}$ and considered together with the coordinate axes $a$
and $b$ as in Fig.~\ref{fig:1}) corresponding to the sequence
$(s_i)$. One also associate the sign $s(a',b') = \frac{db'(a'+0)}{da'} =
s_i$ to every integer point of $\partial \lambda$. We find
\begin{equation}
  \label{eq:95}
  \prod_{i=1}^{
    \begin{smallmatrix}
      \to\\
      2M
    \end{smallmatrix}
  } \left(  V_{s_i} (\mu^{i-M-1} q_3^{M/2}  x)  q_3^{\frac{s_i}{2}
      x \partial_x} \right) = \prod_{(a',b') \in \partial \lambda}^{
    \to} \left(  V_{s(a',b')} (\mu^{a'+k} q_3^{b' - k/2}  x) \right)   q_3^{-k x \partial_x}
\end{equation}
where the ordering in the product is by the increasing value of $x'$.

Let us normal order the $V_{\pm}$ vertex operators in the product
in~\eqref{eq:95} using the commutation relations~\eqref{eq:50}. In
this way one gets a factor $\psi_{q_1,q_2}$ for every pair of $V_{+}$
and $V_{-}$ in which $V_{+}$ is to the left of $V_{-}$. Such pairs
correspond to pairs of integer points on $\partial \lambda$ such that
$s(A)=+1$, $s(B)=-1$ and $a'(A) < a'(B)$. In Fig.~\ref{fig:1} this
corresponds to $A$ living on a red segment of the boundary and $B$ on
a blue one. We notice that such pairs $(A,B)$ are in one-to-one
correspondence with points inside $\lambda$. Indeed, starting with
$(A,B)$ as described above one can draw a line parallel to the $b$
axis from $A$ and a line parallel to the $a$ axis from $B$. These
lines intersect at an integer point inside $\lambda$. Vice versa
starting with an integer point in $\lambda$, drawing lines parallel to
the coordinate axes $a$ and $b$ and intersecting them with $\partial
\lambda$ one obtains a pair of boundary points $(A,B)$ satisfying
$s(A)=+1$, $s(B)=-1$ and $a'(A) < a'(B)$.

According to the commutation relations~\eqref{eq:50} the factor
corresponding to a pair of boundary points $(A,B)$ is
\begin{equation}
  \label{eq:89}
  \psi_{q_1,q_2} \left(\mu^{a'(B) - a'(A)}
   q_3^{\frac{1}{2}(b'(B)-b'(A))} \right) = \psi_{q_1, q_2} \left(
      \tilde{q}_3^{\mathrm{Leg}_{\lambda}(i,j)+1} \tilde{q}_4 ^{- \mathrm{Arm}_{\lambda}(i,j)}\right).
\end{equation}
where $(i,j) \in \lambda$ is the point corresponding to the pair of
boundary points $(A,B)$ and $\tilde{q}_3$, $\tilde{q}_4$ are defined
by Eqs.~\eqref{eq:73}, \eqref{eq:97}.

This gives us the following normal ordered expression for the product
of vertex operators from Eq.~\eqref{eq:95}:
\begin{multline}
  \label{eq:93}
  \prod_{(a',b') \in \partial \lambda}^{ \to} \left( V_{s(a',b')}
    (\mu^{a'+k} q_3^{b' - k/2} x) \right)
  =\\
  =\prod_{(i,j)\in \lambda} \psi_{q_1, q_2} \left(
    \tilde{q}_3^{\mathrm{Leg}_{\lambda}(i,j)+1} \tilde{q}_4 ^{-
      \mathrm{Arm}_{\lambda}(i,j)}\right) \, :\prod_{(a',b') \in \partial
    \lambda} \left( V_{s(a',b')} (\mu^{a'+k} q_3^{b' - k/2} x)
  \right):\,=\\
  =z_{\lambda}^{\mathrm{adj}} (\tilde{q}_3, \tilde{q}_4,q_1,
    q_2) \, :\prod_{(a',b') \in \partial
    \lambda} \left( V_{s(a',b')} (\mu^{a'+k} q_3^{b' - k/2} x)
  \right):.
\end{multline}

Finally, we reorganize the normal ordered product in Eq.~\eqref{eq:93}
in terms of $\mathsf{Y}(z)$ introduced in Eq.~\eqref{eq:81}. To this
end let us first consider the contribution of $\lambda =
\varnothing$. In our convention the boundary $\partial \varnothing$ is
given by the union of $a$ and $b$ coordinate axes. We have
\begin{multline}
  \label{eq:96}
  :\prod_{(a',b') \in \partial \varnothing} \left( V_{s(a',b')}
    (\mu^{a'+k} q_3^{b' - k/2} x) \right): \, = \, :\prod_{a =
    1}^{M+k} V_{-} (\tilde{q}_4^{a-k}x) \prod_{b
    = 0}^{M-k-1}  V_{+} (\tilde{q}_3^b \tilde{q}_4^{-k} x):\, =\\
  = \exp \left[ - \sum_{n \geq 1}\frac{1-q_2^n}{n} \sum_{a=1}^{M+k}
    \left( \tilde{q}_4^{a-k} x \right)^n a_{-n}^{(q_1,q_2)} \right]
  \exp \left[ - \sum_{n \geq 1}\frac{1-q_2^{-n}}{n} \sum_{a=1}^{M-k-1}
    \left( \tilde{q}_3^b \tilde{q}_4^{-k} x \right)^{-n}
    a_n^{(q_1,q_2)} \right],
\end{multline}
where $a_n^{(q_1,q_2)}$ are Heisenberg generators
satisfying~\eqref{eq:44}. The assumption $|\mu| > |q_3|^{\pm 1/2}$ in
Proposition~\ref{prop-infty} means that $|\tilde{q}_3|>1$,
$|\tilde{q}_4|<1$ which implies that the limit $M \to \infty$ of the
product exists and is given by
\begin{multline}
  \label{eq:98}
  \lim_{M \to \infty} \exp
  \left[ - \sum_{n \geq 1}\frac{1-q_2^n}{n} \sum_{a=1}^{M+k} \left(
      \tilde{q}_4^{a-k} x \right)^n a_{-n}^{(q_1,q_2)} \right] \exp
  \left[ - \sum_{n \geq 1}\frac{1-q_2^{-n}}{n} \sum_{a=1}^{M-k-1} \left(
      \tilde{q}_3^b \tilde{q}_4^{-k} x \right)^{-n} a_n^{(q_1,q_2)}
  \right] =\\
  =\mathsf{Y}(\tilde{q}_4^{-k} x).
\end{multline}
Thus, $\mathsf{Y}(x)$ is given by an infinite product
\begin{equation}
  \label{eq:99}
  \mathsf{Y}(x) = \, :\prod_{a \geq
    1} V_{-} (\tilde{q}_4^ax) \prod_{b
    \geq 0}  V_{+} (\tilde{q}_3^b x):.
\end{equation}
Taking into account that the operators under the normal ordering
commute we can express the product of $V_{\pm}(\tilde{q}_3^a
\tilde{q}_4^b x) $ with $(x,y)$ on $\partial\lambda$ as a product over
infinite ``angles'' placed at concave and convex corners of
$\lambda$. Each angle corresponds to $\mathsf{Y}(\tilde{q}_3^a
\tilde{q}_4^b x)$ ($(a,b)$ a concave corner) or
$\left(\mathsf{Y}(\tilde{q}_3^a \tilde{q}_4^b x)\right)^{-1}$ ($(a,b)$
a convex corner). The argument can be visualized using
Fig.~\ref{fig:1} in which the angles are drawn as thin lines with blue
color corresponding to $V_{-}$ operators and red to $V_{+}$ ones. It
is clear from the picture that all the lines except $\partial\lambda$
cancel. Hence we have
\begin{equation}
  \label{eq:100}
  \lim_{M\to \infty}  :\prod_{(a',b') \in \partial
    \lambda} \left( V_{s(a',b')} (\mu^{a'+k} q_3^{b' - k/2} x)
  \right):\, =\, : \prod_{(a,b) \in \mathrm{CC}(\lambda)}
  \mathsf{Y}(\tilde{q}_3^a \tilde{q}_4^b x) \prod_{(c,d) \in \mathrm{CV}(\lambda)}
  \mathsf{Y}(\tilde{q}_3^c \tilde{q}_4^d x):.
\end{equation}
Eq.~\eqref{eq:100} implies Proposition~\ref{prop-infty} since every
term in the $p$-expansion of $ \widehat{R}^{q_3,
  M}_{q_1,q_2}(x|v|p,\mu)$ has a well-defined limit written in terms
of $\mathsf{Y}$ operators.

Combining Eqs.~\eqref{eq:92}, \eqref{eq:93}, and~\eqref{eq:100}
together we get the statement of Theorem~\ref{thm-jacobi}.
\end{proof}

\begin{rem}
  Affine $qq$-characters commute with affine screening charges
  (integrals of $S_0^0 S_0(z)$ from Eq.~\eqref{eq:24} for $N=1$). It
  follows from the intertwining property of the operators $\Phi$ and
  $\Phi^{*}$ from which the screening charge is built and the
  $R$-matrices from which the spiral transfer matrix is built.
\end{rem}
\begin{rem}
  We will use the following abbreviated version of the
  diagram~\eqref{eq:75}:
  \begin{equation}
    \label{eq:119}
      \widehat{R}^{q_3, M}_{q_1,q_2}(x|v|p,\mu)    =     \quad  \includegraphics[valign=c]{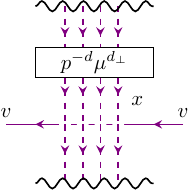}
  \end{equation}
\end{rem}
  
\subsection{Higher $\widehat{A}_0$ $qq$-characters}
\label{sec:higher-a_0-qq}
Higher $qq$-characters of $\widehat{A}_0$ type are just (ordered)
products of the operators~\eqref{eq:78} with different arguments
\begin{dfn}
  Let $x_i \in \mathbb{C}^{*}$, $i=1,\ldots,N$ be generic
  parameters. We call
\begin{equation}
  \label{eq:102}
  \mathsf{X}^{(N)}(\vec{x}|q_1,q_2, \tilde{q}_3, \tilde{q}_4, p) =  \prod_{i=1}^{
    \begin{smallmatrix}
      \to\\
      N
    \end{smallmatrix}
} \mathsf{X}(x_i|q_1,q_2, \tilde{q}_3, \tilde{q}_4, p),
\end{equation}
a higher $qq$-character of $\widehat{A}_0$ type.
\end{dfn}
\begin{prop} We have the following explicit expression
  \begin{multline}
    \label{eq:103}
    \mathsf{X}^{(N)}(\vec{x}|q_1,q_2, \tilde{q}_3, \tilde{q}_4, p) =
\left(\prod_{1 \leq i<j \leq N} \mathcal{S} \left(
      \frac{x_j}{x_i} \right) \right)    \sum_{\vec{\lambda}} p^{\sum_{i=1}^N |\lambda^{(i)}|}
    \prod_{i=1}^N z^{\mathrm{adj}}_{\lambda^{(i)}} ( \tilde{q}_3,
    \tilde{q}_4, q_1, q_2 )\times\\
    \times \prod_{1 \leq i<j \leq N} 
    z^{\mathrm{adj}}_{\lambda^{(i)},\lambda^{(j)}} \left(\left.
        \frac{x_j}{x_i} \right| \tilde{q}_3, \tilde{q}_4, q_1, q_2
    \right)\, : \prod_{i=1}^N \prod_{(a,b)\in \mathrm{CC}(\lambda^{(i)})}
    \mathsf{Y}(\tilde{q}_3^a \tilde{q}_4^b x_i) \prod_{(c,d)\in
      \mathrm{CV}(\lambda^{(i)})} \left(\mathsf{Y}(\tilde{q}_3^c
      \tilde{q}_4^d x_i)\right)^{-1}:
\end{multline}
where the sum is over $N$-tuples of Young diagrams $\vec{\lambda} =
(\lambda^{(1)}, \ldots, \lambda^{(N)})$,
\begin{equation}
  \label{eq:106}
  \mathcal{S}_{q_1,q_2,\tilde{q}_3, \tilde{q}_4}(x) = \exp \left[ - \sum_{n \geq 1} \frac{x^n}{n}
    \frac{(1-q_1^n)(1-q_2^n)}{(1-\tilde{q}_3^{-n})(1-\tilde{q}_4^{-n})}
  \right] = \frac{(x; \tilde{q}_3^{-1}, \tilde{q}_4^{-1})_{\infty}
    (q_1 q_2 x; \tilde{q}_3^{-1}, \tilde{q}_4^{-1})_{\infty}}{(q_1 x;
    \tilde{q}_3^{-1}, \tilde{q}_4^{-1})_{\infty} (q_2 x; \tilde{q}_3^{-1}, \tilde{q}_4^{-1})_{\infty}},
\end{equation}
and
\begin{multline}
  \label{eq:104}
  z^{\mathrm{adj}}_{\lambda,\mu} \left(\left.
    x \right| \tilde{q}_3, \tilde{q}_4, q_1, q_2
\right) = \\
= \left(\prod_{(a,b) \in \lambda} \psi_{q_1,q_2} \left( x\,
    \tilde{q}_3^{\mathrm{Leg}_{\mu}(a,b)+1} \tilde{q}_4^{-
      \mathrm{Arm}_{\lambda}(a,b)} \right) \right) \left( \prod_{(c,d)
    \in \mu} \psi_{q_1,q_2} \left( x\,
    \tilde{q}_3^{-\mathrm{Leg}_{\lambda}(c,d)} \tilde{q}_4^{\mathrm{Arm}_{\mu}(c,d)+1} \right) \right).
\end{multline}
\end{prop}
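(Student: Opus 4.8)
The plan is to prove the formula by normal-ordering the ordered product $\mathsf{X}^{(N)}=\prod_{i=1}^{N}\mathsf{X}(x_i)$ of fundamental characters. The single building block is the two-point contraction of the dressing operators $\mathsf{Y}$. From the commutation relations~\eqref{eq:79} one reads off directly that
\begin{equation}
  \mathsf{Y}(z)\mathsf{Y}(w) = \mathcal{S}(w/z)\, :\mathsf{Y}(z)\mathsf{Y}(w):,
\end{equation}
since the contraction exponent $-\sum_{n\geq 1}\frac{1}{n}\frac{(1-q_1^n)(1-q_2^n)}{(1-\tilde{q}_3^{-n})(1-\tilde{q}_4^{-n})}(w/z)^n$ is exactly the logarithm of $\mathcal{S}$ in Eq.~\eqref{eq:106}; inverting one of the factors flips the sign of the exponent, so a general pair $\mathsf{Y}^{\epsilon}(z)\mathsf{Y}^{\epsilon'}(w)$ contributes $\mathcal{S}(w/z)^{\epsilon\epsilon'}$ with $\epsilon,\epsilon'=\pm 1$. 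First I would record these contractions and note that within a single factor $\mathsf{X}(x_i)$ no reordering is needed, because the $\mathsf{Y}$-product there is already normal-ordered and the self-energy $z^{\mathrm{adj}}_{\lambda^{(i)}}$ is built into the definition~\eqref{eq:78}.

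Next I would expand each factor as a sum over $\lambda^{(i)}$ and collect the three sources of the right-hand side. The powers $p^{\sum_i|\lambda^{(i)}|}$ and the diagonal factors $z^{\mathrm{adj}}_{\lambda^{(i)}}$ come verbatim from inside each $\mathsf{X}(x_i)$, and the final normal-ordered string of $\mathsf{Y}^{\pm}$ over all concave and convex corners of all $\lambda^{(i)}$ is precisely the last line of Eq.~\eqref{eq:103}. The remaining work is to evaluate, for each pair $i<j$, the total cross-contraction between all $\mathsf{Y}^{\pm}$ of factor $i$ (based at $x_i$) and all $\mathsf{Y}^{\pm}$ of factor $j$ (based at $x_j$),
\begin{equation}
  \prod_{\substack{(a,b)\\\text{corner of }\lambda^{(i)}}}\ \prod_{\substack{(c,d)\\\text{corner of }\lambda^{(j)}}}
  \mathcal{S}\!\left(\frac{\tilde{q}_3^{c}\tilde{q}_4^{d}x_j}{\tilde{q}_3^{a}\tilde{q}_4^{b}x_i}\right)^{\epsilon(a,b)\,\epsilon(c,d)},
\end{equation}
with $\epsilon=+1$ on concave and $\epsilon=-1$ on convex corners. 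The $\lambda^{(i)}=\lambda^{(j)}=\varnothing$ contribution is a single pair of concave corners at the origin and produces exactly $\mathcal{S}(x_j/x_i)$, accounting for the prefactor $\prod_{i<j}\mathcal{S}(x_j/x_i)$.

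The main obstacle is the combinatorial identity asserting that this corner product equals $\mathcal{S}(x_j/x_i)\, z^{\mathrm{adj}}_{\lambda^{(i)},\lambda^{(j)}}(x_j/x_i\,|\,\tilde{q}_3,\tilde{q}_4,q_1,q_2)$ of Eq.~\eqref{eq:104}. I would prove it by the same mechanism as in the proof of Theorem~\ref{thm-jacobi}: using the telescoping relations~\eqref{eq:99}--\eqref{eq:100} each corner string is replaced by the semi-infinite boundary product of $V_{\pm}$ along $\partial\lambda^{(i)}$ and $\partial\lambda^{(j)}$, and the cross normal-ordering then yields one factor $\psi_{q_1,q_2}$ (Eq.~\eqref{eq:50}) for every pair consisting of a $V_{+}$ on $\partial\lambda^{(i)}$ to the left of a $V_{-}$ on $\partial\lambda^{(j)}$. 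Such pairs are in bijection with the boxes of $\lambda^{(i)}$ and of $\lambda^{(j)}$ — the bifundamental analogue of the single-diagram box/boundary-pair bijection used around Eq.~\eqref{eq:89} — with arms and legs now measured against the opposite diagram, reproducing the two products in Eq.~\eqref{eq:104} (one over $(a,b)\in\lambda^{(i)}$ carrying $\mathrm{Leg}_{\lambda^{(j)}}$ and $\mathrm{Arm}_{\lambda^{(i)}}$, one over $(c,d)\in\lambda^{(j)}$ carrying $\mathrm{Leg}_{\lambda^{(i)}}$ and $\mathrm{Arm}_{\lambda^{(j)}}$). The delicate points, which I expect to be the real work, are (i) the regularization and convergence of the semi-infinite $V_{\pm}$ boundary products and of the associated $\psi$-product, controlled as in Proposition~\ref{prop-infty} by $|\mu|>|q_3|^{\pm 1/2}$, and (ii) tracking exactly which diagram supplies the arm and which the leg in each box contribution, so that the asymmetric shift by $x_i$ versus $x_j$ lands on the correct factor of the Nekrasov bifundamental. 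As a safeguard, one can instead avoid infinite products altogether and establish the finite corner identity by induction on adding a box to $\lambda^{(i)}$ or $\lambda^{(j)}$, checking that both sides transform by the same ratio of $\psi_{q_1,q_2}$ factors. Assembling the prefactor, the diagonal factors, the bifundamental factors, and the global normal-ordered $\mathsf{Y}$-string then yields Eq.~\eqref{eq:103}.
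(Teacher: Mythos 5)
Your route is genuinely different from the paper's, and it is essentially the alternative the paper's footnote alludes to (``it can also be rederived directly by applying Wick's theorem to the product of $\mathsf{Y}$'s''). The paper does not re-derive the $\vec{\lambda}$-dependent structure at all: it cites \cite{Kimura:2022spi} for the sum over $\vec{\lambda}$ \emph{up to an overall $\vec{\lambda}$-independent prefactor}, and then pins down that prefactor by comparing the $p^0$ terms of both sides, which requires only the single contraction $\mathsf{Y}(x)\mathsf{Y}(y)=\,:\mathsf{Y}(x)\mathsf{Y}(y):\mathcal{S}(y/x)$ of Eq.~\eqref{eq:105}. That is why the paper's proof is three lines. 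Your proposal instead proves everything from scratch: the contraction formula, the observation that each $\mathsf{X}(x_i)$ needs no internal reordering, and the collection of the $p$-powers, diagonal factors $z^{\mathrm{adj}}_{\lambda^{(i)}}$ and the global normal-ordered $\mathsf{Y}$-string are all correct, and you correctly isolate the one nontrivial input: the corner-to-box identity stating that the finite product of $\mathcal{S}^{\epsilon\epsilon'}$ over pairs of corners equals $\mathcal{S}(x_j/x_i)\,z^{\mathrm{adj}}_{\lambda^{(i)},\lambda^{(j)}}$ of Eq.~\eqref{eq:104}. Your route buys self-containedness; the paper's buys brevity at the cost of an external citation.

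However, your primary argument for that identity contains a genuine error. When you pass to the $V_{\pm}$ boundary picture, the \emph{entire} boundary string of $\lambda^{(i)}$ stands to the left of the \emph{entire} boundary string of $\lambda^{(j)}$, so every $V_{+}$ on $\partial\lambda^{(i)}$ is to the left of every $V_{-}$ on $\partial\lambda^{(j)}$: the set of contracting pairs is the full infinite Cartesian product, and it cannot be in bijection with the finite set of boxes of $\lambda^{(i)}$ and $\lambda^{(j)}$. The finite pair/box bijection used around Eq.~\eqref{eq:89} works only \emph{within a single} boundary $\partial\lambda$, where the ordering by increasing $a'$ leaves only finitely many $V_{+}$'s to the left of $V_{-}$'s. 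In the bifundamental case the correct statement is that the infinite product over all cross pairs equals the bulk (empty-diagram) product, which by Eq.~\eqref{eq:50} and Eq.~\eqref{eq:106} is exactly
\begin{equation}
  \prod_{a\geq 1}\prod_{b\geq 0}\psi_{q_1,q_2}\!\left(\tilde{q}_4^{\,a}\tilde{q}_3^{\,-b}\,\frac{x_j}{x_i}\right)=\mathcal{S}\!\left(\frac{x_j}{x_i}\right),
\end{equation}
times a \emph{finite} deviation coming from the finitely many boundary points whose position or sign differs from the vacuum configuration, and it is this deviation that must be reorganized into $z^{\mathrm{adj}}_{\lambda^{(i)},\lambda^{(j)}}$. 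Your own safeguard --- induction on adding one box, matching the local change of the corner product against the change of the arm/leg product on both sides --- is the argument that actually closes this gap, and it has the added virtue of involving only finite products so that no convergence assumption is needed; as literally written, though, the bijection step fails.
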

\begin{proof}
  We only need to find the factors that appear due to normal ordering
  of the $\mathsf{Y}$ operators belonging to different
  $\mathsf{X}(x_i|q_1,q_2, \tilde{q}_3, \tilde{q}_4, p)$. The sum over
  $\vec{\lambda}$ up to an overall factor was obtained e.g.\
  in\footnote{Of course, it can also be rederived directly by applying
    Wick's theorem to the product of $Y$'s.}~\cite{Kimura:2022spi}. To
  establish the proportionality factor it is enough to consider the
  term with all $\lambda^{(i)} = \varnothing$:
  \begin{equation}
    \label{eq:108}
    \mathsf{X}^{(N)}(\vec{x}|q_1,q_2, \tilde{q}_3, \tilde{q}_4, p) =
    \prod_{i=1}^{
      \begin{smallmatrix}
        \to\\
        N
      \end{smallmatrix}
    } \mathsf{Y}(x_i) + \mathcal{O}(p).
  \end{equation}
  Using the Wick's theorem we find
  \begin{equation}
    \label{eq:105}
    \mathsf{Y}(x) \mathsf{Y}(y) =  \, :\mathsf{Y}(x) \mathsf{Y}(y):
    \mathcal{S} \left( \frac{y}{x} \right),
  \end{equation}
  so that
  \begin{equation}
    \label{eq:109}
    \mathsf{X}^{(N)}(\vec{x}|q_1,q_2, \tilde{q}_3, \tilde{q}_4, p) =
\prod_{1 \leq i<j \leq N} \mathcal{S}_{q_1,q_2,\tilde{q}_3, \tilde{q}_4} \left(
      \frac{x_j}{x_i} \right)    \, :\prod_{i=1}^N \mathsf{Y}(x_i):\, + \mathcal{O}(p)
  \end{equation}
  which gives the desired formula~\eqref{eq:103}.
\end{proof}

Applying Theorem~\ref{thm-jacobi} to the higher
$qq$-character~\eqref{eq:103} we get
\begin{cor}[Noncommutative Jacobi identity for higher $qq$-characters]
  \begin{equation}
    \label{eq:110}
    \prod_{i=1}^{
      \begin{smallmatrix}
        \to\\
        N
      \end{smallmatrix}
} \widehat{R}^{q_3, \infty}_{q_1,q_2}(x_i|v|p,\mu) = \sum_{\vec{k} \in \mathbb{Z}^N}
    p^{\sum_{i=1}^N \frac{k_i^2}{2}} (-\tilde{v})^{-\sum_{i=1}^N k_i}
    \mathsf{X}^{(N)}\left(\tilde{q}_4^{- \vec{k}} \vec{x}|q_1,q_2,
      \tilde{q}_3, \tilde{q}_4, p \right)
    q_3^{- \sum_{i=1}^N k_i x_i \partial_{x_i}}
  \end{equation}
  where $\tilde{q}_3$, $\tilde{q}_4$ are defined by
  Eqs.~\eqref{eq:73},~\eqref{eq:97} and $\tilde{q}_4^{- \vec{k}}
  \vec{x} = (\tilde{q}_4^{- k_1} x_1, \ldots, \tilde{q}_4^{- k_N}
  x_N)$.
  \label{cor-higher-qq}
\end{cor}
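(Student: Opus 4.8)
The plan is to obtain Eq.~\eqref{eq:110} as a direct consequence of Theorem~\ref{thm-jacobi}, applied separately to each of the $N$ factors $\widehat{R}^{q_3,\infty}_{q_1,q_2}(x_i|v|p,\mu)$ in the ordered product, combined with the definition~\eqref{eq:102} of the higher $qq$-character as an ordered product of single $qq$-characters. First I would substitute, for every $i=1,\ldots,N$, the single-factor expansion
\begin{equation*}
\widehat{R}^{q_3,\infty}_{q_1,q_2}(x_i|v|p,\mu) = \sum_{k_i \in \mathbb{Z}} p^{\frac{k_i^2}{2}} (-\tilde{v})^{k_i}\, \mathsf{X}(\tilde{q}_4^{k_i} x_i|q_1,q_2,\tilde{q}_3,\tilde{q}_4,p)\, q_3^{k_i x_i \partial_{x_i}}
\end{equation*}
provided by Theorem~\ref{thm-jacobi}, noting that all $N$ factors carry the same spectral parameter $v$, hence the same $\tilde{v}=v/q_3$, and the same $\tilde{q}_3,\tilde{q}_4$.

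Second, I would expand the ordered product of these $N$ series and collect the individual summations into a single sum over $\vec{k}=(k_1,\ldots,k_N)\in\mathbb{Z}^N$. Since $\tilde{v}$ is common to all factors and the prefactors $p^{k_i^2/2}(-\tilde{v})^{k_i}$ are scalars, hence central, they factor out as $p^{\sum_i k_i^2/2}(-\tilde{v})^{\sum_i k_i}$, leaving the operator $\prod_{i=1}^{N} \bigl( \mathsf{X}(\tilde{q}_4^{k_i} x_i)\, q_3^{k_i x_i \partial_{x_i}} \bigr)$ taken in the prescribed left-to-right order. Third, I would move each shift operator $q_3^{k_i x_i \partial_{x_i}}$ to the far right: the key observation is that $q_3^{k_i x_i \partial_{x_i}}$ acts only on the variable $x_i$, whereas every factor standing to its right is built from $\mathsf{X}(\tilde{q}_4^{k_j} x_j)$ with $j>i$, which involves only the distinct variables $x_j$ and the Heisenberg generators of $\mathcal{F}_{q_1,q_2}^{(1,0)}(v)$. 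Hence all these shifts commute through without acting on anything and collect into $q_3^{\sum_i k_i x_i \partial_{x_i}}$, while the remaining ordered product $\prod_{i=1}^N \mathsf{X}(\tilde{q}_4^{k_i} x_i)$ is, by definition~\eqref{eq:102}, precisely $\mathsf{X}^{(N)}(\tilde{q}_4^{\vec{k}}\vec{x}|q_1,q_2,\tilde{q}_3,\tilde{q}_4,p)$. Relabelling the summation variable $\vec{k}\to-\vec{k}$, which is harmless since the sum runs over all of $\mathbb{Z}^N$, then converts the signs and arguments into those of Eq.~\eqref{eq:110}, completing the derivation.

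The one point genuinely requiring care is the legitimacy of rearranging the $N$-fold product of infinite series into the single sum over $\vec{k}$, i.e.\ that the right-hand side is a well-defined formal power series in $p^{1/2}$ and that the reordering of the resulting multiple sum is allowed. This follows from Proposition~\ref{prop-infty}: each factor is such a series, and because the power of $p$ contributed by a pair $(k_i,\lambda^{(i)})$ is $\tfrac12 k_i^2+|\lambda^{(i)}|\ge 0$ and grows with $|k_i|$, only finitely many $\vec{k}$ (and finitely many $\vec{\lambda}$) contribute to any fixed power of $p$, exactly as in the finiteness argument following Eq.~\eqref{eq:92}. I do not expect any deeper obstacle: the entire analytic content is carried by the single-factor Theorem~\ref{thm-jacobi}, and this corollary amounts to the multiplicative bookkeeping of commuting the diagonal shift operators through $\mathsf{X}$'s that depend on disjoint sets of variables.
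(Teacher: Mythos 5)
Your proposal is correct and takes essentially the same route as the paper, which presents Eq.~\eqref{eq:110} as an immediate consequence of applying Theorem~\ref{thm-jacobi} factor-by-factor to the ordered product and invoking the definition~\eqref{eq:102} of $\mathsf{X}^{(N)}$. Your write-up simply makes explicit the routine bookkeeping the paper leaves implicit: factoring out the scalar prefactors, commuting each shift $q_3^{k_i x_i \partial_{x_i}}$ through the later $\mathsf{X}$'s (which depend on disjoint variables), relabelling $\vec{k}\to-\vec{k}$, and checking the formal $p^{1/2}$-series rearrangement via the finiteness argument of Proposition~\ref{prop-infty}.
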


It is instructive to draw a diagram corresponding to the product of
spiral transfer matrices in the l.h.s.\ of~\eqref{eq:110}. We use the
abbreviated notation~\eqref{eq:119}. The product of spiral transfer
matrices in the Jacobi identity for higher
$qq$-characters~\eqref{eq:110} is given by
\begin{equation}
  \label{eq:116}
\prod_{i=1}^{
      \begin{smallmatrix}
        \to\\
        N
      \end{smallmatrix}
    } \widehat{R}^{q_3, \infty}_{q_1,q_2}(x_i|v|p,\mu) = \includegraphics[valign=c]{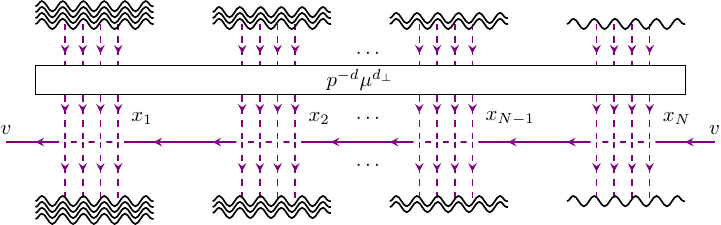}
\end{equation}
Let us remind that each ``coil'' in Eq.~\eqref{eq:116} is actually a
spiral winding an infinite number of times around the compactification
cylinder.

Finally we would like to reorganize the diagram~\eqref{eq:116}
similarly to what we did with Shiraishi functions in
Eq.~\eqref{eq:114}. Instead of composing the operators corresponding
to infinite coils with each coil centered around $x_i$ we first take
the the product of $N$ $R$-matrices at $x_i$ and then wrap $N$ dashed
ends around the cylinder together. We make the following
\begin{conj}
  Diagram
  \begin{equation}
    \label{eq:117}
    \includegraphics[valign=c]{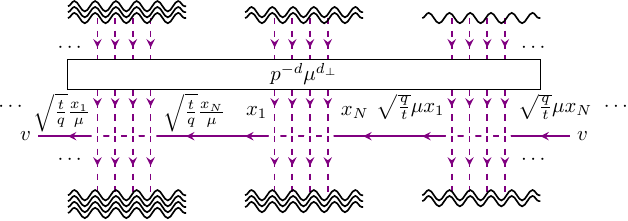}
  \end{equation}
  and~\eqref{eq:116} give identical operators.
\label{conj-op}
\end{conj}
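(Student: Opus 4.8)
The plan is to establish the equality at finite truncation level $M$ and then pass to the limit $M\to\infty$, in close analogy with the approach one would take for Conjecture~\ref{conj-redraw}. At finite $M$ both diagrams~\eqref{eq:116} and~\eqref{eq:117} are ordered products of the Miura $R$-matrices $R^{q_3}_{q_1,q_2}(x_i|v)$ of Theorem~\ref{thm-miura-r}, interspersed with the grading operators $p^{d}\mu^{d_\perp}$ that implement the winding through the portals. They differ \emph{only} in the order in which the crossings are composed along the Fock line: in~\eqref{eq:116} all $2M$ windings of the $i$-th vector line are carried out before any winding of the $(i+1)$-th, whereas in~\eqref{eq:117} the $N$ lines are advanced one turn at a time, so that the front layer is the product $\prod_i R^{q_3}_{q_1,q_2}(x_i|v)$ and the wrapping is done jointly. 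The heart of the argument is therefore to convert the ``coil-by-coil'' ordering into the ``layer-by-layer'' ordering using the Yang--Baxter equation for the universal $R$-matrix of $\mathcal{A}$.

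Concretely, writing $\mathcal{V}_i$ for the $i$-th copy $\mathcal{V}^{*}_{q_3}(x_i)$ and $\mathcal{F}$ for the Fock factor $\mathcal{F}^{(1,0)}_{q_1,q_2}(v)$, I would transport the crossings of $\mathcal{V}_j$ with $\mathcal{F}$ past those of $\mathcal{V}_i$ (for $i<j$) by repeatedly applying
\begin{equation}
  \label{eq:ybe-reorder-plan}
  \mathcal{R}_{\mathcal{V}_i,\mathcal{V}_j}\,
  \mathcal{R}_{\mathcal{V}_i,\mathcal{F}}\,
  \mathcal{R}_{\mathcal{V}_j,\mathcal{F}}
  =
  \mathcal{R}_{\mathcal{V}_j,\mathcal{F}}\,
  \mathcal{R}_{\mathcal{V}_i,\mathcal{F}}\,
  \mathcal{R}_{\mathcal{V}_i,\mathcal{V}_j}
\end{equation}
in the triple $\mathcal{V}_i\otimes\mathcal{V}_j\otimes\mathcal{F}$. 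Each move swaps one winding of line $j$ through one winding of line $i$ at the price of a braiding $\mathcal{R}_{\mathcal{V}_i,\mathcal{V}_j}$ acting purely on the two dashed lines. Since these vector--vector braidings do not touch $\mathcal{F}$, they commute with the composition along the Fock line and can be slid freely to the boundary of the diagram, where they assemble into the relative rearrangement of the $N$ dashed ends. The remaining step is to check that the accumulated braidings implement precisely the passage from the nested (separate) wrapping of~\eqref{eq:116} to the parallel (joint) wrapping of~\eqref{eq:117}, leaving no residual operator acting on $\mathcal{F}$.

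As an independent consistency check — and a possible alternative route to the full statement — I would evaluate the reordered diagram directly in the spirit of the proof of Theorem~\ref{thm-jacobi} and compare with the explicit higher $qq$-character formula~\eqref{eq:103}, which is the content of Corollary~\ref{cor-higher-qq} for the left-hand side. In that language the vector--vector braidings produced by~\eqref{eq:ybe-reorder-plan} should reassemble, after normal ordering the $V_\pm$ vertex operators, into exactly the cross factors $\mathcal{S}_{q_1,q_2,\tilde{q}_3,\tilde{q}_4}(x_j/x_i)$ and $z^{\mathrm{adj}}_{\lambda^{(i)},\lambda^{(j)}}$ appearing in~\eqref{eq:103}; these are the fingerprint of the correct relative ordering of the $\mathsf{Y}$ operators. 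Matching them order by order in $p$ would confirm that the layer-by-layer presentation reproduces the same operator.

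The main obstacle is the limit $M\to\infty$: the number of Yang--Baxter moves needed to reorder the two presentations grows without bound, so~\eqref{eq:ybe-reorder-plan} cannot be applied finitely many times outright. The way around this is to argue order by order in the formal parameter $p^{1/2}$, using Proposition~\ref{prop-infty}: at each fixed power of $p$ the total size $\sum_i|\lambda^{(i)}|$ of the contributing Young diagrams is bounded, so only finitely many of the $V_\pm$ insertions differ from the empty configuration and only finitely many crossings are genuinely active. The delicate bookkeeping is to make this truncation compatible with the grading shifts $p^{d}\mu^{d_\perp}$, which move the spectral parameters of the vector lines as they wind, and to show that the braidings generated at deeper layers do not feed back into lower orders in $p$. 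This control of the infinitely many commutations — precisely the difficulty flagged for the analogous Shiraishi reordering — is where the real work lies.
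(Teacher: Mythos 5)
The first thing to note is that the paper offers no proof of this statement: it is precisely Conjecture~\ref{conj-op}, and the sentence following it says only that a possible proof ``seems to require an infinite number of applications of the Yang--Baxter equation.'' Your proposal adopts exactly that strategy, so there is no divergence of approach to discuss --- the only question is whether you have closed the gap the paper deliberately leaves open, and you have not: the two steps that constitute the actual content of the conjecture are asserted rather than proved, and both are problematic as stated.

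Concretely: (i) your claim that the braidings $\mathcal{R}_{\mathcal{V}_i,\mathcal{V}_j}$ produced by each Yang--Baxter move ``commute with the composition along the Fock line and can be slid freely to the boundary'' is false. Such a braiding shares the tensor factor $\mathcal{V}_i$ with every crossing $R_{\mathcal{V}_i,\mathcal{F}}$ and the factor $\mathcal{V}_j$ with every $R_{\mathcal{V}_j,\mathcal{F}}$, so it commutes with neither; it commutes only with operators acting purely on $\mathcal{F}$ and with crossings of the other lines $k\neq i,j$. The braidings therefore pile up in the interior of the diagram, and extracting them requires further Yang--Baxter moves which generate further braidings; showing that this process terminates with no residual operator on the external dashed legs \emph{is} the conjecture, not a step toward it. Note also that $R$-matrices on $\mathcal{V}^{*}_{q_3}\otimes\mathcal{V}^{*}_{q_3}$ are nowhere constructed in the paper, so even the single move you write down requires first establishing their existence and invertibility at the spectral ratios (powers of $\mu$ times $x_j/x_i$) that actually occur, together with the behaviour of their arguments under the portal operators $p^{d}\mu^{d_{\perp}}$. (ii) Your scheme for taming the infinitely many moves rests on a false premise: it is not true that at a fixed order in $p$ ``only finitely many crossings are genuinely active.'' At every order, including $p^{0}$, \emph{all} infinitely many crossings contribute nontrivial $V_{\pm}$ insertions --- these are exactly what assemble into the operators $\mathsf{Y}$ via the infinite product~\eqref{eq:99}; what is finite at fixed order in $p$ is only the number of deviations from the vacuum Maya pattern, not the number of active crossings. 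Since each Yang--Baxter move acts on a whole $R$-matrix, which mixes all powers of $p$ through its shift part, a bound on the number of moves needed at a fixed order in $p$ requires an argument you do not supply. Your fallback idea --- evaluating diagram~\eqref{eq:117} directly by the vertex-operator method of Theorem~\ref{thm-jacobi} and matching the result against~\eqref{eq:103}, which Corollary~\ref{cor-higher-qq} already identifies with~\eqref{eq:116} --- is in fact the more promising route, and would constitute a genuine proof if carried out; but in your write-up it appears only as an unexecuted ``consistency check.''
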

A possible proof seems to require an infinite number of applications
of the Yang-Baxter equation.

\section{Elliptic deformations of the RS system}
\label{sec:comp-vs-spir}

In this section we introduce elliptic deformations of the setup of
sec.~\ref{sec:miura-r-matrix}. In sec.~\ref{sec:ellipt-deform-as} we
describe \emph{spiralization} of both the generating function of the
trigonometric RS Hamiltonians $\mathcal{O}_N(\vec{x}|v)$ and the
network of intertwiners corresponding to their eigenfunctions. The
former gives rise to the spiral transfer matrices considered in
sec.~\ref{sec:nonc-jacobi-ident} while the latter becomes the
Shiraishi functions as detailed in sec.~\ref{sec:affine-vert-oper}
(these relations require Conjectures~\ref{conj-redraw}
and~\ref{conj-op}). The spiralized setup has two extra parameters
compared to the trigonometric RS system, $p$ and $\kappa$.

We point out that there are at least two distinguished loci in the
$(\kappa, p)$ parameter plane. One of them corresponds to the
(trigonometric limit of the) Koroteev-Shakirov double elliptic system
(sec.~\ref{sec:korot-shak-syst}) and another to the eRS system
(sec.~\ref{sec:elliptic-rs-system}).

\subsection{Elliptic deformation as spiralization: motivation}
\label{sec:ellipt-deform-as}
Let us notice that compactification of a network of intertwiners and
$R$-matrices in the vertical and in the horizontal directions are the
same. Indeed, for any representations $A_1$, $A_2$, $B_1$, $B_2$ of
the algebra $\mathcal{A}$ let $\mathcal{T}: A_1 \otimes A_2 \to B_1
\otimes B_2$ be an intertwining operator. Since $\mathcal{T}$ is an
intertwiner it commutes with the simultaneous action of the grading
operators $d$ and $d_{\perp}$ on both tensor factors:
\begin{equation}
  \label{eq:88}
  (p^d \mu^{d\perp}\otimes p^d \mu^{d\perp}) \mathcal{T} = \mathcal{T} (p^d \mu^{d\perp}\otimes p^d \mu^{d\perp}).
\end{equation}
We therefore have
\begin{equation}
  \label{eq:112}
  \left( \mathcal{T} (p^d \mu^{d\perp}\otimes 1) \right)^K
  = \left( \mathcal{T} (p^d \mu^{d\perp}\otimes p^d
    \mu^{d\perp}) (1\otimes p^{-d} \mu^{-d\perp}) \right)^K = (p^d \mu^{d\perp}\otimes p^d
    \mu^{d\perp})^K  \left( \mathcal{T}  (1\otimes p^{-d} \mu^{-d\perp}) \right)^K.
\end{equation}
The overall grading operator $(p^d \mu^{d\perp}\otimes p^d
\mu^{d\perp})^K$ is usually immaterial and amounts to shifting the
spectral parameters of the representations. In pictorial language this
can be seen as a global translation a diagram.  In the cases we would
like to consider here $A_1$ will be a tensor product of vector
representations while $A_2$ a tensor product of horizontal Fock
representations. In the diagrammatic language the
identity~\eqref{eq:112} means that one can consider spiralization of a
network of intertwiners either in the vertical direction with vertical
dashed lines passing through the portals above and below the
picture or in the horizontal direction with horizontal solid lines
passing through portals to the left and right of the picture.

We illustrate the argument above by redrawing the
diagram~\eqref{eq:114} for the Shiraishi functions using the
identity~\eqref{eq:112}. In this case
\begin{align}
  \label{eq:111}
  A_1 &=
  (\mathcal{V}_{q_1}^{*})^{\otimes N},\\
  A_2 &= \bigotimes_{i=1}^N \mathcal{F}_{q_1,q_2}^{(1,0)}(\tilde{x}_{N-i+1}).
\end{align}
Applying Eq.~\eqref{eq:112} we get the following diagrammatic
expression for the Shiraishi functions:
\begin{equation}
  \label{eq:115}
  \widehat{G}_{q,t^{-1}}(\vec{s})  \widehat{f}_N(\vec{x}, p^{\frac{1}{N}} | \vec{s},
    \kappa^{\frac{1}{N}}|q,t)\stackrel{\mathrm{?}}{=}\quad  \includegraphics[valign=c]{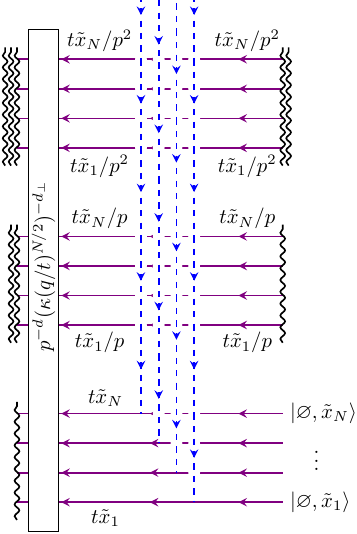}
\end{equation}
The identification~\eqref{eq:114} depended on
Conjecture~\ref{conj-shir} in the first place, hence the question
mark.

It is natural to act on the diagram~\eqref{eq:115} with an operator
down the blue dashed lines sticking out on the top of the picture. The
natural operator for this task is given by the spiralization of the
transfer matrix~\eqref{eq:46}. To this end we apply the observation
above to the spiral transfer matrix in the form Eq.~\eqref{eq:117}
with $q_1 \leftrightarrow q_3$. The result is
\begin{equation}
  \label{eq:118}
  \prod_{i=1}^{
      \begin{smallmatrix}
        \to\\
        N
      \end{smallmatrix}
    } \widehat{R}^{q_1, \infty}_{q_2,q_3}\left(x_i|v|p,\kappa q_3^{-N/2}\right) \stackrel{\mathrm{?}}{=} \quad  \includegraphics[valign=c]{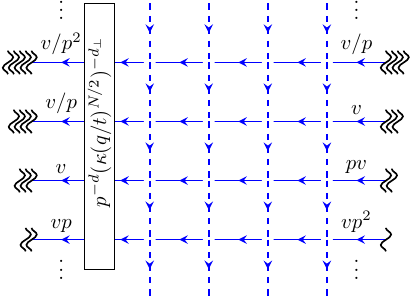}
\end{equation}
The question mark is again to indicate that the form~\eqref{eq:117}
that we have started with requires Conjecture~\ref{conj-op}.

Following the motivating arguments above we demonstrate how for two
particular values of $(p,\kappa)$ the spiralling operators and
prospective eigenfunctions simplify. 

\subsection{Trigonometric Koroteev-Shakirov system}
\label{sec:korot-shak-syst}
In this section we obtain trigonometric Koroteev-Shakirov Hamiltonians
from the product of spiral transfer matrices~\eqref{eq:110}. Recall
the definition of the trigonometric limit of the double elliptic
Koroteev-Shakirov Hamiltonians.
\begin{dfn}[\cite{Koroteev:2019gqi}]
  Let $q_2, q_3 \in \mathbb{C}^{*}$ and let $p$ be a formal
  parameter. The trigonometric Koroteev-Shakirov operator
  $\widehat{\mathcal{O}}^{\mathrm{KS}}_N(\vec{x}|v,q_1,q_2,p):(\mathcal{V}_{q_1}^{*})^{\otimes
    N} \to (\mathcal{V}_{q_1}^{*})^{\otimes N}$ is given by
  \begin{equation}
    \label{eq:113}
    \widehat{\mathcal{O}}^{\mathrm{KS}}_N(\vec{x}|v,q_2,q_3,p) =
    \sum_{\tilde{k} \in \mathbb{Z}^N} p^{\sum_{i=1}^N \frac{k_i^2}{2}}
    (-v)^{\sum_{i=1}^N k_i} \prod_{i<j} \frac{q_2^{-k_i} x_i -
      q_2^{-k_j} x_j}{x_i - x_j} q_3^{\sum_{i=1}^N k_i x_i \partial_{x_i}}
  \end{equation}
  Trigonometric Koroteev-Shakirov Hamiltonians
  $H^{\mathrm{tKS},(N)}_n$ are given by
  \begin{equation}
    \label{eq:121}
      \left[ \widehat{\mathcal{O}}^{\mathrm{KS}}_N\left(\vec{x}\left| v,q_1,q_2,p
      \right.\right)
    \right]_0^{-1}    \widehat{\mathcal{O}}^{\mathrm{KS}}_N(\vec{x}|v,q_2,q_3,p) 
 = \sum_{n \in \mathbb{Z}} H^{\mathrm{tKS},(N)}_n v^n,
  \end{equation}
where $[\ldots]_0$ denote the coefficient in front of $v^0$.
\end{dfn}
\begin{rem}
  Due to the quasiperiodicity of
  $\widehat{\mathcal{O}}^{\mathrm{KS}}_N(\vec{x}|v,q_2,q_3,p)$ there
  are only $N$ independent Hamiltonians $H^{\mathrm{tKS},(N)}_n$.
\end{rem}
\begin{rem}
  The coefficients of the Laurent expansion of
  \begin{equation}
    \label{eq:122}
\left( \widehat{\mathcal{O}}^{\mathrm{KS}}_N(\vec{x}|u,q_2,q_3,p) \right)^{-1}    \widehat{\mathcal{O}}^{\mathrm{KS}}_N(\vec{x}|v,q_2,q_3,p) 
  \end{equation}
  in $u$ and $v$ are linear combinations of $H^{\mathrm{tKS},(N)}_n$,
  so one can consider~\eqref{eq:122} instead of~\eqref{eq:121} as the
  generating function of the set of Hamiltonians
  $H^{\mathrm{tKS},(N)}_n$.
\end{rem}
\begin{conj}[\cite{Koroteev:2019gqi}]
  \begin{equation}
    \label{eq:123}
    [H^{\mathrm{tKS},(N)}_n, H^{\mathrm{tKS},(N)}_m] = 0 \qquad
    \text{for all } n, m.
  \end{equation}
\end{conj}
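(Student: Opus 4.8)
The plan is to reduce the conjecture to the commutativity of a one-parameter family of spiral transfer matrices, in direct analogy with the derivation of $[H_k^{\mathrm{tRS},(N)},H_l^{\mathrm{tRS},(N)}]=0$ in Theorem~\ref{thm-O-comm}. For a spectral parameter $v$ I set
\[
  T^{(N)}(v) = \prod_{i=1}^{\begin{smallmatrix}\to\\N\end{smallmatrix}}\widehat{R}^{q_1,\infty}_{q_2,q_3}\!\left(x_i|v|p,\kappa q_3^{-N/2}\right),
\]
which by construction is an element of $\mathrm{Hom}\big((\mathcal{V}^{*}_{q_1})^{\otimes N}\otimes\mathcal{F}^{(1,0)}_{q_2,q_3}(v),\,\mathcal{F}^{(1,0)}_{q_2,q_3}(v)\otimes(\mathcal{V}^{*}_{q_1})^{\otimes N}\big)$, and define the difference operator $D_N(\vec{x}|v)=\langle\varnothing,v|T^{(N)}(v)|\varnothing,v\rangle$ by taking the vacuum matrix element over the horizontal Fock space. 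I would first show, exactly as in Section~\ref{sec:miura-r-matrix-1}, that $D_N(\vec{x}|v)$ is a commuting family in $v$, and then identify its members at the distinguished locus of Eq.~\eqref{eq:118} with the Koroteev--Shakirov operators~\eqref{eq:113}.

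For the commutativity of the family I would run the Yang--Baxter argument of Theorem~\ref{thm-O-comm} with $\mathcal{F}^{(1,0)}_{q_2,q_3}(v)$ as the auxiliary space and $(\mathcal{V}^{*}_{q_1})^{\otimes N}$ as the quantum space. Writing $D_N(\vec{x}|u)\,D_N(\vec{x}|v)$ as a single matrix element over $\mathcal{F}^{(1,0)}_{q_2,q_3}(u)\otimes\mathcal{F}^{(1,0)}_{q_2,q_3}(v)$, I would insert $1=\check{\mathcal{R}}\,\check{\mathcal{R}}^{-1}$ on the two auxiliary Fock factors and transport $\check{\mathcal{R}}$ across the whole diagram using the Yang--Baxter equation for the universal $R$-matrix of $\mathcal{A}$, precisely as in Eq.~\eqref{eq:62}. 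Since each factor $\widehat{R}^{q_1,\infty}_{q_2,q_3}$ lands the horizontal Fock line back in $\mathcal{F}^{(1,0)}_{q_2,q_3}(v)$ with its spectral parameter unchanged, the scalar $f^{q_2,q_3}_{q_2,q_3}(u/v)$ produced on one side by Proposition~\ref{prop-R-hor-vac} is cancelled by its inverse on the other, yielding $D_N(\vec{x}|u)\,D_N(\vec{x}|v)=D_N(\vec{x}|v)\,D_N(\vec{x}|u)$. Because the operator is graded by $\sum_i k_i$ in powers of $v$, this is equivalent to $[O_n,O_m]=0$ for the coefficient operators $O_n$; as $[\,\cdot\,]_0=O_0$ then commutes with every $O_n$, the normalized Hamiltonians $H^{\mathrm{tKS},(N)}_n$ of Eq.~\eqref{eq:121} inherit $[H^{\mathrm{tKS},(N)}_n,H^{\mathrm{tKS},(N)}_m]=0$.

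It remains to identify $D_N(\vec{x}|v)$ with $\widehat{\mathcal{O}}^{\mathrm{KS}}_N(\vec{x}|v,q_2,q_3,p)$ at the chosen locus. Here I would take the vacuum matrix element of the higher Jacobi identity of Corollary~\ref{cor-higher-qq} (with $q_1\leftrightarrow q_3$): since $\langle\varnothing|:\!\prod\mathsf{Y}\!:|\varnothing\rangle=1$, the operator content drops out and one is left with the sum over $\vec{k}$ and $\vec{\lambda}$ of the prefactors $\prod_i z^{\mathrm{adj}}_{\lambda^{(i)}}\prod_{i<j}z^{\mathrm{adj}}_{\lambda^{(i)},\lambda^{(j)}}\prod_{i<j}\mathcal{S}(x_j/x_i)$. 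The point of the locus $\mu=\kappa q_3^{-N/2}$ is that the Nekrasov factors $z^{\mathrm{adj}}_{\lambda^{(i)}}$ are tuned to vanish for every nonempty $\lambda^{(i)}$, so that the partition sums truncate to $\vec{\lambda}=\vec{\varnothing}$; the surviving product of $\mathcal{S}$-factors, evaluated with the $\tilde{q}_4^{-\vec{k}}$-shifted arguments, then telescopes to the rational prefactor $\prod_{i<j}\tfrac{q_2^{-k_i}x_i-q_2^{-k_j}x_j}{x_i-x_j}$ of~\eqref{eq:113}. This last step is a finite Pochhammer computation and is not where the difficulty lies.

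The main obstacle is that each factor $T^{(N)}(v)$ is an infinite spiral, so transporting the inserted $\check{\mathcal{R}}$ from one end of the diagram to the other demands infinitely many applications of the Yang--Baxter equation — a difficulty of the same nature as (though logically separate from) the diagram reorderings of Conjectures~\ref{conj-redraw} and~\ref{conj-op}. To make the argument rigorous I would work with the finite truncations $\widehat{R}^{q_1,M}_{q_2,q_3}$ of Eq.~\eqref{eq:66}, for which the transport is a finite manipulation, and pass to the limit $M\to\infty$ as a formal power series in $p^{1/2}$ via Proposition~\ref{prop-infty}. The delicate point is the boundary: the grading operators and the permutation $P$ at the two ends of the truncated product~\eqref{eq:66} generate terms that survive at finite $M$ and must be shown to assemble, in the limit, into the single cancelling pair $f^{q_2,q_3}_{q_2,q_3}(u/v)\big(f^{q_2,q_3}_{q_2,q_3}(u/v)\big)^{-1}$ rather than obstructing the telescoping. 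Controlling these boundary contributions uniformly in $M$ is where essentially all the effort would go.
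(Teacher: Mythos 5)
You should first be aware that the paper contains \emph{no proof} of \eqref{eq:123}: it is quoted as an open conjecture of Koroteev and Shakirov, and the only thing the paper establishes in its direction is the identification \eqref{eq:124} of the KS generating function with the vacuum matrix element of spiral transfer matrices, followed by the explicit remark that the author merely \emph{hopes} commutativity ``may be obtained by using manipulations similar to those in sec.~\ref{sec:miura-r-matrix-1}''. Your proposal is an attempt to realize that hope. Two of its ingredients are sound: the reduction from commutativity of the $v$-family $D_N(\vec{x}|v)$ to commutativity of the normalized Hamiltonians of \eqref{eq:121}, and the identification of $D_N$ with the KS operator at a special point --- the latter being essentially the paper's own proof of \eqref{eq:124}, although you misstate the locus: the truncation $z^{\mathrm{adj}}_{\lambda}=\delta_{\lambda,\varnothing}$ and the collapse of $\mathcal{S}$ to $1-x$ occur only at the single value \eqref{eq:127}, i.e.\ $(\tilde{q}_3,\tilde{q}_4)=(q_1^{-1},q_2^{-1})$, not on the generic relabeling $\mu=\kappa q_3^{-N/2}$ of Eq.~\eqref{eq:118}, where the full Shiraishi-type sums over $\vec{\lambda}$ survive.

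The genuine gap is your central claim that the Yang--Baxter argument of Theorem~\ref{thm-O-comm} carries over, with only boundary terms at finite truncation $M$ left to control. That argument works for tRS for a structural reason that disappears for spirals: there each dashed line crosses each horizontal Fock line exactly \emph{once}, and crossings involving different dashed lines act on disjoint tensor factors, so $\mathcal{O}_N(u)\mathcal{O}_N(v)$ can be reordered into adjacent pairs $R(x_i|u)R(x_i|v)$, which is precisely the pattern $\check{\mathcal{R}}_{uv}\,R_{iu}R_{iv}=R_{iv}R_{iu}\,\check{\mathcal{R}}_{uv}$ to which the Yang--Baxter equation applies, $N$ times in total. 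In $D_N(u)D_N(v)$, by contrast, each factor $\mathcal{V}^{*}_{q_1}$ is crossed $2M\to\infty$ times by the $u$-line and $2M\to\infty$ times by the $v$-line, with \emph{every} $u$-crossing separated from \emph{every} $v$-crossing by grading operators and further crossings acting on the \emph{same} $\mathcal{V}^{*}_{q_1}$ factor; none of these operators mutually commute, so the product can never be brought into the adjacent-pair form, and no sequence of Yang--Baxter moves, finite or infinite, transports the inserted $\check{\mathcal{R}}_{uv}$ through the diagram. This is a bulk obstruction, already present at $M=1$, not a boundary effect, so working with truncations and taking $M\to\infty$ does not repair it. The paper itself points at exactly this: the proof of Theorem~\ref{thm-ers-comm} stresses that ``it is crucial that the horizontal lines are circles and not spirals'' so that the two $R$-matrices cancel after passing around the loop, and the discussion after \eqref{eq:137} notes that such manipulations break down when the shift operators do not match --- on a spiral the transported $R$-matrix returns conjugated by $p^{-d}\mu^{d_{\perp}}$ and never meets its inverse. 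To make your strategy work one would need a genuinely new ingredient (for instance an RTT-type exchange relation satisfied by the twisted products \eqref{eq:66} themselves, or a direct proof that the affine $qq$-characters appearing on the right of \eqref{eq:110} commute), and this missing ingredient is precisely why both the paper and Koroteev--Shakirov leave \eqref{eq:123} as a conjecture.
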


\begin{dfn}
  Introduce the vacuum-vacuum matrix element of the product of spiral
  transfer matrices:
  \begin{equation}
  \label{eq:107}
  \widehat{\mathcal{O}}_N^{\mathrm{spiral}}(\vec{x}|v,q_2,q_3,\mu,p) = \langle \varnothing, v |
  \prod_{i=1}^{
      \begin{smallmatrix}
        \to\\
        N
      \end{smallmatrix}
} \widehat{R}^{q_3, \infty}_{q_1,q_2}(x_i|v|p,\mu) |
\varnothing, v\rangle  
\end{equation}
where $\widehat{R}^{q_3, \infty}_{q_1,q_2}(x_i|v|p,\mu)$ is given by
Eq.~\eqref{eq:66}.
\end{dfn}

\begin{thm}
  We have the following identity
  \begin{equation}
    \label{eq:124}
    \prod_{1 \leq i<j \leq N}\left(1 - \frac{x_j}{x_i}\right) \widehat{\mathcal{O}}^{\mathrm{KS}}_N(\vec{x}|\tilde{v},q_2,q_3,p)
    = \widehat{\mathcal{O}}_N^{\mathrm{spiral}}\left(\vec{x}\left|v,q_2,q_3, \mu ,p\right.\right),
  \end{equation}
  where we set
  \begin{equation}
    \label{eq:127}
    \mu = \sqrt{\frac{q_1}{q_2}} 
  \end{equation}
   and $\tilde{v} = \frac{v}{q_3}$.
\end{thm}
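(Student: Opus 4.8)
The plan is to deduce the identity from the higher noncommutative Jacobi identity of Corollary~\ref{cor-higher-qq} by sandwiching it between vacuum states of the Fock factor $\mathcal{F}_{q_1,q_2}^{(1,0)}(v)$. Applying $\langle\varnothing,v|\,\cdot\,|\varnothing,v\rangle$ to both sides of Eq.~\eqref{eq:110}, and observing that the shift operators $q_3^{-\sum_i k_i x_i\partial_{x_i}}$ act only on the $\mathcal{V}^{*}_{q_3}$ factors and therefore pass through the matrix element, I obtain
\begin{equation}
  \widehat{\mathcal{O}}_N^{\mathrm{spiral}}(\vec{x}|v,q_2,q_3,\mu,p) = \sum_{\vec{k}\in\mathbb{Z}^N} p^{\sum_i k_i^2/2}(-\tilde{v})^{-\sum_i k_i}\,\langle\varnothing,v|\mathsf{X}^{(N)}(\tilde{q}_4^{-\vec{k}}\vec{x}|q_1,q_2,\tilde{q}_3,\tilde{q}_4,p)|\varnothing,v\rangle\, q_3^{-\sum_i k_i x_i\partial_{x_i}}.
\end{equation}
Thus everything reduces to computing the scalar $\langle\varnothing,v|\mathsf{X}^{(N)}(\vec{y})|\varnothing,v\rangle$ with $\vec{y}=\tilde{q}_4^{-\vec{k}}\vec{x}$, and this is where the choice $\mu=\sqrt{q_1/q_2}$ enters.

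First I would use the explicit expansion~\eqref{eq:103}. The outer operator $:\prod_i\prod_{(a,b)}\mathsf{Y}^{\pm1}:$ is normal ordered and carries no zero modes, so its vacuum expectation value is $1$ and only the scalar prefactors survive. The crucial point is that $\mu=\sqrt{q_1/q_2}$ forces, via Eqs.~\eqref{eq:73}--\eqref{eq:97}, $\tilde{q}_3=q_2^{-1}$ and $\tilde{q}_4=q_1^{-1}$ (equivalently $\tilde{q}_3^{-1}=q_2$, $\tilde{q}_4^{-1}=q_1$). At this locus the self-adjoint factor $z^{\mathrm{adj}}_{\lambda}(\tilde{q}_3,\tilde{q}_4,q_1,q_2)$ of Eq.~\eqref{eq:84} \emph{vanishes} for every nonempty $\lambda$: any nonempty Young diagram possesses a removable box $(a,b)$ with $\mathrm{Arm}_\lambda(a,b)=\mathrm{Leg}_\lambda(a,b)=0$ (Eq.~\eqref{eq:101}), so the argument of the corresponding $\psi_{q_1,q_2}$ is $\tilde{q}_3=q_2^{-1}$, and $\psi_{q_1,q_2}(q_2^{-1})=0$ because of the factor $(1-q_2\cdot q_2^{-1})$ in the numerator of Eq.~\eqref{eq:51}. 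Since $\prod_i z^{\mathrm{adj}}_{\lambda^{(i)}}$ multiplies every term of the $\vec{\lambda}$-sum, only $\vec{\lambda}=(\varnothing,\dots,\varnothing)$ contributes and the entire Young-diagram sum collapses.

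With the sum reduced to its single empty term, the matrix element becomes $\prod_{i<j}\mathcal{S}_{q_1,q_2,\tilde{q}_3,\tilde{q}_4}(y_j/y_i)$. At the same locus the exponential form of $\mathcal{S}$ in Eq.~\eqref{eq:106} trivialises: the numerator $(1-q_1^n)(1-q_2^n)$ cancels against $(1-\tilde{q}_3^{-n})(1-\tilde{q}_4^{-n})=(1-q_2^n)(1-q_1^n)$, leaving $\mathcal{S}(x)=\exp[-\sum_{n\geq1}x^n/n]=1-x$. Hence $\langle\varnothing,v|\mathsf{X}^{(N)}(\vec{y})|\varnothing,v\rangle=\prod_{i<j}(1-y_j/y_i)$ with $y_i=\tilde{q}_4^{-k_i}x_i$. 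All manipulations make sense termwise as formal power series in $p^{1/2}$ by Proposition~\ref{prop-infty}.

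Finally I would substitute this back and compare with the definition~\eqref{eq:113}. After relabelling $\vec{k}\to-\vec{k}$ to align the two shift operators, each coefficient is $\prod_{i<j}(1-\tilde{q}_4^{k_i-k_j}x_j/x_i)$, and multiplying $\widehat{\mathcal{O}}^{\mathrm{KS}}_N$ by $\prod_{i<j}(1-x_j/x_i)$ converts its coefficient $\prod_{i<j}\frac{q_2^{-k_i}x_i-q_2^{-k_j}x_j}{x_i-x_j}$ into the same product of binomials, up to the overall monomial and $v$-shifts generated by the grading operators in Eq.~\eqref{eq:66}, which are immaterial as noted in the remark after that equation. The genuine content — and the real crux of the argument — is the vanishing of $z^{\mathrm{adj}}_\lambda$ at $\mu=\sqrt{q_1/q_2}$, which is precisely what degenerates the double-elliptic fundamental $qq$-character into the trigonometric Koroteev–Shakirov operator; I expect the only delicate step to be the final bookkeeping of these monomial prefactors and of the $q$-conventions relating the two sides.
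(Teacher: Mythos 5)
Your proposal is correct and follows essentially the same route as the paper's proof: specialize the higher noncommutative Jacobi identity of Corollary~\ref{cor-higher-qq} at $\mu=\sqrt{q_1/q_2}$, note that $z^{\mathrm{adj}}_{\lambda}$ degenerates to $\delta_{\lambda,\varnothing}$ and $\mathcal{S}$ to $1-x$ at this locus, take the Fock vacuum matrix element (killing the normal-ordered $\mathsf{Y}$ products), and compare with Eq.~\eqref{eq:113}. The only discrepancy is immaterial: your evaluation $(\tilde{q}_3,\tilde{q}_4)=(q_2^{-1},q_1^{-1})$ is in fact the correct one, whereas the paper writes $(q_1^{-1},q_2^{-1})$, but since $\psi_{q_1,q_2}$ is symmetric under $q_1\leftrightarrow q_2$ the vanishing argument via a removable corner box and the resulting collapse of the $\vec{\lambda}$-sum go through identically.
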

\begin{proof}
  We use the noncommutative Jacobi identity~\eqref{eq:110} for higher
  $qq$-characters. Special value~\eqref{eq:127} corresponds to
  $(\tilde{q}_3, \tilde{q}_4) = (q_1^{-1},q_2^{-1})$. From the
  explicit form of the function $\psi_{q_1,q_2}(z)$ we have
  \begin{equation}
    \label{eq:125}
    z^{\mathrm{adj}}_{\lambda} ( q_1^{-1}, q_2^{-1}, q_1, q_2 ) =
    \delta_{\lambda , \varnothing},
  \end{equation}
  hence the sum over $\lambda^{(i)}$ in Eq.~\eqref{eq:103} reduces to
  the single term with $\lambda^{(i)} = \varnothing$.

  The function $\mathcal{S}_{q_1,q_2,\tilde{q}_3,
    \tilde{q}_4}(z)$ from Eq.~\eqref{eq:106} also simplifies for the
  choice of parameters~\eqref{eq:127}:
  \begin{equation}
    \label{eq:126}
    \mathcal{S}_{q_1,q_2, q_1^{-1}, q_2^{-1}}(x) = 1 - x. 
  \end{equation}
  The vacuum-vacuum matrix element of the normal ordered product of
  $\mathsf{Y}$ operators in Eq.~\eqref{eq:103} gives the
  identity. Thus, for $\mu = \sqrt{\frac{q_1}{q_2}}$ the
  noncommutative Jacobi identity~\eqref{eq:110} reduces to
  \begin{equation}
    \label{eq:128}
    \widehat{\mathcal{O}}_N^{\mathrm{spiral}}\left(\vec{x}\left|v,q_2,q_3,
        \left( \frac{q_1}{q_2} \right)^{\frac{1}{2}}    ,p\right.\right) = \sum_{\vec{k} \in \mathbb{Z}^N}
    p^{\sum_{i=1}^N \frac{k_i^2}{2}} (-\tilde{v})^{-\sum_{i=1}^N k_i}
    \prod_{i<j} \left(1 - \frac{q_2^{k_j} x_j}{q_2^{k_i} x_i} \right)
    q_3^{- \sum_{i=1}^N k_i x_i \partial_{x_i}}.
  \end{equation}
  Comparing with Eq.~\eqref{eq:113} one obtains the result of the
  theorem.
\end{proof}
\begin{rem}
  From the identity~\eqref{eq:124} we find that the system of KS
  Hamiltonians can be obtained from a system of spiralling
  $R$-matrices of algebra $\mathcal{A}$. We hope that a proof of
  commutativity~\eqref{eq:123} may be obtained by using 
  manipulations similar to those in sec.~\ref{sec:miura-r-matrix-1}.
\end{rem}

\subsection{Elliptic RS system}
\label{sec:elliptic-rs-system}
For certain choices of $p$ the spirals in
Eqs.~\eqref{eq:115},~\eqref{eq:118} turn into circles and the
corresponding system of intertwiners simplifies. Indeed, if the
spectral parameter of the representations on two neighbouring turns of
the spiral is the same one can actually take a trace over a single
winding.

However, the values at which this happens are \emph{different} for the
spiralized wavefunction Eq.~\eqref{eq:115} and for the spiralized
operator Eq.~\eqref{eq:118}. Indeed, since the spectral parameter $v$
is the same along the horizontal line, the operator~\eqref{eq:118}
should degenerate into a circle for $p=1$. We conjecture that the
degeneration actually happens in the limit $p \to 1$ (possibly after
rescaling):
\begin{conj}
  \begin{multline}
  \label{eq:129}
  \lim_{p \to 1}    \prod_{i=1}^{
      \begin{smallmatrix}
        \to\\
        N
      \end{smallmatrix}
    } \widehat{R}^{q_1, \infty}_{q_2,q_3}\left(x_i|v|p,\kappa
      q_3^{-N/2}\right) =
    \quad \includegraphics[valign=c]{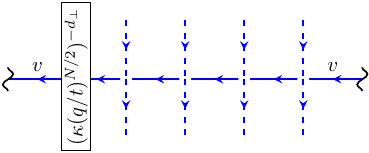}
    =\\
    = \Tr_{\mathcal{F}_{q_2,q_3}^{(1,0)}(v)} \left( \left(\kappa
        q_3^{-N/2}\right)^{-d_{\perp}} \prod_{i=1}^{
      \begin{smallmatrix}
        \to\\
        N
      \end{smallmatrix}} R_{q_2,q_3}^{q_1}(x_i|v) \right)
\end{multline}
\end{conj}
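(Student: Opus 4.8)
The plan is to derive Eq.~\eqref{eq:129} from the equivalence of vertical and horizontal spiralization recorded in Eq.~\eqref{eq:112}, reading the closed-up horizontal Fock line as a trace. The first step is to move the winding grading off the vertical vector factors and onto the horizontal Fock factor. In the $q_1\leftrightarrow q_3$ form of Eq.~\eqref{eq:66}, each factor $\widehat{R}^{q_1,\infty}_{q_2,q_3}(x_i|v|p,\kappa q_3^{-N/2})$ is an infinite twisted composition of Miura $R$-matrices~\eqref{eq:5} with an insertion of $p^{-d}\mu^{d_{\perp}}$, $\mu=\kappa q_3^{-N/2}$, between successive crossings, the grading acting on the vector factor $\mathcal{V}^{*}_{q_1}$. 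Applying the identity~\eqref{eq:112} (with the vector representation as the first tensor factor and the Fock representation as the second) transfers these insertions onto the Fock factor, at the price of an overall grading operator equal to the total shift accumulated along the spiral; this is the algebraic content of the diagram~\eqref{eq:118}, whose reordering into a single Fock line crossed by all $N$ vector lines rests on Conjecture~\ref{conj-op}. In this picture the solid Fock line winds horizontally around the compactification cylinder, its spectral parameter being rescaled by a factor proportional to $p$ at every turn.

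The second step is to take $p\to1$. Because the Fock spectral parameter is multiplied by a factor that tends to $1$ as $p\to1$, at $p=1$ all turns of the spiral sit at the same value $v$ and the infinitely wound spiral degenerates into a single Fock circle. Closing a horizontal Fock line into a circle is exactly the operation $\Tr_{\mathcal{F}_{q_2,q_3}^{(1,0)}(v)}$ in the diagrammatic calculus, while the residual $\mu^{d_{\perp}}$ part of the grading survives as the twist $(\kappa q_3^{-N/2})^{-d_{\perp}}$ inserted around the circle. The vector lines, no longer spiralling, each cross the circle once and contribute a single Miura $R$-matrix $R^{q_1}_{q_2,q_3}(x_i|v)$, which reproduces the right-hand side of Eq.~\eqref{eq:129}.

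To make the degeneration quantitative I would compare both sides coefficient by coefficient as difference operators in $\vec{x}$. Expanding every $R$-matrix into its $V_{+}$ and $V_{-}$ pieces as in Eq.~\eqref{eq:85} and applying the standard free-boson trace formula, the right-hand side becomes a sum over the $2^{N}$ sign choices of a shift operator times the trace of a normal-ordered product of $V_{\pm}$ with grading $(\kappa q_3^{-N/2})^{-d_{\perp}}$; this trace produces elliptic contractions, i.e.\ infinite products of the form $\prod_{m\geq0}(1-\mu^{m}\cdots)$. On the left-hand side the same normal ordering of the infinite spiral yields one such contraction per winding, and summing the geometric tower of windings rebuilds precisely these elliptic factors. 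This is the algebraic shadow of the familiar statement that the trace supplies all the periodic ``images'' of the vertex operators at once, whereas the spiral supplies them one turn at a time.

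The main obstacle is the rigorous control of the limit $p\to1$. By Proposition~\ref{prop-infty} the left-hand side exists only as a formal power series in $p^{1/2}$, and its Jacobi form in Corollary~\ref{cor-higher-qq} presents it as a sum over $\vec{k}\in\mathbb{Z}^{N}$ of $qq$-characters weighted by $p^{\sum_i k_i^2/2}$, each $qq$-character being itself a sum over Young diagrams weighted by $p^{|\lambda|}$ that is manifestly divergent at $p=1$. The entire content of the statement is that this double sum nonetheless resums, order by order in $\vec{x}$, into the convergent trace; equivalently, that the limits $M\to\infty$ and $p\to1$ may be interchanged and that the overall grading produced in the first step contributes no anomalous divergence as $p\to1$. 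Establishing this interchange, together with the reordering of Conjecture~\ref{conj-op} on which the diagram~\eqref{eq:118} depends, is the genuine difficulty, and is why the statement is recorded as a conjecture rather than a theorem.
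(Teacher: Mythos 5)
This statement is recorded in the paper as a \emph{conjecture}: the paper offers no proof of Eq.~\eqref{eq:129}, only the motivating observation in sec.~\ref{sec:ellipt-deform-as} and the paragraph preceding the conjecture (since the spectral parameter $v$ is constant along the horizontal line, successive turns of the spiral carry identical representations at $p=1$, so the spiral ``should'' close into a circle). Your proposal is essentially a careful restatement of exactly that motivation: the vertical/horizontal spiralization identity~\eqref{eq:112}, the redrawn diagram~\eqref{eq:118} (which already carries a question mark in the paper because it rests on Conjecture~\ref{conj-op}), and the diagrammatic reading of a closed Fock circle as $\Tr_{\mathcal{F}_{q_2,q_3}^{(1,0)}(v)}$ with the residual twist $(\kappa q_3^{-N/2})^{-d_{\perp}}$. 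As a guide to why the statement is plausible, this is faithful to the paper; as a proof, it is not one, and you say so yourself in the final paragraph.

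The gaps are concrete and unresolved. First, the reduction to the diagram~\eqref{eq:118} depends on Conjecture~\ref{conj-op}, which the paper explicitly leaves open (an infinite number of Yang--Baxter moves would be needed), so your first step already assumes an unproven statement. Second, the limit $p\to 1$ is not defined by anything you write: by Proposition~\ref{prop-infty} the left-hand side exists only as a formal power series in $p^{1/2}$, and in the form of Corollary~\ref{cor-higher-qq} it is a sum over $\vec{k}\in\mathbb{Z}^N$ weighted by $p^{\sum_i k_i^2/2}$, with each $qq$-character a further sum over partitions weighted by $p^{|\lambda|}$; termwise this diverges at $p=1$, so the claim that ``summing the geometric tower of windings rebuilds precisely these elliptic factors'' is not a computation but a rephrasing of the conjecture itself --- the required resummation and the interchange of the limits $M\to\infty$ and $p\to 1$ is the whole content of Eq.~\eqref{eq:129}. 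Third, iterating~\eqref{eq:112} along an infinite spiral produces an accumulated overall grading factor $(p^d\mu^{d_{\perp}}\otimes p^d\mu^{d_{\perp}})^K$ with $K\to\infty$, and your assertion that it ``contributes no anomalous divergence'' is not established (the paper itself hedges with ``possibly after rescaling''). In short: your attempt correctly identifies every obstacle but overcomes none of them, which is consistent with the paper's decision to state Eq.~\eqref{eq:129} as a conjecture rather than a theorem.
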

Next we prove that the trace in the r.h.s.\ of Eq.~\eqref{eq:129} is
in fact the generating function of eRS Hamiltonians.
\begin{dfn}
  We call
  \begin{equation}
    \label{eq:131}
    \mathcal{O}^{\mathrm{ell}}_N(\vec{x}|v,\mu) = \Tr_{\mathcal{F}_{q_2,q_3}^{(1,0)}(v)} \left( \mu^{-d_{\perp}} \check{\mathcal{R}}|_{(\mathcal{V}^{*}_{q_1})^{\otimes N} \otimes
        \mathcal{F}_{q_2,q_3}^{(1,0)}(v)} \right) = \Tr_{\mathcal{F}_{q_2,q_3}^{(1,0)}(v)} \left( \mu^{-d_{\perp}} \prod_{i=1}^{
      \begin{smallmatrix}
        \to\\
        N
      \end{smallmatrix}} R_{q_2,q_3}^{q_1}(x_i|v) \right)
\end{equation}
the elliptic transfer matrix.
\end{dfn}
\begin{dfn}
  Elliptic Ruijsenaars-Schneider Hamiltonians are given by
  \begin{equation}
    \label{eq:132}
    H_k^{\mathrm{eRS},(N)}(\vec{x},q_1,q_2,\mu) = q_2^{-\frac{k(k-1)}{2}}\sum_{
      \begin{smallmatrix}
        I \subseteq \{1,\ldots, N \}\\
        |I|=k
      \end{smallmatrix}
    } \prod_{i \in I} \prod_{j \not \in I} \frac{\theta_{\mu} \left(
        q_2^{-1} \frac{x_i}{x_j} \right)}{\theta_{\mu} \left(
        \frac{x_i}{x_j} \right)} \prod_{i \in I} q_1^{x_i \partial_{x_i}}, \qquad k=0,\ldots,N,
  \end{equation}
  where
  \begin{equation}
    \label{eq:133}
    \theta_{\mu}(x) = (\mu;\mu)_{\infty} (x;\mu)_{\infty} \left(
      \frac{\mu}{x} ; \mu \right)_{\infty}
  \end{equation}
  is the Jacobi theta function.
\end{dfn}

The next statement is very similar to Theorem~\ref{thm-tRS-ham}.
\begin{thm}
  \begin{equation}
    \label{eq:130}
\left( F_{q_1,q_2,\mu}^{\mathrm{ell}}(\vec{x})
\right)^{-1}\mathcal{O}^{\mathrm{ell}}_N(\vec{x}|v,\mu)F_{q_1,q_2,\mu}^{\mathrm{ell}}(\vec{x})
= \sum_{k=0}^N (-v/q_1)^{N-k} H_k^{\mathrm{eRS},(N)}(\vec{x},q_1,q_2,\mu),
\end{equation}
where
\begin{equation}
  \label{eq:134}
  F_{q_1,q_2,\mu}^{\mathrm{ell}}(\vec{x}) = \prod_{k<l}^N \left[ x_k^{- \frac{\ln
        q_2}{\ln q_1}} \prod_{n \geq 0}\frac{\theta_{\mu} \left( q_1^n\frac{x_l}{x_k}\right)}{\theta_{\mu}\left( q_2^{-1} \frac{x_l}{x_k} \right)} \right]
\end{equation}
\label{thm-ers}
\end{thm}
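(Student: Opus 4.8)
The plan is to follow the proof of Theorem~\ref{thm-tRS-ham} verbatim up to the point where the vacuum matrix element is taken, and then replace that matrix element by the trace, tracking how the trace promotes the trigonometric propagator $\psi_{q_2,q_3}$ to an elliptic one. First I would substitute the explicit Miura $R$-matrix of Theorem~\ref{thm-miura-r} into the definition~\eqref{eq:131} and expand the ordered product into $2^N$ terms, one for each subset $I\subseteq\{1,\dots,N\}$ recording the positions at which the $V^{+}_{q_2,q_3}$ summand is chosen; every $j\notin I$ then contributes a factor $-(v/q_1)^{-1}$ together with a $V^{-}_{q_2,q_3}$. As in Eq.~\eqref{eq:52} the shift operators $q_1^{\pm\frac12 x_i\partial_{x_i}}$ act on the vector factor $\mathcal{V}^{*}_{q_1}$ and commute with everything coming from the Fock factor, so they can be collected to the right of each term, leaving a multiplication operator $\Omega_I(\vec{x})=\Tr_{\mathcal{F}_{q_2,q_3}^{(1,0)}(v)}\bigl(\mu^{-d_{\perp}}\prod_{i=1}^{N}V_{s_i}(x_i)\bigr)$ multiplied by $\prod_{i\in I}q_1^{x_i\partial_{x_i}}\,q_1^{-\frac12\sum_j x_j\partial_{x_j}}$.

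The core of the argument is the evaluation of the trace $\Omega_I$. Because $V^{+}_{q_2,q_3}$ contains only the annihilation modes $a_n$ and $V^{-}_{q_2,q_3}$ only the creation modes $a_{-n}$, I would first normal order using the commutation relation~\eqref{eq:50}; this produces the planar factor $\prod_{i\in I}\prod_{j\notin I,\,j>i}\psi_{q_2,q_3}(x_j/x_i)$ exactly as in the trigonometric case. The remaining normal ordered exponential is then evaluated with the standard free boson trace identity, which for each mode reads $\Tr(\mu^{-d_{\perp}}e^{\gamma a_{-n}}e^{\delta a_n})=\Tr(\mu^{-d_{\perp}})\exp\bigl(\gamma\delta\,c_n\,\mu^{-n}/(1-\mu^{-n})\bigr)$, with $c_n=[a_n,a_{-n}]$. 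The geometric summation over the $\mu$-images $\mu^{-n},\mu^{-2n},\dots$ produced by the trace is precisely what completes each rational propagator into an elliptic one: combining the planar contraction with the cyclic trace contractions replaces the $q$-Pochhammer symbols in $\psi_{q_2,q_3}$ by the infinite products defining $\theta_{\mu}$. I therefore expect $\Omega_I$ to equal an overall $\mu$-dependent constant, namely $\Tr(\mu^{-d_{\perp}})$ coming from the empty contraction, times a product over the pairs $i\in I$, $j\notin I$ of an elliptic analogue of $\psi_{q_2,q_3}$ built from $\theta_{\mu}$.

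Finally I would conjugate by $F_{q_1,q_2,\mu}^{\mathrm{ell}}$, the elliptic lift of the rearrangement carried out in Eq.~\eqref{eq:55}. Commuting the collected shift operators through the $\theta_{\mu}$-prefactor~\eqref{eq:134} converts the symmetric products of elliptic propagators into the asymmetric eRS ratios $\prod_{i\in I}\prod_{j\notin I}\theta_{\mu}(q_2^{-1}x_i/x_j)/\theta_{\mu}(x_i/x_j)$, absorbs the overall half shift, and reproduces the $q_2$-dependent normalizations appearing in $H_k^{\mathrm{eRS},(N)}$; comparing the coefficients of the powers of $v$ on the two sides then identifies the $k$-th coefficient with $H_k^{\mathrm{eRS},(N)}$. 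The main obstacle is exactly this last elliptic rearrangement. One must establish the relevant theta function identity (the elliptic counterpart of the $q$-Pochhammer manipulation~\eqref{eq:55}), using the quasi-periodicity $\theta_{\mu}(\mu x)=-x^{-1}\theta_{\mu}(x)$ to commute the $q_1$-shifts past $F_{q_1,q_2,\mu}^{\mathrm{ell}}$, and one must check that the overall factor $\Tr(\mu^{-d_{\perp}})$ together with the $(\mu;\mu)_{\infty}$ prefactors built into each $\theta_{\mu}$ cancel consistently, so that no spurious $\vec{x}$- and $v$-independent constant survives and the clean equality of Eq.~\eqref{eq:130} is obtained.
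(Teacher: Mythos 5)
Your proposal is correct and follows essentially the same route as the paper: the paper's proof likewise evaluates the trace using standard Heisenberg (free-boson) identities, replacing the trigonometric propagator $\psi_{q_2,q_3}$ of Eq.~\eqref{eq:50} by the elliptic one $\psi^{\mathrm{ell}}_{q_2,q_3,\mu}$ of Eqs.~\eqref{eq:135}--\eqref{eq:136}, and then repeats the conjugation argument of Theorem~\ref{thm-tRS-ham} with $F^{\mathrm{ell}}_{q_1,q_2,\mu}$ in place of $F_{q_1,q_2}$. Your normal-ordering plus cyclic-trace computation (with its geometric $\mu$-sums producing the $\theta_{\mu}$ factors) is precisely the derivation of that elliptic trace identity which the paper invokes but leaves implicit.
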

\begin{proof}
We use the standard Heisenberg algebra identities to take the trace
over the Fock space. We have the analogue of the commutation relations~\eqref{eq:50}:
\begin{equation}
  \label{eq:135}
  \Tr_{\mathcal{F}_{q_2,q_3}^{(1,0)}(v)} \left( \mu^{-d_{\perp}}
    V^{+}_{q_2,q_3}(x) V^{-}_{q_2,q_3}(x) \right) = \psi_{q_2,q_3,\mu}^{\mathrm{ell}} \left( \frac{y}{x} \right),
\end{equation}
where
\begin{equation}
  \label{eq:136}
  \psi_{q_2,q_3,\mu}^{\mathrm{ell}}(z) = \frac{\theta_{\mu}(q_i z)\theta_{\mu}(q_j z)}{\theta_{\mu}(z)\theta_{\mu}(q_i
    q_j z)}.
\end{equation}
The rest of the proof is exactly parallel to that of
Theorem~\ref{thm-tRS-ham}.
\end{proof}
\begin{thm}
  \begin{equation}
    \label{eq:151}
    [H_k^{\mathrm{eRS},(N)}(\vec{x},q_1,q_2,\mu),
    H_l^{\mathrm{eRS},(N)}(\vec{x},q_1,q_2,\mu)] = 0, \qquad \text{for
    all} \quad  k,l = 1,\ldots, N.
\end{equation}
\label{thm-ers-comm}
\end{thm}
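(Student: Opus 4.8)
The plan is to deduce commutativity of the Hamiltonians from commutativity of the elliptic transfer matrices $\mathcal{O}^{\mathrm{ell}}_N(\vec{x}|v,\mu)$ at different spectral parameters, in exact analogy with the way Theorem~\ref{thm-O-comm} was used in the trigonometric case. By Theorem~\ref{thm-ers} the conjugate $(F_{q_1,q_2,\mu}^{\mathrm{ell}})^{-1}\mathcal{O}^{\mathrm{ell}}_N(\vec{x}|v,\mu)F_{q_1,q_2,\mu}^{\mathrm{ell}}$ equals $\sum_{k=0}^N(-v/q_1)^{N-k}H_k^{\mathrm{eRS},(N)}$, and the conjugating factor $F_{q_1,q_2,\mu}^{\mathrm{ell}}(\vec{x})$ does not depend on $v$. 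Thus, once
\[
[\mathcal{O}^{\mathrm{ell}}_N(\vec{x}|u,\mu),\mathcal{O}^{\mathrm{ell}}_N(\vec{x}|v,\mu)]=0
\]
is established for generic $u,v$, conjugation by $F_{q_1,q_2,\mu}^{\mathrm{ell}}$ preserves the vanishing commutator, and expanding both conjugated operators as polynomials in $u$ and $v$ and matching the coefficient of each monomial $u^{N-k}v^{N-l}$ yields $[H_k^{\mathrm{eRS},(N)},H_l^{\mathrm{eRS},(N)}]=0$ for all $k,l$.

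The heart of the argument is therefore the commutativity of the twisted traces, which I would prove by the standard train/RTT manipulation adapted to $\mathcal{A}$. Writing $a$ and $b$ for the two auxiliary Fock spaces $\mathcal{F}_{q_2,q_3}^{(1,0)}(u)$ and $\mathcal{F}_{q_2,q_3}^{(1,0)}(v)$, and $L_a(u)=\prod_i R_{q_2,q_3}^{q_1}(x_i|u)$, $L_b(v)=\prod_i R_{q_2,q_3}^{q_1}(x_i|v)$ for the monodromies built from the Miura $R$-matrices of Theorem~\ref{thm-miura-r} acting on the shared chain $(\mathcal{V}^{*}_{q_1})^{\otimes N}$, the product $\mathcal{O}^{\mathrm{ell}}_N(\vec{x}|u,\mu)\,\mathcal{O}^{\mathrm{ell}}_N(\vec{x}|v,\mu)$ becomes the single trace $\Tr_{ab}\left((\mu^{-d_{\perp}}\otimes\mu^{-d_{\perp}})\,L_a(u)L_b(v)\right)$. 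I would then insert $1=\mathcal{R}_{ab}^{-1}\mathcal{R}_{ab}$, where $\mathcal{R}_{ab}=\mathcal{R}|_{a\otimes b}$, and use the Yang--Baxter equation for the universal $R$-matrix of $\mathcal{A}$ in the RTT form $\mathcal{R}_{ab}\,L_a(u)L_b(v)=L_b(v)L_a(u)\,\mathcal{R}_{ab}$ to interchange the two monodromies. These are precisely the Yang--Baxter moves drawn in the proof of Theorem~\ref{thm-O-comm}, only with the auxiliary Fock lines now closed into a trace.

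What replaces the scalar cancellation supplied by Proposition~\ref{prop-R-hor-vac} in the trigonometric case is the cyclicity of the trace together with the commutation of $\mathcal{R}_{ab}$ with the diagonal grading $\mu^{-d_{\perp}}\otimes\mu^{-d_{\perp}}$. The latter is the intertwiner property of Eq.~\eqref{eq:88} (with $p=1$ and $\mu\mapsto\mu^{-1}$): since $\mathcal{R}_{ab}$ intertwines the two coproduct structures on $a\otimes b$, it commutes with the diagonal action of any grading operator. Concretely, after the Yang--Baxter moves one is left with $\Tr_{ab}\left((\mu^{-d_{\perp}}\otimes\mu^{-d_{\perp}})\,\mathcal{R}_{ab}^{-1}\,L_b(v)L_a(u)\,\mathcal{R}_{ab}\right)$; moving the rightmost $\mathcal{R}_{ab}$ to the front by cyclicity, commuting it through $\mu^{-d_{\perp}}\otimes\mu^{-d_{\perp}}$, and cancelling it against $\mathcal{R}_{ab}^{-1}$ collapses this to $\Tr_{ab}\left((\mu^{-d_{\perp}}\otimes\mu^{-d_{\perp}})\,L_b(v)L_a(u)\right)=\mathcal{O}^{\mathrm{ell}}_N(\vec{x}|v,\mu)\,\mathcal{O}^{\mathrm{ell}}_N(\vec{x}|u,\mu)$.

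The obstacle I anticipate is analytic rather than algebraic. The insertion $\mu^{-d_{\perp}}$ is exactly the regulator making the Fock-space trace converge (as a series in $\mu$ in a suitable annulus), so I would first fix a range of $\mu$ in which both traces and their product converge absolutely and in which $\mathcal{R}_{ab}$ and $\mathcal{R}_{ab}^{-1}$ are well-defined operators for generic $u/v$. The Fock--Fock evaluation of the universal $R$-matrix carries an overall scalar normalization, but this is harmless here because it cancels between $\mathcal{R}_{ab}$ and $\mathcal{R}_{ab}^{-1}$; unlike in Theorem~\ref{thm-O-comm}, no scalar function needs to be tracked. Once convergence, invertibility, and the legitimacy of the cyclic rearrangement are secured, the remaining steps are identical to the trigonometric computation, with the twisted trace and the grading insertion playing the role that the vacuum matrix elements and the factor $f^{q_2,q_3}_{q_2,q_3}(u/v)$ played there.
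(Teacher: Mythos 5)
Your proposal is correct and takes essentially the same approach as the paper: the paper's (very terse) proof likewise reduces the statement to commutativity of the twisted traces $\mathcal{O}^{\mathrm{ell}}_N(\vec{x}|u,\mu)$, $\mathcal{O}^{\mathrm{ell}}_N(\vec{x}|v,\mu)$, which it establishes by inserting an $R$-matrix and its inverse and moving them around the closed auxiliary loop until they cancel --- precisely your trace-cyclicity plus grading-commutation argument in diagrammatic form. Your write-up simply makes explicit what the paper leaves implicit: the RTT relation for the monodromies, the commutation of $\mathcal{R}_{ab}$ with $\mu^{-d_{\perp}}\otimes\mu^{-d_{\perp}}$ (the point behind the paper's remark that circles, not spirals, are crucial), and the final expansion in $u,v$ via Theorem~\ref{thm-ers}.
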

\begin{proof}
  One can use an argument very similar to the proof of
  Theorem~\ref{thm-O-comm} except one needs to move the $R$-matrices
  along the circle. It is crucial that the horizontal lines are
  circles and not spirals so that $\check{\mathcal{R}}$ and
  $\check{\mathcal{R}}$ cancel after passing around the loop in the
  opposite directions.
\end{proof}

Finally, we consider the case when the spiralized
wavefunction~\eqref{eq:115} reduces to a trace over $N$ Fock
spaces. As can be seen from the diagram~\eqref{eq:115} this should
happen when $p = t$.
\begin{conj}
  \begin{equation}
  \label{eq:137}
 \lim_{p \to t} \widehat{G}_{q,t^{-1}}(\vec{s})  \widehat{f}_N(\vec{x}, p^{\frac{1}{N}} | \vec{s},
    \kappa^{\frac{1}{N}}|q,t)=\quad  \includegraphics[valign=c]{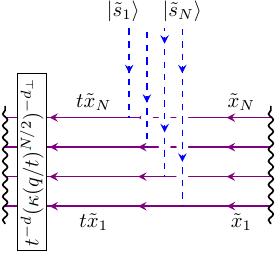}
\end{equation}
\end{conj}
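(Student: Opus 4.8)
The plan is to read the $p\to t$ degeneration directly off the spiral representation of the Shiraishi function and to show that at $p=t$ the open vacuum--vacuum amplitude of the infinite spiral closes up into a trace over the horizontal Fock spaces. I would work with the reordered vertical diagram~\eqref{eq:115}, in which the $N$ intertwiners are first attached to the horizontal Fock lines carrying spectral parameters $\tilde{x}_i=x_i p^{i/N}$ and the $N$ resulting dashed tails are then wound around the cylinder together; this form rests on Conjecture~\ref{conj-redraw} together with the vertical/horizontal duality~\eqref{eq:112}. The essential bookkeeping is the shift of spectral parameters recorded in~\eqref{eq:42}: the screened vertex operators $\Psi^i$ (equivalently, the shift operators $\mathbf{r}_i$ of Definition~\ref{dfn-shir-funct}) advance the spectral parameter of each horizontal Fock factor by the factor $t$.

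I would then compare this structural $t$-advance with the grading that implements the periodic identification of the spiral. Spiralization glues successive turns through the grading operator $p^{d}$ (Eqs.~\eqref{eq:71}--\eqref{eq:72}), which rescales the horizontal Fock spectral parameters by $p$ per period. The two shifts are compatible --- so that neighbouring turns of the spiral carry the \emph{same} Fock representations and the periodic identification underlying a trace is consistent --- precisely when $p=t$. This is the analytic content of the heuristic stated just before~\eqref{eq:137}, and it is the exact counterpart of the $p\to 1$ condition used for the spiralized operator in~\eqref{eq:129}; the difference between $1$ and $t$ reflects that the horizontal parameter $v$ of the operator is static whereas the Fock parameters $\tilde{x}_i$ of the wavefunction are shifted by the $\Psi^i$.

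The key step is the resummation of the infinite spiral into a trace. At $p=t$ the spiral, viewed as a cylinder capped by the vacuum at both ends, becomes genuinely periodic, and the vacuum--vacuum amplitude glues into the torus amplitude, i.e.\ a trace over the $N$ horizontal Fock spaces $\mathcal{F}^{(1,0)}_{q,t^{-1}}(\tilde{x}_i)$. Concretely, I would use the graded-trace identity
\[
\Tr_{\mathcal{F}}\big(g^{-d}\,U\big)=\sum_{\lambda} g^{|\lambda|}\,\langle\lambda|U|\lambda\rangle,
\]
applied to each Fock factor, to recognise the sum over the $N$-tuples of partitions in Theorem~\ref{thm-shir-main} as the expansion of such a trace: the grading weight $g^{|\lambda^{(i)}|}$ in the $i$-th factor matches the power of $p$ that weights the partitions in the $i$-th screened vertex operator and becomes the required power of $t$ at $p=t$. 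The normal-ordering factors $\psi_{q,t^{-1}}$ and the prefactor $\widehat{G}_{q,t^{-1}}(\vec{s})$ would then reorganise into the theta functions of the circular diagram exactly as the trigonometric factors turned elliptic in the proof of Theorem~\ref{thm-ers}, and matching the surviving grading insertion fixes the nome of the trace to be $\kappa$, the elliptic parameter retained in the limit.

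The hard part will be controlling the $p\to t$ limit analytically rather than pictorially. Individual terms in the partition expansion can become singular as $p\to t$, so one must show that these singularities cancel and that the sum over windings reorganises into a \emph{convergent} Fock-space trace; this is the infinite-dimensional analogue of resumming a geometric series of windings into a character, and it is exactly why one takes the limit instead of naively substituting $p=t$. A further obstacle is that the vertical form~\eqref{eq:115} itself relies on the unproven Conjecture~\ref{conj-redraw}, so a complete proof would have either to establish that reordering or to bypass it by commuting the grading operator directly through the composition $\Psi^0\cdots\Psi^{N-1}$. Finally, I would cross-check the resulting trace formula against the degeneration of Shiraishi functions obtained in~\cite{Fukuda:2020czf}, which offers an independent handle both on the convergence of the limit and on the precise identification of the circular diagram.
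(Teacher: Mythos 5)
The first thing to note is that the statement you are proving is not proven in the paper either: Eq.~\eqref{eq:137} is stated as a \emph{conjecture}, and the author's only support for it is the heuristic immediately preceding it (the spiral should close into circles when the spectral parameters of neighbouring turns coincide, which happens at $p=t$ because the $\Psi^i$ advance the Fock parameters by $t$, cf.~\eqref{eq:42}) together with the expected match with~\cite{Fukuda:2020czf}, whose verification is explicitly deferred (``this will be done elsewhere''). Your proposal follows exactly this heuristic --- the comparison of the $t$-shift of the screened vertex operators with the $p$-grading of the spiral, the analogy with the $p\to 1$ degeneration~\eqref{eq:129} on the operator side, and the cross-check against~\cite{Fukuda:2020czf} --- so as motivation it is faithful to the paper. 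But it is not a proof, and you say so yourself.

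The two gaps you flag are genuine, and they are precisely why the statement remains a conjecture. First, your argument is routed through the vertical diagram~\eqref{eq:115}, which carries a question mark in the paper because it depends on the unproven reordering Conjecture~\ref{conj-redraw}; the alternative you mention, commuting the grading operator directly through $\Psi^0\cdots\Psi^{N-1}$, is exactly the infinite sequence of Yang--Baxter and intertwining moves that the paper says it does not know how to control. Second, your ``key step'' --- that at $p=t$ the vacuum-to-vacuum amplitude of the infinite spiral equals a graded trace --- is asserted pictorially, not established. The graded-trace identity you quote applies to the trace of an operator acting on a single Fock space, whereas the spiral is an infinite composition of operators between \emph{different} Fock spaces (the spectral parameters advance at every turn); converting the former into the latter requires showing that the infinite composition converges as a formal series, that the potentially singular terms cancel as $p\to t$, and that the two vacuum caps can be traded for a periodic identification --- none of which is reduced to anything proven in the paper. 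Likewise, the identification of the nome of the resulting trace with $\kappa$ is matched only by analogy with Theorem~\ref{thm-ers}, not derived. So what you have is a correct and well-organized elaboration of the paper's own motivation, with the genuinely hard steps named but not carried out; in that respect your proposal is on the same footing as the paper, which leaves the statement open.
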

The setup in the r.h.s.\ of Eq.~\eqref{eq:137} in fact reproduces the
results of~\cite{Fukuda:2020czf} where Shiraishi functions for $\kappa
= t^{-1}$ were built using a system of Fock space intertwiners of the
algebra $\mathcal{A}$. This should fit with our limit $p \to t$ after
making a spectral duality transformation on Shiraishi functions which
exchanges $\kappa \leftrightarrow p$ and $\tilde{s}_i \leftrightarrow
\tilde{x}$. On can see the equivalence of~\cite{Fukuda:2020czf} with
the diagram in Eq.~\eqref{eq:137} either by a direct calculation or
using the Higgsing technique as explained
in~\cite{Zenkevich:2020ufs}. This will be done elsewhere.

Notice that elliptic partition function~\eqref{eq:137} is \emph{not}
an eigenfuction of the eRS Hamiltonians~\eqref{eq:132}. The $R$-matrix
manipulation used to prove the trigonometric limit of this statement
fails when both the diagrams for the operator and the eigenfunction
are compactified as \emph{circles,} because the shift operators
$p^{-d}$ are different in~\eqref{eq:137} and~\eqref{eq:132}.

\section{Conclusions}
\label{sec:conclusions}

We have considered spiralling systems of intertwiners of the quantum
toroidal algebra $\mathcal{A}$ and have found them useful in the
theory of elliptic integrable systems. We have found a neat $R$-matrix
formalism for the tRS and eRS models, given a new interpretation of
the important Shiraishi functions as matrix elements of a system of
intertwiners. We have proven the noncommutative Jacobi identity for
general equivariant parameters and have connected it with
Koroteev-Shakirov Hamiltonians.

Of course it is important to prove the ``reordering''
Conjectures~\ref{conj-redraw},~\ref{conj-op}. This would allow for a
very explicit construction of all the landscape of elliptic systems
and dualities between them using spiralling systems of intertwiners of
the quantum toroidal algebra.

Let us mention some possible further applications of our approach. The
natural generalization of elliptic integrable systems are double
elliptic (Dell) systems~\cite{Awata:2019isq}. We hope that spiralling
branes could be used to understand these models better as well.

It should be possible to prove the noncommutative Jacobi identity for
$\widehat{A}_N$ type $qq$-characters with $N > 0$ using the techniques
of sec.~\ref{sec:nonc-jacobi-ident}: one needs to make the same
argument for $(N+1)$ horizontal lines intersected by the dashed spiral
of the same color.

Let us also comment on the relation between this work
and~\cite{Mironov:2021sfo}. There polynomial eigenfunctions of the
dual eRS system were found and related to polynomial eigenfunctions of
the trigonometric Koroteev-Shakirov Hamiltonians. The polynomial
eigenfunctions are particular cases of Shiraishi functions, so it
should be possible to understand them using the spiral system from
sec.~\ref{sec:affine-vert-oper}. Indeed, it turns out that for
particular values of the spectral parameters the tensor products of
representations on which the system of intertwiners acts become
reducible which explains the polynomialty of the answer. We hope that
our algebraic approach may allow to understand the dualities between
various elliptic systems featuring in~\cite{Mironov:2021sfo} more
explicitly. We plan to return to this topic in the future.

Finally, we comment on the relations to other algebraic approaches to
gauge theories and integrable systems. There is a large body of work
on the vertex operator formalism for gauge origami partition
functions~\cite{Kimura:2023bxy}. Spiralling branes should arise there
as well. The advantage of the formalism that we are pursuing here is
that the symmetry algebra $\mathcal{A}$ is built into the construction
from the start, whereas in more general operator setups (see
e.g.~\cite{Kimura:2015rgi}) the algebra is defined implicitly as the
commutant of a set of screening charges on a case-by-case basis.

It would also be interesting to understand the connection of our
results with the recent work~\cite{Tamagni:2024iiy} in which Yangian
algebras are derived from quantized Coulomb branches of supersymmetric
$3d$ gauge theories.

\section*{Acknowledgements}
\label{sec:acknowledgements}

Author's work is partially supported by the European Research Council
under the European Union’s Horizon 2020 research and innovation
programme under grant agreement No~948885.

The author would like to thank M.~Aganagic, S.~Tamagni, N.~Haouzi,
N.~Nekrasov, A.~Grekov, M.~Bershtein, M.~Semenyakin and A.~Mironov for
discussions.

\appendix

\section{Quantum toroidal algebra $\mathcal{A}$}
\label{sec:quant-toro-algebra}
We write $\mathcal{A}=U_{q_1,q_2}
(\widehat{\widehat{\mathfrak{gl}}}_1)$ for the quantum toroidal
algebra of type $\mathfrak{gl}_1$ (also known as the DIM algebra)
throughout the paper.

\begin{dfn} 
  Let $q_1,q_2 \in \mathbb{C}^{*}$ be parameters and let $q_3 =
  \frac{1}{q_1 q_2}$. The algebra $\mathcal{A}$ is generated by
  $x^{\pm}_n$ ($n \in \mathbb{Z}$), $\psi^{\pm}_{\pm n}$ ($n \in
  \mathbb{Z}_{\geq 0}$) and the central element $\gamma$. The
  generating functions
  \begin{equation}
    \label{eq:143}
    x^{\pm}(z) = \sum_{n \in \mathbb{Z}} x^{\pm}_n z^{-n}, \qquad
    \psi^{\pm}(z) = \sum_{n \geq 0} \psi^{\pm}_{\pm n} z^{\mp n}
  \end{equation}
  satisfy the relations
  \begin{gather}
    \label{eq:76}
      [\psi^{\pm}(z), \psi^{\pm}(w)] = 0,\qquad \psi^{+}(z) \psi^{-}(w) =
  \frac{g\left(\gamma \frac{w}{z}\right)}{g\left(\gamma^{-1}
      \frac{w}{z}\right)}
  \psi^{-}(w) \psi^{+}(z),\\
  \psi^{+}(z) x^{\pm}(w) = g\left(\gamma^{\mp \frac{1}{2}}
    \frac{w}{z}\right)^{\mp 1} x^{\pm}(w) \psi^{+}(z), \qquad
  \psi^{-}(z) x^{\pm}(w) = g\left(\gamma^{\mp \frac{1}{2}}
    \frac{w}{z}\right)^{\pm 1} x^{\pm}(w) \psi^{-}(z),\\
  [x^{+}(z), x^{-}(w)] = \frac{1}{G^{-}(1)} \left( \delta \left(
      \gamma^{-1} \frac{z}{w} \right) \psi^{+} \left(
      \gamma^{\frac{1}{2}} w \right) - \delta \left( \gamma
      \frac{z}{w} \right) \psi^{-} \left(
      \gamma^{-\frac{1}{2}} w \right) \right),\\
  G^{\mp}\left( \frac{z}{w} \right) x^{\pm}(z) x^{\pm}(w) =
  G^{\pm}\left( \frac{z}{w} \right) x^{\pm}(w) x^{\pm}(z),\\
  [x^{\pm}_n,[x^{\pm}_{n-1},x^{\pm}_{n+1}]]= 0 \label{eq:77}
\end{gather}
where $\delta(z) = \sum_{n \in \mathbb{Z}} z^n$, $G^{\pm}(z) =
\prod_{i=1}^{3} (1 - q_i^{\pm 1}z)$ and $g(z) = \frac{G^{+}(z)}{G^{-}(z)}$.
\end{dfn}
The algebra $\mathcal{A}$ is manifestly symmetric in the parameters
$q_i$. It also has a large $\widetilde{SL(2,\mathbb{Z})}$ automorphism
group, where tilde means the universal cover. $\mathcal{A}$ is doubly
graded with two gradings $d$ and $d_{\perp}$ acting as
\begin{align}
  \label{eq:71}
  [d, x^{\pm}_n] &= \pm x^{\pm}_n,\\
  [d, \psi^{\pm}_n] &= 0,\\
  [d_{\perp}, x^{\pm}_n] &= n x^{\pm}_n,\\
  [d_{\perp}, \psi^{\pm}_n] &= n \psi^{\pm}_n.\label{eq:72}
\end{align}

\subsection{Coproduct(s)}
\label{sec:coproducts}
Algebra $A$ has an infinite number of non-isomorphic coproducts. We
will use the following one:
\begin{align}
  \label{eq:141}
    \Delta(x^{+}(z)) &= x^{+}(z) \otimes 1 + \psi^{-}\left(
    \gamma_{(1)}^{\frac{1}{2}}z \right) \otimes x^{+} \left(
    \gamma_{(1)} z \right),\\
  \Delta(x^{-}(z)) &= x^{-}\left(\gamma_{(2)} z\right) \otimes
  \psi^{+}\left( \gamma_{(2)}^{\frac{1}{2}}z \right) + 1 \otimes
  x^{-}(z),\\
  \Delta(\psi^{\pm}(z)) &= \psi^{\pm}\left(
    \gamma_{(2)}^{\pm \frac{1}{2}}z \right) \otimes \psi^{\pm}\left(
    \gamma_{(1)}^{\mp \frac{1}{2}}z \right).
\end{align}

\subsection{Universal $R$-matrix}
\label{sec:universal-r-matrix-1}
\begin{dfn}
  Universal $R$-matrix is an element $\mathcal{R} \in \mathcal{A}
  \otimes \mathcal{A}$ such that
  \begin{equation}
    \label{eq:142}
    \mathcal{R} \Delta(g) = \Delta^{\mathrm{op}}(g) \mathcal{R}
  \end{equation}
  for any element $g \in \mathcal{A}$, where $\Delta^{\mathrm{op}}$
  denotes the opposite coproduct.
\end{dfn}
The universal $R$-matrix is given by~\cite{FJMM-R}
\begin{equation}
  \label{eq:4}
  \check{\mathcal{R}} = P\mathcal{R} 
\end{equation}
where $P$ is the permutation operator exchanging the factors in the
tensor product and
\begin{equation}
  \label{eq:7}
  \mathcal{R} = e^{-c \otimes d - d \otimes c - c_{\perp} \otimes
    d_{\perp} - d_{\perp} \otimes c_{\perp}} \exp \left[ -\sum_{n \geq 1}
    \frac{n}{\kappa_n} \gamma^{-\frac{n}{2}} H^{+}_n \otimes
      \gamma^{\frac{n}{2}} H^{-}_{-n}\right] \left( 1 + \kappa_1 \sum_{n\in \mathbb{Z}}
    x^{+}_n \otimes x^{-}_{-n} + \ldots \right)
\end{equation}
where $\kappa_n = \prod_{i=1}^3(1-q_i^n)$, $c = \ln
\psi_0^{+}/\psi_0^{-}$, $c_{\perp} = \ln \gamma$ and the omitted terms
contain higher powers of the currents $x^{+}_n$ and $x^{-}_{-n}$.
\subsection{Horizontal Fock representation}
\label{sec:horiz-fock-repr}
\begin{dfn}
  Let $a^{(q_i, q_j)}_n$ with $i\neq j \in \{1,2,3\}$, $n \in
  \mathbb{Z}\backslash \{0\}$ be the Heisenberg generators satisfying
  the commutation relations
\begin{equation}
  \label{eq:44}
  [a^{(q_i, q_j)}_n, a^{(q_i, q_j)}_m] = n \frac{1 - q_i^{|n|}}{1 - q_j^{-|n|}} \delta_{n+m,0}.
\end{equation}
Let $\mathcal{F}_{q_i,q_j}^{(1,0)}(u)$ be the Fock space generated by
the action of all $a_{-n}^{(q_i, q_j)}$ with $n>0$ from the vacuum
vector $|\varnothing,u\rangle$.

The action of $\mathcal{A}$ on $\mathcal{F}_{q_i,q_j}^{(1,0)}(u)$ is
given by
\begin{align}
  \label{eq:145}
  \rho_{\mathcal{F}_{q_i,q_j}^{(1,0)}(u)}(x^{+}(z)) &= \frac{u}{(1-q_i^{-1})
  (1-q_j^{-1})} \exp \left[ \sum_{n \geq 1} \frac{z^n}{n} (1-q_j^n)
    a_{-n}^{(q_i, q_j)} \right] \exp \left[ - \sum_{n
      \geq 1} \frac{z^{-n}}{n} (1-q_j^{-n}) a_n^{(q_i, q_j)} \right],\\
  \rho_{\mathcal{F}_{q_i,q_j}^{(1,0)}(u)}(x^{-}(z)) &= \frac{u^{-1}}{(1-q_i)
  (1-q_j)} \exp \left[ - \sum_{n \geq 1} \frac{z^n}{n}
    (1-q_j^n) \left( q_i q_j \right)^{-\frac{n}{2}} a_{-n}^{(q_i,
      q_j)} \right]\times \notag\\
  &\phantom{=} \times
  \exp \left[ - \sum_{n \geq 1} \frac{z^{-n}}{n} (1-q_j^{-n}) \left(
      q_i q_j
    \right)^{-\frac{n}{2}} a_n^{(q_i, q_j)} \right],\notag\\
  \rho_{\mathcal{F}_{q_i,q_j}^{(1,0)}(u)}(\psi^{+}(z)) &= \exp \left[
    - \sum_{n\geq 1} \frac{z^{-n}}{n} (1-q_j^{-n}) \left( 1 - \left(
        q_i q_j \right)^{-n} \right) \left( q_i q_j
    \right)^{\frac{n}{4}} a_n^{(q_i, q_j)}
  \right],\\
  \rho_{\mathcal{F}_{q_i,q_j}^{(1,0)}(u)}(\psi^{-}(z)) &= \exp \left[
    \sum_{n\geq 1} \frac{z^n}{n} (1-q_j^n) \left( 1 - \left(
        q_i q_j \right)^{-n} \right) \left( q_i q_j
    \right)^{\frac{n}{4}} a_{-n}^{(q_i, q_j)}
  \right],\\
  \rho_{\mathcal{F}_{q_i,q_j}^{(1,0)}(u)}(\gamma) &= (q_i
  q_j)^{-\frac{1}{2}}.
\end{align}
\end{dfn}

It will be useful for us to add two generators $P$ and $Q$ with $P$
playing the role of $a_0^{(q_i, q_j)}$ and $Q$ shifting the spectral
parameter $u$:
\begin{gather}
  \label{eq:144}
P| \varnothing,u\rangle = \ln u | \varnothing,u\rangle,  \\
  [P,a^{(q_i, q_j)}_n] = 0, \qquad   [Q,a^{(q_i, q_j)}_n] = 0, \qquad [P,Q]=1
\end{gather}
for $n \neq 0$. The grading operators act on the Fock representations
as follows:
\begin{equation}
  \label{eq:149}
  [d_{\perp}, a^{q_i, q_j}_n] = n a^{q_i, q_j}_n, \qquad d = Q.
\end{equation}

\subsection{Vector representations}
\label{sec:vect-repr}
One needs to be a bit careful in defining the vector representation
since there are slightly different objects called by the same
name.

\begin{dfn}
  Let $\mathcal{V}_{q_1}^{*}$ be the representation of $\mathcal{A}$
  on the space $\mathbb{C}((x))$ of functions of a single variable $x
  \in \mathbb{C}^*$ in which the generating currents of
  the algebra act as follows:
  \begin{align}
    \rho_{\mathcal{V}_{q_1}^{*}} (x^{+}(z)) f(x) &=
    -\frac{1}{1-q_1^{-1}} \delta \left( \frac{x}{q_1 z} \right)
    q_1^{-x \partial_x}f(x)
    , \notag\\
    \rho_{\mathcal{V}_{q_1}^{*}}(x^{-}(z))f(x) &= - \frac{1}{1-q_1}
    \delta \left( \frac{x}{z} \right)
    q_1^{x \partial_x}f(x), \notag\\
    \rho_{\mathcal{V}_{q_1}^{*}}(\psi^{+}(z))f(x) &= \frac{\left( 1 -
        q_3 \frac{x}{z}\right) \left( 1 - q_2
        \frac{x}{z}\right)}{\left( 1 - \frac{x}{z} \right) \left( 1 -
        \frac{1}{q_1} \frac{x}{z} \right)}f(x)
    ,\label{eq:6}\\
    \rho_{\mathcal{V}_{q_1}^{*}}(\psi^{-}(z))f(x) &= \frac{\left( 1 -
        \frac{1}{q_3} \frac{z}{x}\right) \left( 1 - \frac{1}{q_2}
        \frac{z}{x}\right)}{\left( 1 - \frac{z}{x} \right) \left( 1 -
        q_1 \frac{z}{x} \right)} f(x). \notag
  \end{align}
  Representations $\mathcal{V}_{q_2}^{*}$ and $\mathcal{V}_{q_2}^{*}$
  are obtained by the corresponding permutation of the parameters
  $q_i$.
\end{dfn}
\begin{rem}
  We put a star on $\mathcal{V}_{q_i}^{*}$ to conform with the
  notation in the literature, where by the vector representation one
  usually means (a subrepresentation of) $\mathcal{V}_{q_i}$.
\end{rem}
Consider a basis $\{ f_w(x) | w \in \mathbb{C}*\}$ in
$\mathcal{V}_{q_i}^{*}$ spanned by the delta functions $f_w(x) =
\delta \left( \frac{x}{w} \right) = \sum_{n \in \mathbb{Z}} \left(
  \frac{x}{w} \right)^n$. One can view the basis functions $f_w(x)$ as
matrix elements $\langle x | w\rangle$ on which the algebra
$\mathcal{A}$ acts by $q_i$-difference operators in $x$. The
formulas~\eqref{eq:6} are then valid with ``bra states'' $\langle x|$
in place of $f(x)$.

For completeness we also write out the dual representation of
$\mathcal{V}_{q_1}^{*}$.
\begin{cor}
  $\mathcal{V}_{q_1}$ is the representation of $\mathcal{A}$ on the
  space $\{|w\rangle\, |\, w \in \mathbb{C}^* \}$ in
  which the generators act as follows\footnote{The
    formulas~\eqref{eq:33} are obtained from Eq.~(73)
    of~\cite{Zenkevich:2018fzl} by inverting the signs of the currents
    $x^{\pm}_{\mathrm{here}}(z) =
    -x^{\pm}_{\cite{Zenkevich:2018fzl}}(z) $.}:
\begin{align}
  \rho_{\mathcal{V}_{q_1}} (x^{+}(z)) |w \rangle &=
  -\frac{1}{1-q_1^{-1}} \delta \left( \frac{w}{z} \right) | q_1 w
  \rangle = | w \rangle \left( -\frac{1}{1-q_1^{-1}} \delta \left(
      \frac{w}{q_1 z} \right) \overleftarrow{q_1^{w \partial_w}}
  \right)
  , \notag\\
  \rho_{\mathcal{V}_{q_1}}(x^{-}(z))|w \rangle &= - \frac{1}{1-q_1}
  \delta \left( \frac{w}{q_1 z} \right) \left| \frac{w}{q_1}
  \right\rangle= | w \rangle \left( -\frac{1}{1-q_1} \delta \left(
      \frac{w}{z}
    \right) \overleftarrow{q_1^{-w \partial_w}} \right), \notag\\
  \rho_{\mathcal{V}_{q_1}}(\psi^{+}(z))|w \rangle &= \frac{\left( 1 -
      q_3 \frac{w}{z}\right) \left( 1 - q_2 \frac{w}{z}\right)}{\left(
      1 - \frac{w}{z} \right) \left( 1 - \frac{1}{q_1} \frac{w}{z}
    \right)}|w \rangle
  ,\label{eq:33}\\
  \rho_{\mathcal{V}_{q_1}}(\psi^{-}(z))|w \rangle &= \frac{\left( 1 -
      \frac{1}{q_3} \frac{z}{w}\right) \left( 1 - \frac{1}{q_2}
      \frac{z}{w}\right)}{\left( 1 - \frac{z}{w} \right) \left( 1 -
      q_1 \frac{z}{w} \right)} |w \rangle, \notag\\
  \rho_{\mathcal{V}_{q_1}}(\gamma)& = 1. \notag
\end{align}
  The representations $\mathcal{V}_{q_2}$ and
  $\mathcal{V}_{q_2}$ are obtained by the corresponding
  permutation of the parameters $q_i$.
\end{cor}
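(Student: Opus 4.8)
The plan is to derive the contragredient representation $\mathcal{V}_{q_1}$ directly from the definition of $\mathcal{V}_{q_1}^{*}$ in~\eqref{eq:6}, using the delta pairing $\langle x | w\rangle = \delta(x/w)$ between the bras (on which $\mathcal{A}$ acts by the difference operators of $\mathcal{V}_{q_1}^{*}$) and the kets $|w\rangle$ spanning $\mathcal{V}_{q_1}$. The only analytic inputs are two elementary identities for the formal delta: the localization $h(x)\,\delta(x/w) = h(w)\,\delta(x/w)$ for rational $h$, and the shift-transpose rule $q_1^{x\partial_x}\delta(x/w) = q_1^{-w\partial_w}\delta(x/w)$, together with the symmetry $\delta(a) = \delta(1/a)$.

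First I would transpose the Cartan currents. Since $\rho_{\mathcal{V}_{q_1}^{*}}(\psi^{\pm}(z))$ acts by multiplication by a rational function of $x/z$, pairing against $\langle x|$ and localizing at $x = w$ immediately replaces $x$ by $w$, producing the diagonal eigenvalues on $|w\rangle$ quoted in~\eqref{eq:33}; here no sign flip is needed. Next I would transpose the raising and lowering currents: in $\mathcal{V}_{q_1}^{*}$ these have the form (delta in $x$)$\,\cdot\,q_1^{\mp x\partial_x}$, so the shift-transpose rule moves the $q_1$-shift onto the $w$-variable and localizing the surviving delta converts the shift of the \emph{argument} into the shift $|w\rangle\mapsto|q_1^{\pm1}w\rangle$ of the \emph{state}. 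The overall minus signs on $x^{\pm}$ recorded in the footnote enter precisely here: they implement the antipode, which is what turns the naive transpose (an anti-action, i.e.\ an action of $\mathcal{A}^{\mathrm{op}}$) into a genuine left action of $\mathcal{A}$.

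To confirm that the two expressions on each $x^{\pm}$ line of~\eqref{eq:33} agree, I would read the right-hand form as a difference-operator realization and let the backward shift $\overleftarrow{q_1^{\pm w\partial_w}}$ act on \emph{both} the ket and the delta prefactor: it sends $|w\rangle\mapsto|q_1^{\pm1}w\rangle$ and simultaneously turns $\delta(w/(q_1^{\pm1}z))$ into $\delta(w/z)$, reproducing the left-hand form. Finally, the defining relations~\eqref{eq:76}--\eqref{eq:77} need not be rechecked by hand: because the delta pairing is nondegenerate and $\mathcal{V}_{q_1}^{*}$ is already a representation, the contragredient action satisfies the same relations, and $\gamma$ acts by $1$ since the vector representation carries trivial central charge.

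The main obstacle is conceptual rather than computational: one must ensure the transpose lands on $\mathcal{A}$ itself and not on the opposite or co-opposite algebra. Concretely, the quadratic $G^{\pm}$-relations and the cubic Serre relation~\eqref{eq:77} are preserved only once the antipode --- visible here as the sign inversion $x^{\pm}_{\mathrm{here}} = -x^{\pm}$ --- is correctly incorporated, so tracking its interplay with the chosen coproduct and the ordering of the currents is the delicate point, whereas the delta-function manipulations themselves are entirely routine.
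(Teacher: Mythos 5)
Your delta-function mechanics are correct and coincide with the paper's (implicit) derivation: the paragraph preceding the corollary introduces the basis $f_w(x)=\delta\left(\frac{x}{w}\right)=\langle x|w\rangle$ precisely so that \eqref{eq:33} can be read off from \eqref{eq:6} via the localization rule $h(x)\,\delta\left(\frac{x}{w}\right)=h(w)\,\delta\left(\frac{x}{w}\right)$ and the shift rule $q_1^{\pm x\partial_x}\delta\left(\frac{x}{w}\right)=q_1^{\mp w\partial_w}\delta\left(\frac{x}{w}\right)$; your check that the two forms on each $x^{\pm}$ line agree (the backward shift acting on both the ket and the delta prefactor) is also right. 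The paper then outsources the fact that these formulas define a representation to Eq.~(73) of \cite{Zenkevich:2018fzl}, modulo the sign change quoted in the footnote.

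The gap is in your antipode argument, which you use to justify not rechecking the relations. First, the minus signs in \eqref{eq:33} are \emph{not} generated in the passage from bras to kets: they are already present in \eqref{eq:6}, and the delta-function manipulation transports the scalar prefactors unchanged. The inversion $x^{\pm}_{\mathrm{here}}=-x^{\pm}$ in the footnote is a uniform change of convention relative to \cite{Zenkevich:2018fzl}, not an antipode insertion. Second, $x^{\pm}\mapsto -x^{\pm}$ is an \emph{automorphism} of $\mathcal{A}$ (every relation among \eqref{eq:76}--\eqref{eq:77} is either independent of $x^{\pm}$, linear, quadratic, or cubic in them, so the signs factor out); an automorphism preserves the order of products and therefore cannot convert an action of $\mathcal{A}^{\mathrm{op}}$ into an action of $\mathcal{A}$. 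The genuine antipode associated with the coproduct \eqref{eq:141} dresses $x^{\pm}$ with inverse Cartan currents and would modify the rational prefactors, not merely the sign. So your final step --- ``the relations need not be rechecked because nondegeneracy of the pairing plus the antipode makes the contragredient a representation'' --- rests on a false mechanism, and this is exactly the point you yourself flag as delicate. In fact the order-reversal worry is resolved differently: the naive argument ``operators in $x$ and $w$ commute, hence the induced $w$-action reverses products'' fails because the identity $D_g\,\delta\left(\frac{x}{w}\right)=E_g\,\delta\left(\frac{x}{w}\right)$ (with $D_g$ the difference operator of \eqref{eq:6} and $E_g$ the ket action of \eqref{eq:33}) holds only on the unshifted support and not as an operator identity; a direct computation of, say, $\langle x|\psi^{+}(z)x^{+}(z')|w\rangle$ shows the kernel $\delta\left(\frac{x}{w}\right)$ intertwines the two actions in the \emph{same} multiplicative order. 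To close the proof you must either carry out that order-preserving intertwining check, or verify the relations on \eqref{eq:33} directly --- e.g.\ for $\psi^{+}(z)x^{+}(z')$ the localized ratio of $\psi^{+}$-eigenvalues at $w=q_1z'$ versus $w=z'$ collapses, using $q_1q_2q_3=1$, to $g\left(\frac{z'}{z}\right)^{-1}$ as required --- which is the routine content hidden in the citation.
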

The grading operators $p^d$ and $\mu^{d_{\perp}}$ act on the vector
representation $\mathcal{V}_{q_i}^{*}$ as follows
\begin{align}
  \label{eq:69}
  p^d f(x) &= x^{\frac{\ln p}{\ln q_i}} f(x),\\
  \mu^{d_{\perp}} f(x) &= \mu^{x \partial_x} f(x) = f(\mu x). \label{eq:70}
\end{align}

Since $\mathcal{A}$ acts on $\mathcal{V}_{q_i}^{*}$ by
$q_i$-difference operators one can consider a subrepresentation of
$\mathcal{V}_{q_i}^{*}$ spanned by the functions supported on $x =
q_i^n w$, $n \in \mathbb{Z}$.
\begin{dfn}
  Let $w \in \mathbb{C}^*$ be a spectral parameter. Let
  $\mathcal{V}_{q_1}^{*}(w) \subset \mathcal{V}_{q_1}^{*}$ be the
  subrepresentation obtained by restricting the action~\eqref{eq:6} to
  $f_{q_1^n w}(x) = \delta \left( \frac{x}{q_1^n w} \right)$, $n \in
  \mathbb{Z}$.

  Let $\mathcal{V}_{q_1}(w) \subset \mathcal{V}_{q_1}$ be the
  subrepresentation obtained by restricting the action~\eqref{eq:33}
  to the states $|q_i^nw\rangle$, $n \in \mathbb{Z}$.

  Similarly for $\mathcal{V}^{*}_{q_2}(w)$,  $\mathcal{V}^{*}_{q_3}(w)$,
  $\mathcal{V}_{q_2}(w)$,  $\mathcal{V}_{q_3}(w)$.
\end{dfn}
\begin{rem}
  Representations $\mathcal{V}_{q_i}(w)$ and $\mathcal{V}_{q_i}(q_iw)$
  are identical.
\end{rem}
\begin{rem}
  We have $\mu^{d_{\perp}}: \mathcal{V}_{q_i}(w) \to
  \mathcal{V}_{q_i}(\mu w)$, so for for generic $\mu$ the grading
  operator is not an operator in $\mathcal{V}_{q_i}(w)$. However, due
  to the previous remark, for $\mu = q_i^n$ with $n \in \mathbb{Z}$ it
  is.
\end{rem}

Informally, one should think of $\mathcal{V}_{q_i}$ as given by
some kind of ``direct integral'' of $\mathcal{V}_{q_i}(w)$ over
$\ln w \in T^2 = \mathbb{C}/(\mathbb{Z} + \ln q_i \mathbb{Z})$.

\end{document}